\documentclass[letterpaper,11pt]{article}
\usepackage{bm} 
\usepackage[table,xcdraw,svgnames, dvipsnames]{xcolor}
\usepackage{microtype}
\usepackage[utf8]{inputenc}
\usepackage{graphicx}
\usepackage{amsmath, amssymb, amsthm}
\usepackage{comment,hyperref}
\usepackage{thmtools}
\usepackage{tikz}
\usepackage{pgfplots}
\usepackage{varwidth}
\pgfplotsset{compat=1.10}
\usepgfplotslibrary{fillbetween}
\usetikzlibrary{backgrounds}
\usetikzlibrary{patterns}
\usetikzlibrary{positioning}
\usepackage{enumitem}
\usepackage[letterpaper,margin=1in]{geometry}
\usepackage{array}
\usepackage{multirow}
\usepackage{stmaryrd}
\usepackage[font=small]{caption}
\usepackage{makecell}
\usepackage{cite}
\usepackage{booktabs}
\usepackage[nameinlink]{cleveref}
\usepackage[skip=3pt,indent=20pt]{parskip}
\usepackage{mathtools}
\usepackage{natbib}
\usepackage{url}
\usepackage{nicefrac}
\usepackage[algo2e,linesnumbered,ruled,vlined]{algorithm2e}
\usepackage{bbm}
\usepackage{xcolor}

\newcommand{\PP}{\mathbb{P}}

\newcommand{\calA}{\mathcal{A}}

\newcommand{\calT}{\mathcal{T}}
\newcommand{\calE}{\mathcal{E}}

\newcommand{\calB}{\mathcal{B}}

\newcommand{\calD}{\mathcal{D}}

\newcommand{\RR}{\mathbb{R}}
\newcommand{\NN}{\mathbb{N}}
\newcommand{\ZZ}{\mathbb{Z}}

\newcommand{\EE}{\mathbb{E}}
\newcommand{\E}{\mathbb{E}}

\newcommand{\STp}{\normalfont{\text{ST}^{\star}}}

\def\N{\mathbb{N}}
\let\eps\varepsilon
\def\supp{\operatorname{supp}}
\def\set#1{\left\{ #1 \right\}}
\def\prn#1{\left( #1 \right)}
\def\brk#1{\left[ #1 \right]}
\def\midd{\:\middle|\:}
\def\f#1#2{\nicefrac{#1}{#2}}
\def\1{\mathbbm{1}}

\def\argmax{\operatornamewithlimits{arg\,max}}

\newenvironment{algorithm}[2]
{\begin{algorithm2e}[H]
\caption{#1\(\prn{ #2 }\)}
\SetAlgoLined
\SetKwIF{If}{ElseIf}{Else}{if}{then}{else if}{else}{end}
}
{\end{algorithm2e}}

\newtheorem{theorem}{Theorem}
\newtheorem{lemma}{Lemma}
\newtheorem{corollary}{Corollary}

\newtheorem{definition}{Definition}

\newtheorem{proposition}{Proposition}
\newtheorem{observation}{Observation}
\newtheorem{claim}{Claim}

\def\dif#1{\mathop{d #1}}

\def\bE{\ensuremath{\bm{\mathrm{E}}}}

\usepackage{multirow}
\usepackage[noend]{algpseudocode}
\usepackage{algorithm,algorithmicx}

\newlength{\algofontsize}
\hypersetup{
	colorlinks,
	linkcolor={red!50!black},
	citecolor={blue!50!black},
	urlcolor={blue!80!black}
}
\begin{document}
\algrenewcommand\algorithmicrequire{\textbf{Input:}}
\algrenewcommand\algorithmicensure{\textbf{Output:}}
	
\title{Threshold Dynamics and Correlated Prophet Inequalities}
	
\author{
Jos\'{e} Correa
\thanks{Department of Industrial Engineering, Universidad de Chile {\tt (correa@uchile.cl)}}
\and
Maximilian Fichtl
\thanks{TNG Technology Consulting {\tt (mfichtl@gmail.com)}}
\and
Reda Jlibene
\thanks{Moroccan Center for Game Theory, Universit\'{e} Mohammed VI Polytechnique {\tt (reda.jlibene@um6p.ma)}}
\and
Rida Laraki
\thanks{Moroccan Center for Game Theory, Universit\'{e} Mohammed VI Polytechnique {\tt (rida.laraki@um6p.ma)}}
\and
Vasilis Livanos
\thanks{Center for Mathematical Modeling \& Department of Computer Science, University of Southern California {\tt (vas.livanos@gmail.com)}}
\and
Kevin Schewior
\thanks{Department of Mathematics and Computer Science, University of Cologne {\tt (k.schewior@uni-koeln.de)}}
\and
Victor Verdugo
\thanks{Institute for Mathematical and Computational Engineering, and Department of Industrial and Systems Engineering, Pontificia Universidad Católica de Chile {\tt (victor.verdugo@uc.cl)}}\\[1ex]
}
\date{}

\maketitle
\thispagestyle{empty}
\begin{abstract}In recent years, prophet inequalities have become a central tool for analyzing the performance of online algorithms. However, most existing results assume that input random variables are independent, which significantly limits their applicability. Motivated by this gap, we study prophet inequalities under two elementary correlation models induced by a latent \emph{state of the world} variable~$Z$.
In the \emph{common-base model}, the algorithm observes the sequence $Z+X_1,\dots,Z+X_n$, a special case of linear correlations. We analyze single-threshold algorithms with the constraint that they always accept the final item, thereby guaranteeing a reward of at least~$Z$. When $Z$ is chosen adversarially, we characterize the optimal deterministic algorithm of this form, achieving a competitive ratio of~$0.381$. We then show that randomizing the threshold improves the guarantee to~$0.4$, and prove this is optimal via a balanced-prices lower bound. By a minimax argument, the same ratio is achievable when $Z$ is random.

We depart from standard techniques by establishing a stronger lower bound of~$0.41$ and an upper bound of~$0.475$, ruling out the possibility that this class of algorithms attains the $1/2$ ratio known for independent inputs. The core technical contribution is a new analytical framework that captures the reward dynamics of single-threshold algorithms. We introduce a differential equation characterizing the expected reward of a threshold in the worst-case instance, parameterized by the distribution of the maximum. This equation admits a closed-form solution and unifies all known single-threshold prophet inequalities, yielding instance-optimal guarantees and a simple threshold-optimality condition applicable to the common-base model.

Finally, we study the \emph{common-scale model}, where inputs take the form $Z\cdot X_1,\dots,Z\cdot X_n$. We show that even this minimal multiplicative correlation yields strong impossibility results: no algorithm can achieve a competitive ratio exceeding~$1/n$. This is proved via a new notion of multiplicative-invariant stopping times, which also rules out competitive guarantees for secretary-type objectives.
\end{abstract}
\newpage

\section{Introduction}\label{sec:introduction}

Prophet inequalities are one of the most important modern paradigms in online algorithms. In the classic model~\citep{survey18,Hill1982,samuel-84-comparison,Kertz1986}, there are $n$ options, modeled by non-negative random variables $Y_1,\dots,Y_n$, whose realizations are presented to the decision maker one after the other. At time step $i$, the decision maker can either choose to stop, obtaining reward $Y_i$,
or to continue to get a potentially higher reward in the future. The decision maker's goal is to maximize the expected value of their choice. Application areas of this problem include online resource allocation, pricing, matching markets, e-commerce, and hiring processes, which have pushed an extensive research agenda on combinatorial extensions for this problem; see, e.g.,  \cite{lucier2017economic,ChawlaEtAl10,Alaei11,kleinberg2012matroid,FeldmanSZ16,RubinsteinS17,FeldmanGL15,DuettingFKL17,DuttingKL20,CristiCorrea23,CorreaCDHOS23,ezra2020pricing}.

When the variables $Y_i$ are independent and their distributions are known, the seminal result of~\citet{krengel-77-semiamarts,krengel-78-semiamarts} shows that it is possible to achieve a competitive ratio of $1/2$, which is tight. 
The simplest way of achieving this guarantee is by setting a single fixed threshold~\citep{samuel-84-comparison, kleinberg2012matroid} or random threshold~\citep{single-sample-prophet-inequality} and accepting the first value above this bar. 
In this work, we cover all these policies through the lens of differential equations: by understanding the threshold dynamics, we can unify the analyses of all these algorithms and obtain significant new results. In particular, with this new viewpoint, we can address a significant shortcoming of the prophet-inequality literature in light of its practical applications: most works consider only independent random variables, even though most applications involve correlated random variables.
For example, the quality of applicants for a job can depend on whether it is graduation season, the click-through rate of advertisements for sports jerseys may depend on whether the Olympic games are taking place, and the demand for a product in a foreign country may depend on the currency exchange rate. 

To formally describe the setting, let $Y_1, \dots, Y_n$ denote $n$ non-negative, possibly correlated, random variables drawn from a known joint distribution.
We observe $Y_i$ at step $i$ and have to decide immediately and irrevocably whether to select it or not, aiming to maximize the selected value. 
Then, we compare it against a {\it prophet} who can always select the maximum realization. 
We say that an algorithm with stopping time $\tau$ is $\alpha$-competitive, with $\alpha \in [0,1]$, if $\E[Y_{\tau}] \geq \alpha \cdot \E[\max_{i \in [n]} Y_i]$. It is well-known that, for general joint distributions, no non-trivial competitive ratio can be achieved~\citep{hill1992survey}. This motivates the question of what assumptions on the type of correlation allow for deriving better bounds. 
This issue has been partially addressed in the literature; for example, pairwise independent random variables~\citep{caragiannis-pairwise}, linear correlations~\citep{immorlica-linear-correlations}, and Markov random fields~\citep{livanos2024}. 

\subsection{Our Results and Techniques}

The practical applications of prophet inequalities mentioned above have a common theme: the quality of an applicant, and the value of a jersey or product are composed of a common base part, usually representing the state of the environment or context, and an individual part that differs across the options. 
Thus, we consider a nonnegative latent random variable $Z$, representing the base part, and nonnegative random variables $X_1,\dots,X_n$, representing the individual parts. 
All of these variables are independent and have a known distribution with bounded support.\footnote{By standard truncation and normalization techniques, the bounded support assumption is without loss of generality for obtaining prophet inequalities up to vanishingly small factors.}
The variables $Y_i$ that the decision-maker observes then emerge from a simple, known function applied to both $Z$ and $X_i$. The decision-maker only observes $Y_i$; they neither observe $X_i$ nor $Z$ directly.
In their seminal work, \citet[Section 2.2]{milgrom1982theory} use this model to consider information structures in which signals share a common latent state of the world and refer to it as the \emph{common value model} or the \emph{mineral rights model}. This type of correlation also constitutes the canonical example of exchangeable random variables when the $X_i$'s are i.i.d., or conditionally i.i.d.\ given a latent variable~\citep{hewitt1955symmetric}; the i.i.d.\ model has also been studied from an economic point of view since the work of~\citet{wilson77}.
Similar models arise in Kalman filtering and other statistical models; see, e.g.,~\citet{maybeck1982stochastic,laird1982random,carroll1995measurement}.

\paragraph{\bf The common-base model using balanced-prices.}
We first consider the \emph{common-base model}, where $Y_i = Z + X_i$ and assume without loss of generality that all variables are supported in $[0, 1]$.\footnote{This can be done by an appropriate normalization.}
We remark that this setting is covered by the work on linear correlations by~\citet{immorlica-linear-correlations}, which we discuss in detail in~\Cref{sec:related-work}; their work implies a $1/8$-competitive algorithm for the common-base model. 
Note that, since $Z\ge 0$, if the decision maker knows the value of $Z$, they are faced with a standard prophet inequality, and the worst case arises when $Z=0$, where we achieve a factor $1/2$. Our aim is to improve and narrow this competitive ratio; a priori, it is unclear whether one can recover the competitive ratio of $1/2$ above.

When $Z$ is a random variable, a nontrivial tradeoff arises in the prophet inequality. The more values we observe, the better our knowledge of $Z$ and thus the easier it would be to achieve a good competitive ratio in the remainder of the instance. However, as the number of remaining random variables decreases, the expected future value of the instance does so as well. Note that while there is a line of literature considering the common-base model explicitly~\citep{bateni-common-base-value,chawla-base-value}, these papers are not concerned with this tradeoff. Interestingly, we show in this work that a slight variant of single-threshold algorithms can already achieve a good competitive ratio.
Indeed, for the classic independent model, one can generally not hope to improve the competitive ratio of an algorithm by accepting the last value $Y_n$ (in case it has accepted nothing so far) since that value may, in the worst case, be deterministically zero. 
This is different in the common-base model, as one can at least obtain $Z$ by accepting $Y_n$: we set a single threshold, accept the first $Y_i$ above it, or, if there is no such $Y_i$, accept $Y_n$. 
We refer to this class of algorithms as $\STp$.

To analyze this class of algorithms in \Cref{sec:base_value_model}, we apply the minimax theorem, which allows us to consider a randomized threshold and adversarial $Z$ instead. We give a simple analysis showing that randomizing between $\mathbb{E}[\max_{i\in [n]}Y_i]$ (with prob. $0.2$) and $\mathbb{E}[\max_{i\in [n]}Y_i]/2$ (with prob. $0.8$) yields a competitive ratio of $0.4$. This analysis uses the standard balanced-prices bound \citep{samuel-84-comparison,wittmann-threshold,kleinberg2012matroid}, and we also show this factor is best possible for this bound. As a side note, we show that when the threshold is deterministic -- but $Z$ still adversarial -- the best-possible competitive ratio is precisely $(3-\sqrt{5})/2\approx 0.381$.

\paragraph{\bf The threshold dynamics ODE.}
To go beyond $0.4$, we develop a new ordinary differential equation approach to prophet inequalities in~\Cref{sec:ode-approach}. We first illustrate this new technique on the standard prophet inequality. Given some random variables $Y_1,\dots,Y_n$ whose maximum is distributed according to $F$, we obtain an instance with unchanged $F$ (i.e., the distribution of the prophet's value remains the same) but that is worst-possible for single-threshold algorithms (in fact, any online algorithm) as we describe in the following.
Starting from threshold $t=0$,  we define the new instance as we continuously increase $t$. 
To this end, we define precisely the right measure of infinitesimal Bernoulli random variables (i.e., Bernoulli random variables that are $t$ with infinitesimally small probability), so that we recover $F$ as the distribution of the maximum. 
Compared to the original instance, (i) information is only revealed more gradually to the online algorithm by presenting it in infinitesimal slices, and (ii) information is presented in the worst possible order by not giving the algorithm lower fallback options at the end. Therefore, the competitive ratio of threshold algorithms does not improve after this operation.

Interestingly, for the new instance, we determine the {\it threshold dynamics} which describe the behavior of any single-threshold algorithm as a function of the threshold $t$. This yields an ODE parameterized by $F$ that we can solve explicitly. As we demonstrate, this is a powerful tool: We obtain the first unified analysis of all $1/2$-competitive single-threshold algorithms from the literature: the median of $F$~\citep{samuel-84-comparison}, half the expected maximum~\citep{rinott1987comparisons,kleinberg2012matroid}, any value in between these two (folklore), and a random threshold drawn from $F$~\citep{single-sample-prophet-inequality}.
We then apply our new understanding of the threshold dynamics to the common-base model in \Cref{sec:ODEimproving}. Combined with the original approach that yields a competitive ratio of $0.4$, we show that an algorithm from $\STp$ achieves a competitive ratio of $0.41$. While the numerical improvement is small, recall that $0.4$ is the best possible competitive ratio achievable when using standard lower bounds based on balanced prices. Our new method, however, can overcome this barrier.
On the negative side, we establish that no algorithm in $\STp$ achieves a competitive ratio of $0.475$.
We believe that the question of whether there is a $1/2$-competitive algorithm of a different form for the common-base model is a very interesting one for future work.

\paragraph{\bf The common-scale model.} We also consider the case that $Y_i=Z\cdot X_i$, i.e., the correlations are of a very simple non-linear form. 
We refer to this model as the {\it common-scale model}.
When trying to achieve a constant competitive ratio, a natural approach is as follows. Assume the first, say $k$, observations $Y_1,\dots,Y_k$, do not carry a lot of value (indeed, if this were the case, one could blindly select the best of them). Then, one could try finding a good approximation $\tilde z$ to the realization of $Z$, and then running a standard, e.g., single-threshold algorithm on the remaining observations $Y_{k+1},\dots,Y_n$, pretending that the estimation $\tilde z$ is correct. 
Indeed, the set of distributions of $z\cdot X_1,\dots,z\cdot X_n$ for varying $z$ forms a parametric family, so we might hope to use standard statistical methods to estimate the value of $Z$. However, and quite surprisingly, we show in \Cref{sec:upper_bounds} that a competitive ratio better than the trivial $1/n$ is impossible to achieve in general.

We briefly describe the proof idea. We first show that for given random variables $X_1,\dots,X_n$, we can construct a random variable $Z$, such that we can replace any stopping time $\tau$ on $Y_1,\dots,Y_n$ with a stopping time $\tau^{\star}$ that only depends on the realizations of $Y_i/Y_1 = X_i/X_1$. In particular, $\tau^{\star}$ is independent of $Z$, and performs almost as well as $\tau$. Then, we can restate our problem: Given variables $X_1,\dots,X_n$, is there a stopping time $\tau^{\star}$, observing in step $i$ only the quotient $X_i/X_1$, such that $\E[X_{\tau^{\star}}]/\E[\max_{i\in [n]} X_i]$ is uniformly bounded from below by some constant $\alpha$?
We prove that the answer to this question is negative. 

To construct a counterexample $X_1,\dots,X_n$, we proceed in the following way. Each random variable $X_i$ attains the maximum value in its support, say $L^M$, which is equal for all $X_i$, with small but constant probability $\mu$. With probability $1-\delta$, $X_i$ is chosen uniformly from some set $J_i$ of strictly smaller numbers. Now, while $\mu$ is small, if $L$ and $M$ are chosen large enough, the only relevant contribution to $\E[\max_{i\in[n]} X_i]$ comes from instances where one of the variables attains this value $L^M$. Thus, a good stopping time $\tau^{\star}$ would have to be able to detect with high confidence when some variable $X_i=L^M$ is reached. The straightforward approach would be to stop at $i$ whenever the ratio $X_i/X_{i-1}$ becomes exceptionally large. We force this approach to fail by letting the sets $J_i$ be more and more concentrated close to $L^M$. In this way, the ratios $X_i/X_{i-1}$ are almost always large, 
and a jump to $X_i = L^M$ is hardly distinguishable from $X_i$ attaining a value in $J_i$.
We also prove that a related secretary-type problem, where the objective is to stop at the largest $Y_i$ with maximum probability, does not allow for a constant competitive ratio either.
Note that, by taking logarithms, the same negative result holds for the additive secretary setting $Y_i = Z + X_i$. This contrasts with the positive results for the common-base model and the prophet objective. 

\paragraph{\bf Additional results.}
In \Cref{sec:arbitrary_correlations}, instead of assuming that there is an explicit formula expressing the dependencies between variables, we assume that the dependency graph has a simple structure. 
We show that, for arbitrarily correlated instances, there exists an $O(1/\chi)$-competitive prophet inequality, where $\chi$ denotes the chromatic number of the dependency graph.
This implies, by Brooks' theorem \citep{brooks-theorem}, that if every random variable depends on at most $\Delta$ other variables, we can guarantee a competitive ratio of $O(1/\Delta)$, even if this dependence is arbitrary. To the best of our knowledge, this represents the first positive result on correlated prophet inequalities where arbitrary correlation is allowed for subsets of the random variables. One difficulty in proving these results is that the dependency graph captures only pairwise dependencies. While for the prophet objective we can rely on the $1/3$-competitive algorithm by~\citet{caragiannis-pairwise}, we provide a novel analogue result for the secretary setting using a very simple policy.

\subsection{Further Related Work}\label{sec:related-work}

When the random variables are negatively correlated, \citet{rinott1987comparisons} show that the expected reward is at least as high as if the random variables were independent, with the same marginal distributions. Regarding the common-scale model, we remark that the upper bound on the competitive ratio for the classic secretary problem shown by~\citet{ferguson-secretary} is given by a construction similar to ours in the sense that it is of the form $Y_i = ZX_i$, where $Z$ is drawn from a Pareto distribution, and the $X_i$'s are uniformly distributed.

\citet{immorlica-linear-correlations} studied the linear correlation model introduced by~\citet{bateni-common-base-value}: Given a vector $\mathbf X$ of random variables, the vector $\mathbf Y$ of observations is given by $\mathbf Y = A\mathbf X$, where $A$ is a known deterministic matrix. 
Immorlica et al. prove that a competitive ratio of order $O(1/\min\{s_{\text{row}},s_{\text{col}}\})$ can be achieved, where $s_{\text{row}}$ and $s_{\text{col}}$ denote the maximum number of non-zero entries in a column, respectively row, of $A$. They consider not only the single selection prophet setting but also the more general $k$-uniform setting. 
When the column sparsity is bounded, they prove the competitive ratio converges to $1$ as $k$ grows. 
In contrast, a bound only on the row sparsity does not lead to an asymptotic improvement of the competitive ratio.

\citet{caragiannis-pairwise} study posted-price mechanisms and prophet inequalities with pairwise-independent random variables. They prove a $\approx 0.414$ prophet inequality in this case, and they also show a posted-price mechanism that achieves at least a factor of $1/1.299 \approx 0.7698$ of the revenue in the independent case. 
\citet{dughmi2024multiitem} extend some of their results to multiple selection problems, but also derive negative results for general matroids. \citet{anupam-pi-ocrs} follow up on this work and show positive results for various classes of matroids.

In recent works about prophet inequalities with graphical correlations \citep{cai-oikonomou-mrf,livanos2024}, $O(1/\Delta_{\text{MRF}})$-competitive algorithms are derived, where $\Delta_{\text{MRF}}$ is the maximum weighted degree of the Markov random field (MRF). Their model and results might appear similar to ours, but MRFs do not express dependencies via explicit functions, and they are also structurally different from the dependency graph we consider in \Cref{sec:arbitrary_correlations}. 
We remark that our results are orthogonal to theirs, as one can easily construct instances where $\Delta_{\text{MRF}}$ is by an arbitrary factor larger than the standard definition of degree bound $\Delta$ on the independence graph.

\section{Threshold Dynamics}\label{sec:ode-approach}

In this section, we revisit the classic prophet inequality and offer a fresh perspective on it using ordinary differential equations. More precisely, for a fixed distribution of the maximum of random variables, $F$, we derive a differential equation describing the behavior of $V(t)$, the expected performance of a fixed-threshold algorithm with threshold $t$ in the worst possible instance with $F$ as the distribution of the maximum. 

This ODE gives an optimality condition for the best-possible threshold as a function of $F$ and offers a unifying approach to the problem. We recover as simple corollaries the following: (i) The median and half of the expected maximum give a 1/2 prophet inequality~\citep{samuel-84-comparison,rinott1987comparisons}, (ii) by a simple quasi-concavity argument we recover the folklore result that any threshold in between the median and half the expected maximum also gives a 1/2 prophet inequality, (iii) taking a random threshold with distribution $F$ also gives a 1/2 prophet inequality~\citep{single-sample-prophet-inequality}, and (iv) the stronger bound of \citet{hill83} and \citet{samuel-cahn-91-neg-depend} for random variables of bounded support.

Let $F_1, \dots, F_n$ denote the distributions of $X_1, \dots, X_n$, and let $F(t) = \prod_{i = 1}^n F_i(t)$ and $f(t)$ denote the cumulative distribution function and probability density function of $\max_{i\in [n]}X_i$, respectively. 
We assume that $F_1, \dots, F_n$ are continuously differentiable and strictly increasing distributions with support in $[0,1]$, and also that $F(0) = 0$ and $F(1) = 1$.\footnote{By performing standard reductions, these assumptions are w.l.o.g. up to vanishingly small error. These transformations include performing a tiny continuous perturbation of $X_1, \dots, X_n$, as well as truncating and normalizing.}
Now, given $F$, we construct the worst possible instance whose distribution of the maximum is also $F$. To this end, we split the support $[0,1]$ into $m$ smaller sub-intervals for $m > n$ (later, we will take $m \to \infty$). We create a new instance with $m$ Bernoulli random variables $\tilde{X}_{1}, \dots, \tilde{X}_{m}$. The distributions of our new instance are given by $\tilde{X}_i =
{i}/{m}$ w.p. $1-F((i-1)/m)/{F(i/m)}$, and zero otherwise.
Notice that the distribution of $\tilde{X}_i$ depends on the size of the instance $m$; we omit this dependency for notational simplicity.
Next, we observe that for $m$ large enough, the distribution of $\max_{i\in [m]} \tilde{X}_i$ is arbitrarily close to $F$.

\begin{claim}\label{obs:reduction-same-max}
$\max_{i\in [m]} \tilde{X}_i$ converges in distribution to $\max_{i\in [n]} X_i$ as $m\to\infty$.
\end{claim}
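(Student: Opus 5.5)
Since the $\tilde{X}_i$ are independent, the plan is to compute the CDF of $\tilde{M}_m := \max_{i\in[m]}\tilde{X}_i$ exactly and see that it telescopes. For $x\in[0,1)$, set $k=\lfloor mx\rfloor$. Then $\tilde{M}_m\le x$ holds if and only if every $\tilde{X}_i$ with $i/m>x$, that is every $i\ge k+1$, takes the value zero. Each $\tilde{X}_i$ is zero with probability $F((i-1)/m)/F(i/m)$, so
\[
\PP\brk{\tilde{M}_m\le x}=\prod_{i=k+1}^{m}\frac{F((i-1)/m)}{F(i/m)}=\frac{F(k/m)}{F(1)}=F\prn{\lfloor mx\rfloor/m}.
\]
For $x\ge 1$ the probability is $1=F(x)$, and for $x<0$ it is $0$.

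Two small points need care. For $i=1$ the success probability is $1-F(0)/F(1/m)=1$, so $\tilde{X}_1=1/m$ almost surely. Hence $\tilde{M}_m\ge 1/m>0$, which is consistent with the formula: for $x<1/m$ we have $k=0$ and $F(0)=0$. Also, the denominators $F(i/m)$ are positive for $i\ge 1$, because each $F_j$ is strictly increasing with $F(0)=0$, so the distribution of each $\tilde{X}_i$ is well defined.

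It then remains to show $F(\lfloor mx\rfloor/m)\to F(x)$ for every $x$. This follows from $\lfloor mx\rfloor/m\to x$ from below together with the continuity of $F$, which holds because $F$ is a product of the continuously differentiable $F_j$. Since $F$ is continuous, every point is a continuity point, so this is exactly convergence in distribution to $\max_{i\in[n]}X_i$. In fact the convergence is uniform in $x$, since $F$ is uniformly continuous on $[0,1]$ and $|\lfloor mx\rfloor/m-x|\le 1/m$.

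I do not expect a real obstacle. The only steps needing attention are getting the telescoping indices right (including the boundary case $i=1$) and citing continuity of $F$ for the limit.
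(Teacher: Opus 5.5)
Your proof is correct and follows the paper's argument. Both compute the CDF of the discretized maximum through the telescoping product $\prod_{i=k+1}^{m}F((i-1)/m)/F(i/m)=F(k/m)$ and then pass to the limit. Your use of $k=\lfloor mx\rfloor$ for non-grid points, together with continuity of $F$, spells out the step the paper summarizes as ``taking $m\to\infty$''.
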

\begin{proof}
For any $i \in \{0, 1, \dots, m\}$, we have
\begin{align*}
\Pr\brk{\max_{i\in [m]} \tilde{X}_i \leq \frac{i}{m}} &= \prod_{j = i+1}^m {\prn{1- \frac{F(j/m)-F((j-1)/m)}{F(j/m)}}}= \frac{F(i/m)}{F(1)} = \Pr\brk{\max_{i\in [n]} X_i \leq \frac{i}{m}}.
\end{align*}
The result follows for every value in $[0,1]$ by taking $m \to \infty$.
\end{proof}
We denote by $\tau(t)$ the first random time at which $X_1,\ldots,X_n$ take a value at least $t$, and $\tilde \tau(t)$ is the first random time at which $\tilde X_1,\ldots,\tilde X_m$ take a value at least $t$.
We show that for large enough $m$, the instance $\tilde{X}_{1}, \dots, \tilde{X}_{m}$ is worse for any single-threshold algorithm than the original instance. 
\begin{claim}\label{clm:reduction-is-worse} 
For every $t \in [0,1]$, we have $\E[X_{\tau(t)}]\ge \lim_{m\to \infty}\E[\tilde X_{\tilde \tau(t)}]$.
\end{claim}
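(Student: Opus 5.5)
Both sides have the form $t\cdot\Pr[\text{accept}] + \E[(\text{excess over }t)\cdot\1\{\text{accept}\}]$, so the plan is to compute each side in this form and compare the two pieces separately. Fix $t\in[0,1]$.

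\emph{Original instance.} For $X_1,\dots,X_n$ the threshold algorithm accepts exactly when $\max_i X_i\ge t$, and continuity of the $F_i$ turns this into $\max_i X_i>t$. Hence
\[
\E[X_{\tau(t)}] = t\,(1-F(t)) + \sum_{i=1}^n \Pr[X_j<t \ \forall j<i]\,\E[(X_i-t)^+],
\]
using independence of $X_i$ from the event that the earlier variables lie below $t$. Since $\Pr[X_j<t\ \forall j<i]\ge F(t)$, this gives the standard bound
\[
\E[X_{\tau(t)}]\ \ge\ t(1-F(t)) + F(t)\sum_i \E[(X_i-t)^+]\ \ge\ t(1-F(t)) + F(t)\,\E\Bigl[\bigl(\max_i X_i - t\bigr)^+\Bigr].
\]
The second inequality is the union bound $(\max_i X_i-t)^+\le\sum_i (X_i-t)^+$.

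\emph{Discretized instance.} Let $k=\lceil tm\rceil$. The algorithm accepts the first index $i\ge k$ with $\tilde X_i\neq 0$; the earlier indices are irrelevant because their nonzero values $i/m$ are below $t$. By the computation in \Cref{obs:reduction-same-max}, the probability that all of $\tilde X_k,\dots,\tilde X_{i-1}$ are zero is $F((i-1)/m)/F((k-1)/m)$, and $\Pr[\tilde X_i\neq0]=(F(i/m)-F((i-1)/m))/F(i/m)$. Therefore
\[
\E[\tilde X_{\tilde\tau(t)}] = \sum_{i=k}^m \frac{i}{m}\cdot\frac{F((i-1)/m)}{F(i/m)}\cdot\frac{F(i/m)-F((i-1)/m)}{F((k-1)/m)}.
\]
The ratio $F((i-1)/m)/F(i/m)$ is at most $1$, and $F((k-1)/m)\to F(t)$ from below. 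The sum is therefore at most $\frac{1}{F((k-1)/m)}$ times a Riemann–Stieltjes sum of $\int_t^1 x\,F(x)\,dF(x)$ in which $F$ is evaluated at the left endpoint. Since $F$ is $C^1$ and strictly increasing, these sums converge, and (assuming $F(t)>0$) we get
\[
\lim_{m\to\infty}\E[\tilde X_{\tilde\tau(t)}] = \frac{1}{F(t)}\int_t^1 x\,F(x)\,f(x)\,dx .
\]
The case $F(t)=0$, i.e.\ $t=0$, is handled by taking a limit; there both sides equal $\E[\max_i X_i]$.

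\emph{Comparison.} Write $x=t+(x-t)$ and use $\int_t^1 F f\,dx=(1-F(t)^2)/2$. The limit becomes
\[
\frac{t(1-F(t)^2)}{2F(t)} + \frac{1}{F(t)}\int_t^1 (x-t)F(x)f(x)\,dx .
\]
Two steps remain.

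\begin{itemize}
\item[(a)] $\int_t^1 (x-t)F f\,dx \le F(t)^2\int_t^1(x-t)f\,dx$ plus something absorbed by $t$-terms. This inequality fails pointwise, so instead I integrate by parts using $\int_t^1(x-t)\,d\bigl(F^2/2\bigr)$. This gives
\[
\frac{1}{F(t)}\int_t^1(x-t)Ff\,dx=\frac{1}{2F(t)}\int_t^1\bigl(1-F(x)^2\bigr)\,dx .
\]
The limit therefore equals
\[
\frac{t(1-F(t)^2)+\int_t^1\bigl(1-F(x)^2\bigr)\,dx}{2F(t)} .
\]
\item[(b)] In the same form, the lower bound for the original instance reads $t(1-F(t))+F(t)\int_t^1(1-F(x))\,dx$.
\end{itemize}

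The main obstacle is the final inequality between (a) and (b). This elementary comparison is where I expect the real work: one cannot simply match the two expressions term by term. So I would first try to sharpen the bound on the original side, keeping $\sum_i\Pr[X_j<t\ \forall j<i]\,\E[(X_i-t)^+]$ and using the product structure $F=\prod F_i$ to lower-bound it by exactly the expression in (a). That lower bound is the continuous-time analogue of splitting each $X_i$ into infinitesimal Bernoulli slices. It is a monotonicity statement: refining an instance into Bernoulli pieces ordered increasingly can only lower the threshold reward. I would prove it by induction on pieces, showing that replacing one $X_i$ by two independent pieces with the same maximum, with the lower-valued piece placed first, weakly decreases the expected reward.
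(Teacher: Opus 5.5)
There is a genuine gap, and it starts with an error in the discretized computation. The telescoping product is $\prod_{j=k}^{i-1}F((j-1)/m)/F(j/m)=F((k-1)/m)/F((i-1)/m)$, not its reciprocal; yours exceeds $1$, so it cannot be a probability. With the correct factor,
\[
\lim_{m\to\infty}\E[\tilde X_{\tilde\tau(t)}]=F(t)\int_t^1\frac{x f(x)}{F(x)^2}\,dx=t(1-F(t))+F(t)\int_t^1\frac{1-F(x)}{F(x)}\,dx ,
\]
which is exactly $V(t)$ from \Cref{thm:v-ode}. Even your own (inverted) sum would converge to $\frac{1}{F(t)}\int_t^1 x f(x)\,dx$, not to the integral you state. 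As a result, your target (a) is not just hard to reach; it is false. Take $n=1$, $X_1$ uniform and $t=1/2$: the original reward is $3/8$, while your (a) equals $7/12$. Moreover (a) blows up as $t\to 0$. No sharpening of the original side can establish it. Your $t=0$ remark is also wrong: there $\tau(0)=\tilde\tau(0)=1$, so the two sides are $\E[X_1]$ and $\lim_m 1/m=0$.

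Second, even with the correct limit, your route through the balanced-prices bound (b) cannot close. Since $(1-F)/F\ge 1-F$, we have (b) $\le V(t)$, typically strictly; this is the first inequality in the proof of \Cref{cor:sc-lower-bound}. So the chain ``original $\ge$ (b) $\ge$ limit'' fails at its second link. The missing idea is a pointwise comparison of tail probabilities, which is what the paper does. For $x>t$,
\[
\Pr[X_{\tau(t)}\ge x]=\sum_{i=1}^n(1-F_i(x))\prod_{j<i}F_j(t)\ \ge\ \frac{F(t)}{F(x)}\sum_{i=1}^n(1-F_i(x))\prod_{j<i}F_j(x)=\frac{F(t)}{F(x)}\bigl(1-F(x)\bigr).
\]
The inequality holds because $\prod_{j<i}F_j(t)=\frac{F(t)}{F(x)}\prod_{j<i}F_j(x)\prod_{j\ge i}\frac{F_j(x)}{F_j(t)}$ and $F_j(t)\le F_j(x)$. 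The right-hand side is precisely $\lim_m\Pr[\tilde X_{\tilde\tau(t)}\ge x]$. For $x\le t$ the tails agree in the limit, and integrating in $x$ gives the claim. In your notation: keep $\sum_i\prod_{j<i}F_j(t)\,\E[(X_i-t)^+]$ and write $\E[(X_i-t)^+]=\int_t^1(1-F_i(x))\,dx$. Then swap sum and integral and apply the inequality under the integral. No induction over Bernoulli pieces is needed.
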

\begin{proof}
Note that it suffices to show the claim for $t = i/m$ where $i\in\{0,\dots,m\}$, and consider a value $x = j/m \in [0,1]$ for some $j$.
We will prove that $\Pr[X_{\tau(t)} \geq x]\ge \Pr[\tilde X_{\tilde \tau(t)} \geq x]$ for a sufficiently large $m$. 
Then, by a standard limit density argument, the result holds for every $x\in [0,1]$ and the claim follows by simple integration.
First note that if $x\le t$ then $\Pr[\tilde X_{\tilde \tau(t)} \geq x] = \Pr[X_{\tau(t)} \geq x]$. For $x>t$, observe that for $\tilde X_{\tau(t)}$ to be at least $x$, we need that $\tilde X_i,\ldots, \tilde X_{j-1}$ (all values in $[t,x)$) are zero and at least one of the subsequent values is strictly positive. Thus,
\begin{align*}
\Pr[\tilde X_{\tilde \tau(t)} \geq x] &= \Pr\brk{\tilde{X}_{k} = 0\text{ for }k\in \{i,\ldots,j-1\}} \prn{1 - \Pr\brk{\tilde{X}_{k} = 0\text{ for }k\in \{j,\ldots,m\}}} \\
&= \prod_{k = i}^{j-1} \frac{F((k-1)/m)}{F(k/m)} \prn{1 - \prod_{k = j}^m \frac{F((k-1)/m)}{F(k/m)}} \\
&= \frac{F((i-1)/m)}{F((j-1)/m)} \prn{1 - \frac{F((j-1)/m)}{F(1)}}= \frac{F((i-1)/m)}{F((j-1)/m)} \prn{1 - F\prn{\frac{j-1}{m}}}.
\end{align*}
Then, we have
\[
\Pr\brk{\tilde X_{\tilde \tau(t)} \geq x}
\to \frac{F(t)}{F(x)} (1 - F(x))\text{ as $m\to \infty$}.
\]
Next, by conditioning on the value of the random time $\tau(t)$, we have
\begin{align*}
\Pr[X_{\tau(t)} \geq x] &= \sum_{i = 1}^n \Pr[X_i > x] \prod_{j = 1}^{i-1} \Pr\brk{X_j < t} \\
&= \frac{F(t)}{F(x)} \sum_{i = 1}^n \Pr[X_i > x] \frac{\prod_{k = 1}^n \Pr\brk{X_k < x}}{\prod_{j = i+1}^n \Pr\brk{X_j < t}} \\
&= \frac{F(t)}{F(x)} \sum_{i = 1}^n \frac{\prod_{k = i+1}^n \Pr\brk{X_k < x}}{\prod_{j = i+1}^n \Pr\brk{X_j < t}} \Pr[X_i > x] \prod_{k = 1}^{i-1} \Pr\brk{X_k < x} \\
&> \frac{F(t)}{F(x)} \sum_{i = 1}^n \Pr[X_i > x] \prod_{k = 1}^{i-1} \Pr\brk{X_k < x}\\ 
& = \frac{F(t)}{F(x)} \prn{1 - \prod_{i = 1}^n \Pr[X_i < x]} = \frac{F(t)}{F(x)} (1 - F(x)).
\end{align*}
Here, for the second equality we use that $F(x) = \prod_{i = 1}^n \Pr[X_i < x]$ for all $x \in [0,1]$; for the inequality, we use that $\Pr[X_k < x]>\Pr[X_k < t]$ for $x > t$; and for the following equality that, for any  $p_1, \dots, p_n \in [0,1]$ we have $\sum_{i = 1}^n p_i \prod_{j < i} (1-p_j) = 1 - \prod_{i = 1}^n (1 - p_i)$. 
Since the previous inequality is strict, we obtain that for large enough $m$ it holds $\Pr[X_{\tau(t)} \geq x] \geq \Pr[\tilde X_{\tau(t)} \geq x]$.
\end{proof}
With the previous two results, it is enough to study the performance of single-threshold algorithms on the sequence $\tilde X_1,\ldots,\tilde X_m$ as $m \to \infty$. 
In what follows we use $V(t) = \E[\tilde X_{\tilde \tau(t)}]$.
The following theorem characterizes the dynamics of $V(t)$.
\begin{theorem}\label{thm:v-ode}
The function $V$ is the unique solution of the differential equation
\begin{equation}\label{eq:differential-t}
V'(t) = \prn{V(t) - t} \frac{f(t)}{F(t)}, \qquad V(1)=0.
\end{equation}
Furthermore, its solution is the quasiconcave function
\begin{equation}
\label{eq:v-expression}
V(t) = t - F(t)\left(1 - \int_t^1 \frac{1}{F(u)} \dif u\right),
\end{equation}
which attains its maximum at the unique value $t^*$ such that
\begin{equation}\label{eq:optimal-t-integral}
\int_{t^*}^1 \frac{1}{F(u)} \dif u = 1.
\end{equation}
In particular, $V(t^*)=t^*$.
\end{theorem}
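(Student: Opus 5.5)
The plan is to compute $V(t)$ explicitly from the tail probabilities obtained in the proof of \Cref{clm:reduction-is-worse}, and then verify the ODE, the closed form, quasiconcavity and the location of the maximum directly.

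\textbf{Explicit form of $V$.} In the limit $m\to\infty$ that proof gives
\[
\Pr\brk{\tilde X_{\tilde \tau(t)} \geq x} = \frac{F(t)}{F(x)}\prn{1-F(x)} \quad\text{for } x>t,
\]
and, since the selected value is always at least $t$, $\Pr\brk{\tilde X_{\tilde\tau(t)}\ge x} = 1-F(t)$ for $x\le t$. Integrating the tail,
\[
V(t)=\int_0^1 \Pr\brk{\tilde X_{\tilde\tau(t)}\ge x}\dif x = t(1-F(t)) + F(t)\int_t^1 \frac{1-F(x)}{F(x)}\dif x .
\]
Because $\int_t^1 \frac{1-F}{F} = \int_t^1 \frac1F - (1-t)$, this equals $t - F(t)\prn{1-\int_t^1 \frac{1}{F(u)}\dif u}$, which is \eqref{eq:v-expression}. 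Passing the limit inside the integral is justified by bounded convergence, since the integrand lies in $[0,1]$. I also need to check that $t$ equal to a grid point versus a general $t$ does not matter in the limit; continuity of $F$ handles this.

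\textbf{The ODE and uniqueness.} Differentiating \eqref{eq:v-expression} gives
\[
V'(t)=1 - f(t)\prn{1-\int_t^1 \frac1F} - F(t)\cdot\frac{1}{F(t)} = -f(t)\prn{1-\int_t^1\frac1F}.
\]
From the closed form, $V(t)-t = -F(t)\prn{1-\int_t^1 \frac1F}$, so $V'(t)=(V(t)-t)f(t)/F(t)$. The boundary value is immediate: $V(1)=1-F(1)=0$. Uniqueness follows from standard linear ODE theory on $(0,1]$: $f/F$ is continuous there because $F>0$ on $(0,1]$. Alternatively, multiply by the integrating factor $1/F$, which gives $(V/F)'=-t f/F^2$, and integrate from $t$ to $1$; this also rederives the formula.

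\textbf{Quasiconcavity and the maximizer.} Set $g(t)=1-\int_t^1 \frac{1}{F(u)}\dif u$. Then $g'(t)=1/F(t)>0$, so $g$ is strictly increasing, with $g(1)=1>0$. Near $0$, $\int_t^1 1/F\ge 1-t$ and $F\le1$; to get $g<0$ somewhere I expect to use $F(u)<1$ on a set of positive measure, so that $\int_0^1 1/F>1$, or the integral may diverge. Hence $g$ has a unique root $t^*$, which is \eqref{eq:optimal-t-integral}. The sign of $V'=-fg$ is then positive for $t<t^*$ and negative for $t>t^*$, since $f>0$ by strict monotonicity of $F$. This shows $V$ increases then decreases, so $V$ is quasiconcave with maximum at $t^*$. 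Finally, $V(t^*)=t^*-F(t^*)g(t^*)=t^*$.

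\textbf{Main obstacle.} The delicate step is the limit interchange that produces the tail formula for all $x$ and $t$, together with verifying that $g$ actually becomes negative so that $t^*\in(0,1)$. For the latter I would argue that $F(u)<1$ for $u<1$, which follows from strict increase, so $\int_0^1 1/F > 1$.
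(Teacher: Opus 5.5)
Your proof is correct, but it runs in the opposite direction from the paper's. The paper first writes the one-step recursion $V(t)=p\,t+(1-p)V(t+dt)$ with $p=(F(t)-F(t-dt))/F(t)$, lets $dt\to 0$ to get the ODE, and invokes Picard--Lindel\"of on $[\varepsilon,1]$ for uniqueness. Only then does it solve for the closed form.

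You instead take the exact finite-$m$ product formula already computed in the proof of Claim~\ref{clm:reduction-is-worse}. You pass to the limit under the integral by bounded convergence, which gives \eqref{eq:v-expression} directly. You then verify \eqref{eq:differential-t} by differentiation, with uniqueness from linear ODE theory on $(0,1]$.

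Your route is more rigorous in identifying $V$. The paper's limit $dt\to0$ tacitly treats the limiting $V$ as differentiable and as satisfying the finite-$m$ recursion. Making that rigorous needs an Euler-scheme convergence argument, which the paper spells out only later, in Lemma~\ref{lem:UBstep2}. Your argument needs only pointwise limits of explicit probabilities. The paper's route buys the conceptual picture: it exhibits the ODE as the dynamics of infinitesimal slices, which is the point of the section.

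You also locate $t^*$ differently. The paper proves $\lim_{t\to0}V(t)=0$ by dominated convergence, applies Rolle's theorem to $V$, and uses monotonicity of $\int_t^1 1/F$ to get a single sign change of $V'$. You apply the intermediate value theorem to $g(t)=1-\int_t^1 \frac{1}{F(u)}\,du$, using three facts:
\begin{itemize}
\item $g(1)=1$;
\item $g'=1/F>0$;
\item $\lim_{t\to0}g(t)=1-\int_0^1 \frac{1}{F(u)}\,du<0$, because $F<1$ on $(0,1)$.
\end{itemize}
This is shorter, yields existence and uniqueness of the root at once, and never needs the behaviour of $V$ at $0$. Nonnegativity of $V$ is immediate from its definition as an expectation.

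One small imprecision, which the paper shares: strict monotonicity of a $C^1$ function $F$ gives only $f\ge 0$, with $f$ not vanishing on any interval, not $f>0$. That is still enough for your sign argument on $V'=-fg$, and for strict monotonicity of $V$ on each side of $t^*$.
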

\begin{proof}
We recall that in this instance, for each $i\in \{1,\ldots,m\}$, with $t = i/m$ and $\dif t = 1/m$, we have
$\tilde{X}_i = t$ with probability ${(F(t) - F(t - \dif t))}/F(t)$ and zero otherwise.
Observe that $\tilde X_{\tilde\tau(t)}$ gets value $t$  with probability $(F(t)-F(t-\dif t))/F(t)$, while  with probability $1-(F(t)-F(t-\dif t))/F(t)$, it gets the same as $\tilde X_{\tilde \tau(t+\dif t)}$. Thus,
\begin{align*}
V(t) &= \prn{\frac{F(t)-F(t - \dif t)}{F(t)}} \cdot t + \left(1- \frac{F(t)-F(t - \dif t)}{F(t)}\right) \cdot V(t + \dif t).
\end{align*}
Subtracting $V(t+\dif t)$ and dividing by $\dif t$ we obtain
\begin{align*}
- \frac{V(t + \dif t) + V(t)}{\dif t} &= \frac{F(t) - F(t - \dif t)}{\dif t} \frac{1}{F(t)} \cdot \prn{t - V(t + \dif t)},
\end{align*}
and taking the limit as $\dif t \to 0$ we arrive at the following differential equation.
\begin{equation}
\label{eq:differential-t-inside}
V'(t) = \prn{V(t) - t} \frac{f(t)}{F(t)}, \quad V(1)=0.
\end{equation}
Note that the right-hand side of the differential equation explodes at $t=0$; therefore, we cannot simply take this equation directly in $[0,1]$.
Instead, we analyze the dynamics starting at $\varepsilon>0$ and then define $V$ at zero by taking the limit of $V(\varepsilon)$ when $\varepsilon \to 0$ from the right. 
To this end, consider a fixed value $\varepsilon\in (0,1)$. 
The differential equation \eqref{eq:differential-t-inside} can be written as $V'(t)=g(t,V(t))$, where $g(t,y)=(y-t)f(t)/F(t)$; $g$ is continuous in $t$ and $\max_{t\in [\varepsilon,1]}|\partial_yg(t,y)|\le \max_{t\in [\varepsilon,1]}f(t)/F(t)=L({\varepsilon})<\infty$ since $f(t)/F(t)$ is continuous in the compact $[\varepsilon,1]$; in particular the function $g(t,\cdot)$ is $L({\varepsilon})$-Lipschitz for $t\in [\varepsilon,1]$. 
Then, the Picard-Lindelöf theorem (see, e.g., \cite{teschl2012ordinary}) implies that there is a unique solution to the differential equation \eqref{eq:differential-t-inside}, when $t\in [\varepsilon,1]$.
Moreover, we can get this unique solution to \eqref{eq:differential-t-inside} explicitly. 
Indeed, by simple algebraic manipulation, we get 
\begin{equation*}
V(t) = t - F(t) + \int_t^1 \frac{F(t)}{F(u)} \dif u \text{ for every }t\in (0,1].
\end{equation*}
In what follows, we show that $V$ is quasiconcave in $(0,1)$.
First, observe $V$ is non-negative in $(0,1)$.
Indeed, for every $t>0$, we get
\begin{align*}
V(t)&=t - F(t)\left(1 - \int_t^1 \frac{1}{F(u)} \dif u\right)\ge t - F(t)\left(1 - (1-t)\right)=t(1-F(t))\ge 0.
\end{align*}
Now, for every $t\in (0,1)$ consider the function $r_t(x)=\mathbf{1}_{x\ge t}F(t)/F(x)\le 1$.
For every $x\in (0,1)$, since $F(0)=0$ we have $\lim_{t\to 0}r_t(x)=F(0)/F(x)=0$.
Furthermore, $|r_t(x)|\le F(t)/F(x)\le 1$ for every $x\in (0,1)$.
Then, the dominated convergence theorem implies that 
$$\lim_{t\to 0}\int_{t}^1\frac{F(t)}{F(x)}\dif x=\lim_{t\to 0}\int_{0}^1r_t(x)\dif x\to 0.$$
We conclude that $\lim_{t\to 0}V(t)=-F(0)+0=0$.
Therefore, to show that $V$ is quasiconcave, it is sufficient to prove that $V'$ changes sign exactly once, from positive to negative, on $(0,1)$.
Observe that for every $t\in (0,1)$, we have 
$$V'(t)=1-f(t)+f(t)\int_{t}^1\frac{1}{F(u)}\dif u-F(t)\cdot \frac{1}{F(t)}=f(t)\left(\int_{t}^1\frac{1}{F(u)}\dif u-1\right).$$
As $f$ is strictly positive in $(0,1)$, $V'(t)$ changes sign at most once in the interval $(0,1)$; the latter comes from the fact that $\int_{t}^1(1/F(u))\dif u$ is decreasing as a function of $t$. 
On the other hand, since $\lim_{t\to 0}V(t)=V(1)=0$, Rolle's lemma implies the existence of a point $t^*\in (0,1)$ where $V'(t^*)=0$.
We conclude that $V'$ changes sign exactly once in $(0,1)$ and therefore $V$ is quasiconcave.

Finally,  since $f(t^*)>0$, the condition $V'(t^*)=0$ is equivalent to 
$\int_{t^*}^1 ({1}/{F(u)}) \dif u = 1,$ therefore
$$V(t^*)=t^* - F(t^*)\left(1 - \int_{t^*}^1 \frac{1}{F(u)} \dif u\right)=t^*.$$
In particular, $V$ is non-decreasing in $(0,t^*]$ and non-increasing in $(t^*,1]$, i.e., $V$ attains its maximum at $t^*$.
This completes the proof.
\end{proof}

Theorem \ref{thm:v-ode} reveals, for each fixed distribution $F$ and threshold $t$, the worst-case performance of the corresponding threshold algorithm on an instance having $F$ be the distribution of the maximum value. This is a novel result, and its power will become evident in the following corollaries, which show several well-known results in the area. 
First, from Theorem~\ref{thm:v-ode}, we can immediately obtain the balanced prices lower bound for the expected value of a single-threshold algorithm, first discovered by~\citet{samuel-84-comparison}. This leads to two easy proofs of the result due to \citet{samuel-84-comparison} and \citep{wittmann-threshold,kleinberg2012matroid}.
We call $\bE$ the expectation according to $F$.

\begin{corollary}[Balanced Prices Lower Bound]
\label{cor:sc-lower-bound}
For any $t \in [0,1]$ we have
\begin{equation}\label{eq:sc-lower-bound}
V(t) \geq (1-F(t))t + F(t) (\bE-t).
\end{equation}
Furthermore, for $t_1 = F^{-1}\prn{1/2}$ and $t_2 = \bE/2$, we have $V(t_1) \geq \bE/2$ and $V(t_2) \geq \bE/2$.
\end{corollary}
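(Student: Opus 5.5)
We start from the closed form \eqref{eq:v-expression} in Theorem~\ref{thm:v-ode},
\[
V(t) = t - F(t) + F(t)\int_t^1 \frac{1}{F(u)} \dif u ,
\]
and lower bound the integral term. Since $F(u)\le 1$ for all $u$, we could bound $1/F(u)\ge 1$, but that only yields $V(t)\ge t(1-F(t))$, which is too weak. Instead, we split off the mass above $t$: for $u\ge t$ we write $1/F(u) = 1 + (1-F(u))/F(u)$, and then use $F(u)\le 1$ only in the denominator of the second term:
\[
\int_t^1 \frac{1}{F(u)}\dif u \;\ge\; (1-t) + \int_t^1 (1-F(u))\dif u .
\]
Since the maximum is supported in $[0,1]$, the tail formula gives $\bE = \int_0^1 (1-F(u))\dif u$. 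Hence $\int_t^1 (1-F(u))\dif u = \bE - \int_0^t (1-F(u))\dif u \ge \bE - t$. Combining the two bounds gives $\int_t^1 (1/F(u))\dif u \ge 1 - 2t + \bE$.

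Substituting this into the closed form,
\[
V(t)\ \ge\ t - F(t) + F(t)\,(1-2t+\bE) \;=\; t(1-F(t)) + F(t)(\bE - t),
\]
which is exactly \eqref{eq:sc-lower-bound}. The only point needing care is the step $\int_0^t(1-F)\le t$. It is valid for all $t\in[0,1]$, and the endpoint $t=0$ follows by continuity, since $\lim_{t\to 0} V(t)=0$ was shown in the proof of Theorem~\ref{thm:v-ode}.

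For the two thresholds, we plug into \eqref{eq:sc-lower-bound}:
\begin{itemize}
\item For $t_1=F^{-1}(1/2)$ we have $F(t_1)=1/2$, so the bound reads $t_1/2 + (\bE-t_1)/2 = \bE/2$.
\item For $t_2=\bE/2$ the bound reads $(1-F(t_2))\bE/2 + F(t_2)\bE/2 = \bE/2$.
\end{itemize}
Both cases hold irrespective of the value of $F$ at the threshold, which is the balancing phenomenon. We do not expect a real obstacle here. The only substantive step is choosing the decomposition $1/F = 1 + (1-F)/F$ so that the expectation $\bE$ appears; the naive bound $1/F\ge 1$ loses exactly the $F(t)\bE$ term.
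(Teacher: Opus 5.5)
Your proposal is correct and follows essentially the paper's argument: the paper likewise rewrites $t-1+\int_t^1 \frac{1}{F(u)}\,du=\int_t^1\frac{1-F(u)}{F(u)}\,du$, bounds this below by $\int_t^1(1-F(u))\,du\ge \bE-t$ using $F\le 1$ and $\int_0^t(1-F(u))\,du\le t$, and then plugs in $t_1$ and $t_2$. (A minor aside: the naive bound $1/F\ge 1$ loses the term $F(t)(\bE-t)$ rather than $F(t)\bE$, but this does not affect your proof.)
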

\begin{proof}
We first rearrange the right-hand side of \eqref{eq:v-expression},
\begin{align*}
t - F(t)\left(1 - \int_t^1 \frac{1}{F(u)} \dif u\right)
&= t\prn{1 - F(t)} + F(t) \prn{t - 1 + \int_t^1 \frac{1}{F(u)} \dif u},
\end{align*}
and note that
\begin{align*}
t - 1 + \int_t^1 \frac{1}{F(u)} \dif u &= \int_t^1 \frac{1 - F(u)}{F(u)} \dif u\\ 
&\geq \int_t^1 \prn{1 - F(u)} \dif u \\
&= \int_0^1 \prn{1 - F(u)} \dif u - \int_0^t \prn{1 - F(u)} \dif u \ge \bE -\int_0^t \dif u = \bE- t,
\end{align*}
from which \eqref{eq:sc-lower-bound} follows. From here we can directly plug in $t_1 = F^{-1}\prn{1/2}$ and $t_2 = \bE/2$ to conclude the last part of the corollary.
\end{proof}

\begin{corollary}[Folklore]
For any $t\in (t_1,t_2)$ we have $V(t) \geq \bE/2$. 
\end{corollary}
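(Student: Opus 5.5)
Since Theorem~\ref{thm:v-ode} shows that $V$ is quasiconcave on $(0,1)$, the claim follows from the endpoint bounds. By Corollary~\ref{cor:sc-lower-bound} we already know $V(t_1)\ge \bE/2$ and $V(t_2)\ge \bE/2$. Quasiconcavity means that every superlevel set $\{t : V(t)\ge c\}$ is an interval. Taking $c=\bE/2$, this set contains both $t_1$ and $t_2$, so it contains every point between them. Hence $V(t)\ge \bE/2$ for all $t$ between $t_1$ and $t_2$, whichever of the two is smaller. (We write the interval as $(t_1,t_2)$, but the argument does not depend on the order.)

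The only step that needs care is checking that quasiconcavity in the sense proved in Theorem~\ref{thm:v-ode} gives this interval property. The theorem shows that $V'$ changes sign exactly once, from positive to negative, at $t^*$. Thus $V$ is non-decreasing on $(0,t^*]$ and non-increasing on $[t^*,1]$.

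We then argue by cases on where $t$ lies relative to $t^*$.
\begin{itemize}
\item If $t\le t^*$, let $s=\min\{t_1,t_2\}\le t$. If $s\le t^*$, monotonicity on $(0,t^*]$ gives $V(t)\ge V(s)\ge \bE/2$. If instead $s>t^*$, then $t<s$ is impossible, so this case does not occur.
\item If $t\ge t^*$, let $s'=\max\{t_1,t_2\}\ge t$. Monotonicity on $[t^*,1]$ gives $V(t)\ge V(s')\ge \bE/2$.
\end{itemize}

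Two degenerate endpoints need a short check. Since $\bE\le 1$, we have $t_2=\bE/2\in(0,1)$. Since $F$ is continuous and strictly increasing with $F(0)=0$ and $F(1)=1$, the median $t_1$ lies in $(0,1)$. So both endpoints are in the range where Theorem~\ref{thm:v-ode} applies. I expect no real obstacle beyond this bookkeeping.
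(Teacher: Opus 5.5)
Your proof is correct and matches the paper's. The paper likewise combines the quasiconcavity of $V$ from Theorem~\ref{thm:v-ode}, in the form $V(x)\ge\min\{V(a),V(b)\}$ for $a\le x\le b$, with the endpoint bounds $V(t_1),V(t_2)\ge \bE/2$ from Corollary~\ref{cor:sc-lower-bound}. Your case split around $t^*$ just unpacks that one-line property, and your remark that the order of $t_1$ and $t_2$ does not matter is a correct and worthwhile clarification.
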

\begin{proof}
The quasiconcavity of $V$ implies that for all $a,b$ such that $0\le a\le x\le b\le 1$, we have $V(x)\ge \min\{V(a),V(b)\}$. Then, the result follows from Corollary \ref{cor:sc-lower-bound}.
\end{proof}

Furthermore, our ODE allows us to obtain a third proof of the prophet inequality, which requires only a single sample from each distribution. 
This result is due to~\citet{single-sample-prophet-inequality}.

\begin{corollary}[Single Sample]
\label{cor:single-sample-proof}
Consider $t_3$ drawn according to $F$. 
Then $\E_{t_3 \sim F}[V(t_3)] \geq \bE/2$.
\end{corollary}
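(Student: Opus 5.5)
The plan is to compute $\E_{t_3\sim F}[V(t_3)]$ exactly from the closed form \eqref{eq:v-expression} in Theorem~\ref{thm:v-ode}. I expect it to equal $\bE/2$, so the inequality would hold with equality in the worst-case instance. Writing
\[
V(t) = t - F(t) + F(t)\int_t^1 \frac{1}{F(u)}\dif u ,
\]
I will integrate each of the three terms against $f(t)\dif t$ over $(0,1)$. Since $V$ is bounded on $(0,1]$ and $V(t)\to 0$ as $t\to 0$ (both shown in the proof of Theorem~\ref{thm:v-ode}), the integral $\int_0^1 V(t)f(t)\dif t$ is well defined. The value of $V$ at the single point $0$ is irrelevant because $F(0)=0$, so $t_3$ equals $0$ with probability zero.

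The first two terms are immediate. We have $\int_0^1 t f(t)\dif t = \bE$. Also $\int_0^1 F(t)f(t)\dif t = [F^2/2]_0^1 = 1/2$, using $F(0)=0$ and $F(1)=1$.

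The third term is the only nontrivial step. I will handle it with Fubini/Tonelli, which applies because the integrand is nonnegative:
\[
\int_0^1 F(t)f(t)\int_t^1 \frac{\dif u}{F(u)}\dif t = \int_0^1 \frac{1}{F(u)}\int_0^u F(t)f(t)\dif t\,\dif u = \int_0^1 \frac{F(u)^2}{2F(u)}\dif u = \frac12\int_0^1 F(u)\dif u .
\]
Swapping the order of integration removes the apparent singularity of $1/F(u)$ near $u=0$. This is the step to check carefully, and nonnegativity is what justifies it. Next, since the support lies in $[0,1]$, we have $\bE=\int_0^1(1-F(u))\dif u$, so the third term equals $(1-\bE)/2$.

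Adding the three contributions gives
\[
\E_{t_3\sim F}[V(t_3)] = \bE - \tfrac12 + \tfrac{1-\bE}{2} = \tfrac{\bE}{2}.
\]
This proves the corollary, with equality.
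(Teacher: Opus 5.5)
Your computation is correct and reaches the same exact identity $\E_{t_3\sim F}[V(t_3)]=\bE/2$, but it takes a different route from the paper.

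\textbf{The paper's route.} It never uses the closed form \eqref{eq:v-expression}. It integrates the product rule $(VF)'=V'F+Vf$ over $[0,1]$ and uses the boundary values $F(0)=0$ and $V(1)=0$ to get $\int_0^1 V'F\,\dif t=-\int_0^1 Vf\,\dif t$. It then substitutes the ODE \eqref{eq:differential-t} in the form $V'F=(V-t)f$. This gives $2\int_0^1 Vf\,\dif t=\int_0^1 tf\,\dif t=\bE$ in one line.

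\textbf{Your route.} You plug in the explicit solution and split it into three terms. You evaluate the only nontrivial term by Tonelli together with $\int_0^1 F=1-\bE$.

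\textbf{What each buys.} The paper's argument is shorter. Because it uses only the dynamics and their boundary conditions, it shows that the factor $1/2$ comes from the structure of the ODE itself. It also carries over verbatim to Corollary~\ref{cor:k-instances-proof} by differentiating $VF^k$ instead.

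Your argument is more elementary and makes the analytic justification explicit. The paper's integration by parts silently needs $V(\varepsilon)F(\varepsilon)\to 0$ as $\varepsilon\to 0$ and integrability of $V'F$ near the singularity of $f/F$ at $0$. Both hold, since $V$ is bounded and $V'F=(V-t)f$. Your nonnegativity and Tonelli step settles the interchange directly. Your computation also generalizes to thresholds drawn from $F^k$: the same Fubini step gives $\frac{k}{k+1}\int_0^1 F^k$ for the nontrivial term, hence $\bE_k/(k+1)$.
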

\begin{proof}
Note that $(V(t) F(t))' = V'(t) F(t) + V(t) f(t)$, so integrating in $[0,1]$ and using that $F(0) = 0$ and $V(1) = 0$ we obtain
$0 = \int_0^1 V'(t) F(t) \dif t + \int_0^1 V(t) f(t) \dif t.$
Next, we use \eqref{eq:differential-t} to get
\[
0 = \int_0^1 (V(t) f(t) - t f(t)) \dif t + \int_0^1 V(t) f(t) \dif t \iff
0 = 2 \int_0^1 V(t) f(t) \dif t - \int_0^1 t f(t) \dif t. 
\]
Therefore, 
$\int_0^1 V(t) f(t) \dif t = \bE/2$.
Since the left-hand side corresponds to the expectation of the algorithm that selects a threshold $t_3$ drawn at random from $F$, we conclude the result.
\end{proof}

In fact, we can show a stronger claim: suppose that we have $k$ identically distributed draws of the same instance of the prophet inequality problem, with $\bold{X}_j=(X_{1,j},\ldots,X_{n,j})$ the $j$-th draw. Then, a single threshold $t$ drawn according to $F^k$ is $1/(k+1)$-competitive with respect to the expected maximum of the $k$ draws. 
Since $t^*$ maximizes $V$, it achieves a $1/(k+1)$-competitive ratio simultaneously for all $k$.

\begin{corollary}
\label{cor:k-instances-proof}
Consider $t_4$ drawn according to $F^k$. Then, we have $\E_{t_4 \sim F^k}[V(t_4)] \geq \bE_k/(k+1)$, where $\bE_k = \E[\max_{j\in [k], i\in [n]}X_{i,j}]$.
\end{corollary}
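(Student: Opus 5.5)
The plan is to repeat the integration-by-parts argument from the proof of Corollary~\ref{cor:single-sample-proof}, with $F^k$ in place of $F$. The aim is to show the identity $\E_{t_4\sim F^k}[V(t_4)] = \bE_k/(k+1)$ exactly.

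First, identify both sides as integrals against $F$. The maximum over the $k$ independent draws $\bold X_1,\dots,\bold X_k$ is the maximum of $k$ i.i.d.\ copies of $\max_i X_i$, so its CDF is $F^k$. Hence
\[
\bE_k=\int_0^1\prn{1-F(u)^k}\dif u = 1-\int_0^1 F(u)^k\dif u,
\]
and the algorithm's value is $\E_{t_4\sim F^k}[V(t_4)]=\int_0^1 V(t)\,kF(t)^{k-1}f(t)\dif t$.

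Next, differentiate $V F^k$ and substitute the ODE~\eqref{eq:differential-t}:
\[
(VF^k)'=V'F^k+kVF^{k-1}f=(V-t)fF^{k-1}+kVF^{k-1}f=(k+1)VF^{k-1}f-tF^{k-1}f .
\]
Integrate this over $[0,1]$. The boundary term $V(1)F(1)^k$ vanishes because $V(1)=0$. The term at $0$ vanishes because $V$ is bounded on $(0,1]$ (we have $0\le V\le 1$, using the nonnegativity shown in Theorem~\ref{thm:v-ode}) and $F(0)=0$. This gives
\[
(k+1)\int_0^1 VF^{k-1}f\dif t=\int_0^1 tF^{k-1}f\dif t .
\]

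Finally, evaluate the right-hand side. Since $F^{k-1}f=\frac1k (F^k)'$, integration by parts gives
\[
\int_0^1 tF^{k-1}f\dif t=\frac1k\prn{1-\int_0^1F^k\dif u}=\frac{\bE_k}{k} .
\]
Multiplying the previous identity by $k/(k+1)$ then yields $\int_0^1 V\,kF^{k-1}f\dif t=\bE_k/(k+1)$, which gives the claim, in fact with equality.

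The only delicate step is the behaviour at $t=0$, where the ODE is singular. I would handle it as in the proof of Theorem~\ref{thm:v-ode}: integrate over $[\varepsilon,1]$ and let $\varepsilon\to0$. This is justified because $V(\varepsilon)F(\varepsilon)^k\to0$ and all the integrands are bounded.
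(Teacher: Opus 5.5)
Your proposal is correct and follows essentially the same route as the paper: differentiate $V F^k$, substitute the ODE \eqref{eq:differential-t}, integrate over $[0,1]$ using $F(0)=V(1)=0$, and recognize $kF^{k-1}f$ as the density of $F^k$. The differences are cosmetic: you evaluate $\int_0^1 tF^{k-1}f\dif t=\bE_k/k$ by integration by parts rather than reading $\int_0^1 t\,kF^{k-1}f\dif t$ directly as $\bE_k$, and you justify the vanishing boundary term at $t=0$ more explicitly than the paper does.
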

\begin{proof}
By differentiating the product $V(t) F^k(t)$ and using the differential equation \eqref{eq:differential-t}, we get
\begin{align*}
\frac{\dif{}}{\dif t}\prn{V(t) F^k(t)} &= V'(t) F^k(t) + k V(t) F^{k-1}(t) f(t) \\
&= ((k+1)V(t) - t) f(t) F^{k-1}(t).
\end{align*}
By integrating in $[0,1]$ and using that $F(0) = V(1) = 0$, we obtain
\[
(k+1) \int_0^1 V(t) f(t) F^{k-1}(t) \dif t = \int_0^1 t f(t) F^{k-1}(t) \dif t,
\]
and multiplying both sides by $k$ yields
\[
(k+1) \int_0^1 V(t) k f(t) F^{k-1}(t) \dif t = \int_0^1 t k f(t) F^{k-1}(t) \dif t.
\]
Since the left-hand side corresponds to the expectation of the algorithm that selects a threshold $t_4$ drawn at random from $F^k$ and the right-hand side corresponds to $\bE_k$, the result follows.
\end{proof}

In addition, we can recover a result by \citet{hill83}, originally proved for the optimal stopping rule and later for threshold policies by~\citet{samuel-cahn-91-neg-depend}. 
\begin{corollary}[Improved Inequality in the Bounded Case]
\label{cor:hill-stronger-bound}
For the $t^*$ that maximizes $V$, we have
$\bE \leq 2t^* - {t^*}^2.$
In particular, $V({t^*}) \geq 1 - \sqrt{1 - \bE}$.
\end{corollary}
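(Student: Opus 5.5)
The plan is to combine the optimality condition \eqref{eq:optimal-t-integral} with a convexity (Jensen / Cauchy--Schwarz) estimate.

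\textbf{Step 1: split the expectation at $t^*$.} Write
\[
\bE=\int_0^1 (1-F(u))\dif u = \int_0^{t^*}(1-F(u))\dif u + \int_{t^*}^1 (1-F(u))\dif u .
\]
Since $1-F\le 1$, the first integral is at most $t^*$. For the target inequality $\bE\le 2t^*-{t^*}^2$ it then suffices to show
\[
\int_{t^*}^1 (1-F(u))\dif u \le t^*(1-t^*).
\]

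\textbf{Step 2: bound the tail integral.} This is the main step. Put $\ell=1-t^*$, which is positive because $t^*\in(0,1)$ by Theorem~\ref{thm:v-ode}. Then
\[
\int_{t^*}^1 (1-F(u))\dif u=\ell-\int_{t^*}^1 F(u)\dif u ,
\]
so we need a lower bound on $\int_{t^*}^1 F$. We know from \eqref{eq:optimal-t-integral} that $\int_{t^*}^1 \frac{1}{F(u)}\dif u=1$. By Cauchy--Schwarz on $[t^*,1]$,
\[
\ell^2=\Bigl(\int_{t^*}^1 \sqrt{F}\cdot\tfrac{1}{\sqrt F}\,\dif u\Bigr)^2\le \int_{t^*}^1 F(u)\dif u\cdot\int_{t^*}^1\frac{1}{F(u)}\dif u=\int_{t^*}^1 F(u)\dif u .
\]
Equivalently, this is Jensen's inequality for the convex map $x\mapsto 1/x$. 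Hence
\[
\int_{t^*}^1(1-F(u))\dif u\le \ell-\ell^2=\ell(1-\ell)=(1-t^*)t^* .
\]
Adding Step 1 gives $\bE\le t^*+t^*(1-t^*)=2t^*-{t^*}^2$.

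\textbf{Step 3: the lower bound on $V(t^*)$.} Theorem~\ref{thm:v-ode} gives $V(t^*)=t^*$. The map $s\mapsto 2s-s^2=1-(1-s)^2$ is increasing on $[0,1]$. So $\bE\le 1-(1-t^*)^2$ rearranges to $(1-t^*)^2\le 1-\bE$, that is, $t^*\ge 1-\sqrt{1-\bE}$. Note that $\bE\le 1$ because the support lies in $[0,1]$. Therefore $V(t^*)\ge 1-\sqrt{1-\bE}$.

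The only nontrivial idea is Step 2: recognizing that the optimality condition $\int_{t^*}^1 1/F=1$ controls $\int_{t^*}^1 F$ from below via Cauchy--Schwarz. Everything else is bookkeeping.
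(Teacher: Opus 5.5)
Your proof is correct and follows essentially the same route as the paper: Cauchy--Schwarz on $[t^*,1]$ combined with the optimality condition $\int_{t^*}^1 \frac{1}{F(u)}\,du = 1$ gives $\int_{t^*}^1 F(u)\,du \ge (1-t^*)^2$; the mass $\int_0^{t^*} F(u)\,du$ is dropped, and then $V(t^*)=t^*$ finishes the argument. The only difference is bookkeeping. The paper writes $\bE = 1-\int_0^1 F(u)\,du \le 1-\int_{t^*}^1 F(u)\,du$ directly, whereas you split $\bE$ at $t^*$ and bound the two pieces separately, which amounts to the same inequality.
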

\begin{proof}
Recall that the optimal $t^*$ that maximizes $V$, by \eqref{eq:optimal-t-integral} satisfies $\int_{t^*}^1 ({1}/{F(u)}) \dif u = 1$. Thus,
\begin{align*}
\int_0^1 F(x) \dif x &\geq \int_{t^*}^1 F(x) \dif x\\
&= \prn{\int_{t^*}^1 \frac{1}{F(x)} \dif x} \prn{\int_{t^*}^1 F(x) \dif x} \geq \prn{\int_{t^*}^1 \frac{1}{\sqrt{F(x)}} \cdot \sqrt{F(x)} \dif x}^2= (1 - t^*)^2,
\end{align*}
where the second inequality follows by using the Cauchy–Schwarz inequality. Thus,
\begin{align*}
\bE = \int_0^1 \prn{1 - F(x)} \dif x = 1 - \int_0^1 F(x) \dif x \leq 1 - (1-t^*)^2 = 2t^* - {t^*}^2.
\end{align*}
Furthermore, recall that $t^* = V(t^*)$. Since $\bE,t^*\in [0,1]$ we conclude 
$V(t^*) = t^* \geq 1 - \sqrt{1 - \bE}$.
\end{proof}
The previous corollary provides a stronger prophet inequality for random variables supported in $[0,1]$. This can be easily extended to bounded-support random variables by normalization.
Finally, our result recovers the $(1/4)$-\emph{difference prophet inequality} by \citet{hill-kertz-difference-pi}.
\begin{corollary}[Difference Prophet Inequality]
\label{cor:hill-kertz-difference-pi}
For the $t^*$ that maximizes $V$ we have $\bE - V(t^*) \leq 1/4$.
\end{corollary}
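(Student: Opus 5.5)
This follows in one line from Corollary~\ref{cor:hill-stronger-bound}, which gives $\bE \le 2t^* - {t^*}^2$, together with the identity $V(t^*) = t^*$ from Theorem~\ref{thm:v-ode}. Combining them,
\begin{equation*}
\bE - V(t^*) \le 2t^* - {t^*}^2 - t^* = t^*(1-t^*).
\end{equation*}

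It remains to bound the right-hand side. The function $t \mapsto t(1-t)$ is maximized on $[0,1]$ at $t = 1/2$, where it equals $1/4$, and $t^* \in (0,1)$ by Theorem~\ref{thm:v-ode}. Hence $t^*(1-t^*) \le 1/4$, which is the claim.

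No step here is a real obstacle; the only inputs are the optimality condition \eqref{eq:optimal-t-integral}, already used via Cauchy--Schwarz in Corollary~\ref{cor:hill-stronger-bound}, and the fixed-point property $V(t^*) = t^*$.

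The statement concerns the worst-case instance $\tilde X$. To transfer it to the original instance, one would invoke Claim~\ref{clm:reduction-is-worse}, which shows that the original instance does at least as well for the same threshold. Claim~\ref{obs:reduction-same-max} shows that $\bE$ is unchanged in the limit. Together these give $\E[\max_i X_i] - \E[X_{\tau(t^*)}] \le 1/4$.
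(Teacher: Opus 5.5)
Your proof is correct and matches the paper's argument: substitute $V(t^*) = t^*$ into $\bE \le 2t^* - {t^*}^2$ from Corollary~\ref{cor:hill-stronger-bound}, then use $x - x^2 \le 1/4$. Two small remarks: since $x - x^2 \le 1/4$ holds for every real $x$, you do not need $t^* \in (0,1)$; and your final paragraph, which transfers the bound to the original instance via Claims~\ref{obs:reduction-same-max} and~\ref{clm:reduction-is-worse}, is valid but goes beyond what the statement requires.
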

\begin{proof}
Using the fact that $V(t^*) = t^*$ in Corollary~\ref{cor:hill-stronger-bound} we get that $\bE - V(t^*) \leq t^* - {t^*}^2$. Since $x - x^2 \leq 1/4$, the result follows.
\end{proof}

\paragraph{\bf Further consequences.} One crucial aspect of our threshold dynamics analysis is that the reward function $V$ in \eqref{eq:v-expression} is parameterized by $F$, and therefore, instance dependent. 
This allows us to obtain instance-optimal prophet inequalities for restricted families of distributions for the maximum. 
More specifically, suppose we want to find the worst-case approximation factor of single-threshold policies over a family of distributions $\mathcal{F}$, e.g., distributions with monotone hazard rate (MHR) or bounded second moment.
By Theorem \eqref{thm:v-ode}, the worst-case approximation ratio of a single-threshold policy for $\mathcal{F}$ can be obtained by the following scheme. 
Given $\alpha\ge 0$, let 
$$\phi(\alpha)=\inf\left\{t-\alpha \int_{0}^1(1-F(u))du:1=\int_t^1\frac{1}{F(u)}du\text{ and }F\in \mathcal{F}\right\}.$$
The worst-case approximation factor over $\mathcal{F}$ is given by the largest value $\alpha$ such that $\phi(\alpha)\ge 0$.
We remark that for each $\alpha$, the infinite-dimensional optimization problem that defines $\phi(\alpha)$ can be formulated under specific knowledge about $\mathcal{F}$.
For instance, when $\mathcal{F}$ is the MHR family, we can parameterize it in terms of the hazard rate function to get a variational problem.

\section{The Common-Base Model}\label{sec:base_value_model}

Recall that in the common-base model, there are $n+1$ non-negative and independent random variables $X_1, \dots, X_n$ and $Z$, and the decision-maker sequentially observes $Y_i = X_i + Z$ for $i = 1,\dots, n$. In this special case, the bound of \citet{immorlica-linear-correlations} evaluates to $1 / (4e^3)$, though it can be improved to $1/8$. The algorithm in \cite{immorlica-linear-correlations} randomizes between two strategies with equal probability: accepting $Z + X_1$ whenever it is at least $\E[Z] / 2$ (and not accepting anything otherwise), or using the threshold $\E[\max_i X_i]/2$. However, it remained unclear whether such performance could be matched—or even improved—by a deterministic, single-threshold rule, and whether the optimal $1/2$ competitive ratio known for the classical prophet inequality could still be achieved in the common-base model. This question essentially asks whether the guarantees that hold under independence extend to settings with even the weakest forms of correlation.

In fact, it is possible to improve upon the $1/8$ competitive ratio while still using a single-threshold algorithm. As we show in~\Cref{app:13single}, setting the standard threshold $T = \E[\max_{i \in [n]} X_i]/2$ yields a competitive ratio of $1/3$ in the common-base model with $Y_i = Z + X_i$ (\Cref{prop:one_third_sum}). However, we also resolve the preceding question in the negative: we show that no single-threshold algorithm can achieve a competitive ratio exceeding $0.4765$ in the common-base model, thus establishing a clear separation between the common-base model and the fully independent prophet inequality.
Indeed, to show this, let $Z$ be distributed according to
$\Pr[Z \leq x] = x^{0.35}$
for $x \in [0,1]$. For $i\in \{1,\dots,n-1\}$, set
$X_i = {i}/{(n-1)},$
with probability $1$, and let $X_n$ be $0.85/\eps$ w.p. $\eps$, and $0$ w.p. $1 - \eps$.
The prophet's value is given by
\[
\E[Z+\textstyle\max_{i\in [n]} X_i] = \int_0^1 \Big(1-x^{0.35}\Big) dx + (1-\eps)+ 0.85,
\]
which converges to $0.35/1.35 + 1.85 \approx 2.1092$ as $\eps \rightarrow 0$.

Now consider any threshold $T$. When $T \leq 1$, we either stop at $X_1+Z$, if $Z \geq T-1/(n-1)$, or at the first index $i$ such that $Z + i/(n-1) \geq T$. Since $Z \leq 1$ and $T \leq 1$, we receive no more than $1+1/(n-1) \rightarrow 1$ for $n\rightarrow \infty$. 
When $T\geq 2$, we always continue to the last variable $X_n$, and obtain $\eps\cdot(0.85/\eps + \E[Z]) \rightarrow 0.85$ as $\eps\rightarrow 0$. Now suppose that $1 \leq T \leq 2$. Then for $n\rightarrow \infty$ and $\eps \rightarrow 0$, the algorithm gets at most
\begin{equation*}
\Pr[Z + 1 \geq T]T + 0.85\cdot \Pr[Z+1 < T]= (1-(T-1)^{0.35})T + 0.85(T-1)^{0.35}.
\end{equation*}
This expression has a maximum value of $\approx 1.00474$ for $T \in [1,2]$. Hence, the competitive ratio of this instance is approximately
${1.00474}/{2.1092} \approx 0.4764$.

In~\Cref{subsec:lastvalue} we show how to enhance the performance of single-threshold algorithms by implementing a simple modification: if no value is selected by the end, we choose the last one for sure. We show the following lower bounds on the competitive ratio: 0.4 is achievable by randomized single-threshold policies (\Cref{prop:ST+LB}) while $\approx 0.381$ is attained by deterministic single-threshold policies (\Cref{prop:adv-golden}). Finally, in \Cref{sec:ODEimproving}, we further improve by showcasing the threshold-dynamics machinery to get a lower bound of 0.41 (\Cref{prop:improvedLB}).
\subsection{(1/3)-Competitive Single-threshold Algorithm}\label{app:13single}

In the following proposition, we show that the standard threshold $T = \E[\max_{i \in [n]} X_i]/2$ yields a competitive ratio of $1/3$ in the common-base model with $Y_i = Z + X_i$ for every $i\in [n]$.

\begin{proposition}\label{prop:one_third_sum}
Let $Z$ and $X_1,\dots,X_n$ be independent random variables. Let $\tau$ be the stopping time accepting the first observation $Y_i=Z+X_i$ attaining at least $T = \E[\max_{i\in [n]} X_i]/2$. Then
\begin{align*}
    \frac{\E[Y_{\tau}]}{\E[\textstyle\max_{i\in [n]} Y_i]} \geq \frac{1}{3}.
\end{align*}
\end{proposition}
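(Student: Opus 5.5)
The plan is to condition on the realization $Z=z$ and prove the pointwise bound $\E[Y_\tau\mid Z=z]\ge \frac13\,(z+M)$, where $M=\E[\max_{i\in[n]}X_i]$ and $T=M/2$. Since $Z$ is independent of $X_1,\dots,X_n$, the prophet's conditional value is exactly $\E[\max_i Y_i\mid Z=z]=z+M$. So integrating the pointwise bound over the law of $Z$ gives the proposition. Conditioned on $Z=z$, the rule ``accept the first $Y_i\ge T$'' is the same as ``accept the first $X_i\ge T-z$''. We collect $z$ plus the accepted $X_i$ if we stop, and $0$ otherwise. I split into two cases according to whether $z\ge T$.

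\emph{Case $z\ge T$.} Then $Y_1=z+X_1\ge T$ always holds, so the algorithm stops at step $1$ and gets $z+\E[X_1]\ge z$. Since $M=2T\le 2z$, we have $z+M\le 3z$. Hence the reward is at least $\frac13(z+M)$.

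\emph{Case $z<T$.} The effective threshold on the $X_i$'s is $t'=T-z\in(0,T]$. Let $p=\Pr[\max_i X_i\ge t']$. I invoke the standard balanced-prices bound for independent variables, which is the inequality behind Corollary~\ref{cor:sc-lower-bound} and can also be proved directly by the usual argument:
\[
\E[X_{\tau}\mathbbm{1}\{\text{stop}\}]\ge p\,t'+(1-p)\,\E[(\max_i X_i-t')^+]\ge p\,t'+(1-p)(M-t').
\]
Adding the base value $z\,p$, which we collect exactly when we stop, the conditional reward is at least
\[
p(z+t')+(1-p)(M-t') = p\,T+(1-p)(M-T+z)\ge p\,T+(1-p)\,T=T=\tfrac M2,
\]
where the inequality uses $M-T=T$ and $z\ge 0$. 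Since $z<M/2$, we have $z+M<\frac32 M$, so the reward is at least $\frac13(z+M)$.

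The only step needing care is the second case: the base value $z$ is collected only with probability $p$, so it must be absorbed correctly. The point is that the loss from lowering the threshold by $z$ is offset in the balanced-prices expression, because $t'+z=T$ exactly. A clean pointwise statement of the balanced-prices inequality for arbitrary independent $X_i$ (without the continuity assumptions of Section~\ref{sec:ode-approach}) will be used; it follows from the usual argument $\E[X_\tau]\ge p t' + \sum_i \Pr[\text{reach } i]\,\E[(X_i-t')^+]$ together with $\Pr[\text{reach } i]\ge 1-p$.
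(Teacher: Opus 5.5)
Your proof is correct. It reaches the same intermediate bound as the paper, $\E[Y_\tau]\ge\E[\max\{Z,T\}]$ (reward at least $z$ when $z\ge T$, at least $T$ when $z<T$), and finishes the same way with $\max\{z,T\}\ge\frac13 z+\frac23 T$. The route to that bound is different.

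The paper first transforms the instance. It splits each $X_i$ into an independent sub-$T$ part $W_i$ and a part $W_{n+i}$ that is either $0$ or at least $T$, and places all sub-$T$ parts first. Claim~\ref{lem:variable_decomposition} shows this preserves the prophet's value and can only hurt the threshold rule. On $\{Z<T\}$, the algorithm then either stops in the first phase with reward at least $T$, or reaches the second phase. There the classical $1/2$ guarantee at threshold $T$ is applied as a black box, independently of $Z$.

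You instead condition on $Z=z$ and apply the balanced-prices inequality directly to the original $X_i$ at the shifted threshold $t'=T-z$. Everything rests on one observation: $z$ is collected exactly on the event where the $t'$-term is collected, so $p(z+t')=pT$, while the tail term is $(1-p)(M-t')=(1-p)(T+z)\ge(1-p)T$.

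Your argument is shorter and avoids the instance transformation and its coupling argument entirely. The cost is that you need the balanced-prices bound at an arbitrary threshold $t'\le T$, rather than only the $1/2$ guarantee at $T$. That is harmless: as you note, it follows from the standard few-line argument for arbitrary independent nonnegative variables. It even gives the slightly stronger $\E[Y_\tau\mid Z=z]\ge T+(1-p)z$ for $z<T$.
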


\begin{proof}[Proof]
Let $W_1,\dots,W_{2n}$ be the random variables defined as follows: for each $i\in\{1,\dots,n\}$, $W_{i}$ is distributed according to
    \[
    \Pr[W_i \leq x] = \begin{cases}
    \Pr[X_i \leq x \mid X_i < T] & \text{ if }\Pr[X_i<T]>0,\\
    1 & \text{ otherwise,}
    \end{cases}
    \]
    and $W_{n+i}$ is such that
    \[
    \Pr[W_{n+i} \leq x] = \begin{cases}
    \Pr[X_i < T] &\text{ if } 0 \leq x < T \\
    \Pr[X_i \leq x] &\text{ if } x \geq T.
    \end{cases}
    \]
    Let $Y^X_i = Z+X_i$ for each $i\in [n]$ and $Y^W_j = Z+W_j$ for each $j\in [2n]$. 
\begin{claim}\label{lem:variable_decomposition}
    $\E[\max_{i\in [n]} Y^X_i] = \E[\max_{j\in [2n]} Y^W_j]$, and for any stopping rule $\tau$, $\E[Y^X_{\tau}] \geq \E[Y^W_{\tau}]$.
\end{claim}
Thanks to \Cref{lem:variable_decomposition}, it suffices to show that
${\E[Y^W_{\tau}]}/{\E[\max_{i\in [n]} Y^W_i]} \geq {1}/{3}.$
Let $\tau$ be the threshold stopping rule applied to the variables $Y^W_j$, and $\rho$ the threshold stopping rule applied to the variables $W_j$, both with respect to the threshold $T$.
We distinguish three cases.
\begin{enumerate}[itemsep=0pt,label=\normalfont(\arabic*)]
    \item $Z \geq T$. Then $\tau = 1$, and we obtain a profit of $Z+W_1 \geq Z$.
    \item $Z < T$ and $\tau \in [n]$. In particular, the algorithm stops, so we get at least $T$.
    \item $Z < T$ and $Y^W_i = Z+W_i < T$ for all $i \in [n]$. Then $\tau$ stops at the first index $n+i$ with $W_{n+i} \neq 0$ (if such variable exists), and so does $\rho$. Thus, $\tau=\rho$ and in particular $Y^W_{\tau} \geq W_{\rho}$.
\end{enumerate}
Define the event $E = \{Z+X_i < T\;\text{for all }i\in [n]\}$. Since $\rho$ never stops at the first $n$ random variables $W_j$ and does not depend on $Z$, $\rho$ is independent of $\{Z < T\} \cap E$. Thus, from our considerations above, we deduce
\begin{align*}
    \E[Y^W_{\tau}] \geq \Pr[Z \geq T]\E[Z \mid Z\geq T] + \Pr(\{Z < T\} \cap E^c)T + \Pr[\{Z < T\} \cap E]\E[W_{\rho} \mid Z<T \cap E].
\end{align*}
Using independence, we get
$\E[W_{\rho} \mid Z<T \cap E] = \E[W_{\rho}] \geq T,$
where we use the classical prophet inequality result that the threshold $T = \E[\max_{j\in [2n]} W_j]/2$ gives an expected reward of at least $T$.
We conclude that $\E[Y^W_{\tau}] \geq \Pr[Z \geq T]\E[Z \mid Z\geq T] + \Pr[Z < T]T = \E[\max\{Z,T\}].$
Finally, since
\[
\max\{Z,T\} \geq \frac{1}{3}Z+\frac{2}{3}T = \frac{1}{3}\left(Z+\E[\textstyle\max_{i \in [n]} W_i]\right),
\]
we arrive at the result.

Now we prove \Cref{lem:variable_decomposition}.
By construction, for every $i\in [n]$, the random variables $\max\{W_{i},W_{n+i}\}$ and $X_i$ are equally distributed: for $x \geq T$, $\Pr[W_i \leq x] = 1$, so
    $\Pr[X_i \leq x] =\Pr[W_{n+i} \leq x] = \Pr[\max\{W_i,W_{n+i}\} \leq x].$
    On the other hand, for $x < T$,
    \begin{align*}
    \Pr[X_i \leq x] &= \Pr[X_i \leq x \wedge X_i < T] \\
    &= \Pr[X_i \leq x \mid X_i<T]\Pr[X_i < T] \\
    &= \Pr[W_i \leq x]\Pr[W_{n+i}\leq x] = \Pr[\max \{W_i,W_{n+i}\} \leq x].
    \end{align*}
    Thus, $\E[\max_{i\in [n]} X_i] = \E[\max_{j\in [2n]} W_j]$, so also $\E[\max_{i\in [n]} Y^X_i] = \E[\max_{i\in [n]} Y^W_i]$.
    
    Let us consider the stopping rule $\tau$ on the $n$ variables $\widetilde Y_i = Z+\max\{W_i,W_{i+1}\}$ and let $\rho$ be the stopping rule on the $2n$ variables $Y^W_j$ with respect to the same threshold as $\tau$. 
    Whenever $\rho \in\{i,n+i\}$, we have $\tau = i$. Thus, $W_{\rho} \leq \max \{W_{\tau},W_{\tau+n}\}$ and consequently $Y^W_{\rho} \leq \widetilde Y_{\tau}$. This proves that
    $\E[Y^W_{\rho}] \leq \E[\widetilde Y_{\tau}] = \E[Y^X_{\tau}],$
    where in the last equality we use again that $\max\{W_i,W_{n+i}\}$ and $X_i$ are identically distributed.
\end{proof}

\subsection{The Advantage of Selecting the Last Value}\label{subsec:lastvalue}

In this section, we demonstrate how to enhance the performance of single-threshold algorithms by implementing a simple feature: if no value is selected by the end, we choose the last one for sure. During this section, we denote by $V_{X}(t)$ the expected reward of the single-threshold algorithm that sees the instance $X_1,\ldots,X_n$ and selects the first element with value at least $t$, and we denote by $\hat V_{Y}(t)$ the expected reward of the single-threshold algorithm that sees the instance $Y_1,\ldots,Y_n$, with $Y_i=Z+X_i$ for each $i\in [n]$, and selects the first element with value at least $t$ or the last value otherwise.
For simplicity of exposition, we sometimes refer to these policies as $\STp$.
The following observation is relevant to the analysis we perform in this section.
\begin{observation}\label{obs:decomposition}
$\hat V_{Y}(t)=\EE[Z]+\EE[V_X(t-Z)]$.
\end{observation}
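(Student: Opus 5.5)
The plan is to condition on the realization of $Z$ and use its independence from $X_1,\dots,X_n$. Fix a realization $Z=z$. Conditionally on this event, the observations are $Y_i=z+X_i$, and the $X_i$ keep their original joint law by independence.

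The $\STp$ rule with threshold $t$ accepts the first $i$ with $z+X_i\ge t$, that is, with $X_i\ge t-z$. If no such $i$ exists, it accepts $Y_n$. So conditional on $Z=z$, the stopping index $\hat\tau$ coincides with $\tau_X(t-z)$, the first index at which $X_i\ge t-z$, on the event that this index exists. On the complementary event, $\hat\tau=n$.

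Next we write the realized reward. On both events, the reward $Y_{\hat\tau}$ equals $z+X_{\hat\tau}$. We compare this with the quantity that $V_X(t-z)$ measures. The plain single-threshold algorithm on the $X_i$ receives $X_{\tau_X(t-z)}$ if it stops, and $0$ otherwise. We have to be careful here: when $\STp$ falls back to the last value, it collects $z+X_n$, whereas the plain algorithm collects $0$. This mismatch is the main obstacle, because the identity of \Cref{obs:decomposition} requires that the fallback contribute exactly $z$.

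There are two ways to handle the fallback.
\begin{enumerate}
\item Under the paper's normalization, one regards the $\STp$ fallback as securing $Z$. On the no-stop event, $X_n<t-z$; in the limiting worst-case instances of \Cref{sec:ode-approach} the last items are $0$, so $X_n$ contributes nothing.
\item Alternatively, one reads $V_X(t-z)$ as the value of the rule on the $X$'s that also falls back to $X_n$.
\end{enumerate}
I would adopt the convention matching the paper: $V_X(s)$ denotes the expected value of the rule on $X_1,\dots,X_n$ that accepts the first $X_i\ge s$ and otherwise takes $X_n$. In the threshold-dynamics instances, where $X_n$ can be taken to be $0$, this coincides with the plain single-threshold value. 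With this convention, the conditional reward is exactly $z+X_{\tau'}$, where $\tau'$ is that rule applied with threshold $t-z$. Note that when $t-z\le 0$ the rule stops at $i=1$, and this is consistent on both sides.

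Taking conditional expectations gives $\EE[Y_{\hat\tau}\mid Z=z]=z+V_X(t-z)$. This holds because $\tau'$ is a function of $X_1,\dots,X_n$ only, and these are independent of $Z$. Integrating over the law of $Z$, by the tower property and Fubini (everything is bounded in $[0,2]$), yields $\hat V_Y(t)=\EE[Z]+\EE[V_X(t-Z)]$.
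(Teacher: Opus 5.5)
Your argument (condition on $Z=z$, use the independence of $Z$ from $X_1,\dots,X_n$, then integrate) is exactly the paper's one-line proof. You are also right that it does not literally give the stated identity. When no $Y_i$ reaches $t$, the $\STp$ rule collects $z+X_n$, not $z$, and the paper's proof silently drops this term.

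The problem is your resolution, because the convention you adopt is not the paper's. At the start of \Cref{subsec:lastvalue}, $V_X(t)$ is the value of the plain single-threshold rule, with reward $0$ if nothing is accepted. That is the quantity to which \Cref{thm:v-ode} and \Cref{cor:sc-lower-bound} apply. With that definition the equality is false. Take $n=1$, $X_1\equiv 1/2$, $Z\equiv 0$ and $t=1$: the rule falls back to $Y_1$, so $\hat V_Y(1)=1/2$, whereas $\EE[Z]+V_X(1)=0$. By redefining $V_X$ you have therefore proved a different identity, and your first option only covers particular instances.

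With the paper's $V_X$, your computation actually gives
\[
\hat V_Y(t)=\EE[Z]+\EE[V_X(t-Z)]+\EE\bigl[X_n\mathbf{1}\{X_i<t-Z\text{ for all } i\in[n]\}\bigr]\ \ge\ \EE[Z]+\EE[V_X(t-Z)],
\]
with equality whenever failing to stop forces $X_n=0$. This inequality is the correct form of the observation, and it is all the paper needs:
\begin{itemize}
\item \eqref{LB:STp}, \Cref{prop:ST+LB} and \Cref{prop:improvedLB} use it only as a lower bound.
\item In \Cref{lem:UBstep2} the extra term vanishes for large $N$. For any fixed threshold, not stopping forces $X_{2N}=0$, since $X_{2N}\in\{0,(1-\beta)N\}$.
\end{itemize}
If you prefer to keep your fallback version of $V_X$, your equality is fine. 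But then every later appeal to \Cref{thm:v-ode} or \Cref{cor:sc-lower-bound} needs the extra step that the fallback value dominates the plain one. That is the same inequality, just moved elsewhere.
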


The proof follows by noting that for any value $z$ we have that $\hat V_{Y}(t)=z+\EE[V_X(t-z)]$.
\Cref{obs:decomposition} and the balanced-prices lower bound for $V_X$ (\Cref{cor:sc-lower-bound}) imply that for $t\in [0,1]$ we have $\EE[V_X(t-Z)]\ge \EE[\max(0,\min(t-Z,\bE-(t-Z)))]$, where $\bE=\EE[\max_{i\in [n]}X_i]$. 
Therefore,
\begin{equation}
\hat V_Y(t)\ge \EE[Z]+\EE[\max(0,\min(t-Z,\bE-t+Z))]\ge \EE[\min(t,\bE-t+2Z)].\label{LB:STp}
\end{equation}

\paragraph{\bf Randomized policies and adversarial setting.}
To showcase the advantage of the $\STp$ policies over usual single-threshold rules, we show that the competitive ratio of the $\STp$ family is, in fact, at least $0.4$ if we are allowed to take a {\it random} single threshold.
The analysis is simple if we take a detour through the {\it adversarial} version of the common-base model. 
In this variant, the goal is to find a randomized single-threshold policy in $\STp$ such that, when the adversary then chooses a deterministic value $Z\in [0,1]$, the competitive ratio is maximized.
An application of the minimax theorem implies that the worst-case competitive ratio achievable by randomized single-threshold policies in $\STp$ for the adversarial model (i.e., deterministic $Z$) is equal to the worst-case competitive ratio achievable by deterministic policies in $\STp$ in the common-base model (i.e., random $Z$).

\begin{proposition}\label{prop:ST+LB}
In the common-base model, the competitive ratio of the $\STp$ family is at least $0.4$.
\end{proposition}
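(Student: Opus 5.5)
Following the introduction, I will argue in the adversarial model and then invoke the minimax theorem. That is, I will exhibit a randomized threshold in $\STp$ whose guarantee is at least $0.4\,\EE[\max_i Y_i]$ against every deterministic $Z=z\in[0,1]$. By the minimax remark preceding the proposition, this transfers to some deterministic policy in $\STp$ when $Z$ is random.

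For fixed $z$, write $\bE=\EE[\max_{i\in[n]}X_i]$. The prophet's value is then $z+\bE$. The candidate randomized policy takes threshold $t=\bE$ with probability $0.2$ and $t=\bE/2$ with probability $0.8$. These thresholds are chosen relative to $\bE$, which is known and does not depend on $z$.

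By Observation~\ref{obs:decomposition} with $Z=z$ deterministic, and by the bound \eqref{LB:STp}, I get $\hat V_Y(t)\ge \min(t,\bE-t+2z)$. I should double-check that \eqref{LB:STp} does not actually require $t\le 1$. It comes from $z+\max(0,\min(t-z,\bE-t+z))$ together with Corollary~\ref{cor:sc-lower-bound}, and for thresholds $t-z<0$ the algorithm accepts $Y_1$, so $V_X\ge 0$ suffices. This gives:
\[
\hat V_Y(\bE)\ge \min(\bE,2z),\qquad \hat V_Y(\bE/2)\ge \min(\bE/2,\bE/2+2z)=\bE/2 .
\]
I also have the trivial bound $\hat V_Y(t)\ge z$ for any $t$, since the last item is taken as a fallback. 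So each threshold yields at least the maximum of $z$ and the corresponding balanced-prices bound.

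Normalize $\bE=1$ (the case $\bE=0$ is trivial, since then we always get at least $z$). I must show that $g(z)=0.2\max(z,\min(1,2z))+0.8\max(z,1/2)\ge 0.4(1+z)$ for all $z\ge 0$. I will check this by cases:
\begin{itemize}
\item For $z\le 1/2$: $g=0.4z+0.4=0.4(1+z)$, with equality throughout.
\item For $1/2\le z\le 1$: $g=0.2+0.8z\ge 0.4+0.4z$, which holds iff $z\ge 1/2$.
\item For $z\ge1$: $g=z\ge 0.4(1+z)$.
\end{itemize}
Thus the ratio is at least $0.4$, with tightness on the whole range $z\le 1/2$, which explains the $0.2/0.8$ weights: they are the mixture that balances the two balanced-prices bounds there.

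The main obstacle is the rigorous minimax step. The policy space (thresholds in a compact interval, e.g.\ $[0,2]$) and the adversary's space ($z\in[0,1]$) must be compact, and the payoff must be the ratio $\hat V_Y(t)/(z+\bE)$, which is linear in mixtures over $z$ only if the denominators are handled. I would pass to the linearized game with payoff $\hat V_Y(t)-0.4(z+\bE)$, whose zero lower value is exactly what is needed, and apply a standard minimax theorem for (semi)continuous payoffs on compact sets, arguing via the perturbation assumptions that $\hat V_Y$ is continuous in $t$. Then a distribution on $Z$ corresponds to a mixed adversary strategy, and some deterministic $t$ achieves $\EE[\hat V_Y(t)]\ge 0.4\,\EE[Z+\bE]=0.4\,\EE[\max_i Y_i]$.
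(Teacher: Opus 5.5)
Your proof is correct and is essentially the paper's argument: the same mixture of thresholds $\bE$ (probability $0.2$) and $\bE/2$ (probability $0.8$), the balanced-prices bound \eqref{LB:STp} applied for each fixed $z$, and the same case split at $z=\bE/2$. The step you call the main obstacle is in fact trivial: since $0.2\,\hat V_Y(\bE)+0.8\,\hat V_Y(\bE/2)\ge 0.4(z+\bE)$ for every fixed $Z=z$, taking expectation over the independent $Z$ shows that one of the two deterministic thresholds $\bE$ or $\bE/2$ already earns at least $0.4\,\EE[\max_{i\in[n]}Y_i]$; only this averaging (weak-duality) direction is needed, so you need no minimax theorem, compactness, or continuity of $\hat V_Y$ in $t$ (which in any case fails for atomic instances).
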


\begin{proof}
By the previous discussion, to prove the $0.4$ lower bound, it is sufficient to focus on randomized single-threshold policies in $\STp$.
Consider a random threshold $T$ such that it is equal to $\bE$ with probability 0.2, and equal to $\bE/2$ with probability 0.8.
Observe that the bound in \eqref{LB:STp} in particular holds when $Z$ is deterministic and equal to $z$, i.e., in the adversarial common-base model.
Therefore,
    \begin{align*}
        \EE[\hat V_Y(T)] &\ge z+0.8\max(0,\min(\bE/2-z,\bE-\bE/2+z))+0.2\min(\bE-z,z)\\
        &= z+0.8\max(0,\bE/2-z)+0.2\min(\bE-z,z)\\
        &= z+0.8(\bE/2-z)\mathbf{1}_{z \leq \bE/2}+0.2\cdot z \mathbf{1}_{z \leq \bE/2}+0.2(\bE-z)\mathbf{1}_{z \geq \bE/2}\\
        &= ( 0.8(\bE/2 - z) + 1.2z ) \mathbf{1}_{z \leq \bE/2} +  (0.2\cdot \bE+0.8\cdot z) \mathbf{1}_{z \geq \bE/2} \\
        &= 0.4(\bE+z) \mathbf{1}_{z \leq \bE/2} + (0.2\cdot \bE+0.8\cdot z) \mathbf{1}_{z \geq \bE/2}\\
        &\geq 0.4(\bE+z) \mathbf{1}_{z \leq \bE/2} + (0.4\cdot \bE+0.4\cdot z) \mathbf{1}_{z \geq \bE/2}= 0.4(\bE + z),
    \end{align*}
where the first inequality holds from applying \eqref{LB:STp} in each of the two possible realizations of $T$, the third equality follows from decomposing $z=z\mathbf{1}_{z \leq \bE/2}+z\mathbf{1}_{z \geq \bE/2}$, and the last inequality holds since, for $z\ge \bE/2$, we have $0.2\cdot \bE+0.8\cdot z\ge 0.4\cdot \bE+0.4\cdot z$.    
\end{proof}
Interestingly, the 0.4 bound in \Cref{prop:ST+LB} is tight for the balanced prices bound in \eqref{LB:STp}, i.e., this approximation factor is optimal for the analysis we provide.
\begin{proposition}\label{prop:tightSC}
For every $\delta>0$ there exists an instance $X_1,\ldots,X_n,Z$ of the common-base model where 
$\EE[Z]+\EE[\max(0,\min(t-Z,\bE-t+Z))]\le (0.4+\delta)\EE[Z+\textstyle\max_{i\in [n]}X_i]$ for every $t\ge 0$.
\end{proposition}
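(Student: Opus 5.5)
The plan is to construct an explicit instance rather than argue abstractly. The expression in the statement depends on the $X_i$'s only through $\bE=\EE[\max_{i\in[n]}X_i]$, and on $Z$ only through its distribution. So I take the simplest individual part, $n=1$ and $X_1\equiv 1$, which gives $\bE=1$. All the freedom then goes into $Z$. Set
\[
g(t,z)=z+\max\bigl(0,\min(t-z,\,1-t+z)\bigr).
\]
We need a law of $Z$ with $\sup_{t\ge0}\EE[g(t,Z)]\le 0.4\,(1+\EE[Z])$, up to $\delta$.

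To guess the law of $Z$, I view it as the adversary's side of the minimax game from \Cref{prop:ST+LB}. In that proof the lower bound $0.4(\bE+z)$ is tight exactly at $z=0$ and $z=\bE/2$, so the worst-case $Z$ should be supported on $\{0,1/2\}$. I take $Z=0$ with probability $p$ and $Z=1/2$ with probability $1-p$, and then tune $p$. Writing $u=t-z$, the function $g(t,z)$ equals $z$ when $u\le 0$ or $u\ge 1$, and equals $z+\min(u,1-u)$ when $u\in[0,1]$. Hence $h(t)=\EE[g(t,Z)]$ is piecewise linear, and I compute it on each of the pieces $[0,1/2]$, $[1/2,1]$, $[1,3/2]$ and $[3/2,\infty)$:
\begin{itemize}
\item on $[0,1/2]$: $h(t)=pt+(1-p)/2\le 1/2$;
\item on $[1/2,1]$: $h(t)=p(1-t)+(1-p)t$, which is linear, so it is maximized at an endpoint, with values $1/2$ and $1-p$;
\item on $[1,3/2]$: $h(t)=(1-p)(2-t)\le 1-p$;
\item on $[3/2,\infty)$: $h(t)=(1-p)/2$.
\end{itemize}
So $\sup_t h(t)=\max(1/2,\,1-p)$.

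Choosing $p=1/2$ gives $\sup_t h(t)=1/2$. The benchmark is $\EE[Z+\max_i X_i]=1+1/4=5/4$, and $0.4\cdot 5/4=1/2$. This yields the ratio $0.4$ exactly, uniformly over all $t\ge0$.

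The only step I expect to need care is the standing regularity assumption of \Cref{sec:ode-approach}, which asks for continuous, strictly increasing distributions. If one insists on it, I would replace the two atoms of $Z$ and the atom of $X_1$ by narrow continuous distributions around them, and rescale so everything is supported in $[0,1]$. The quantities $\sup_t h(t)$, $\EE[Z]$ and $\bE$ move continuously under such perturbations, which produces the slack $\delta$ in the statement. The main obstacle is really identifying the right law of $Z$, and the minimax heuristic above settles that.
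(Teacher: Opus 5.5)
Your proof is correct and is essentially the paper's argument. The paper also takes $Z$ equal to $0$ or $\bE/2$ with probability $1/2$ each and bounds the resulting piecewise-linear function of $t$ by $\bE/2$, against a prophet value of $\tfrac54\bE$. The only difference is that the paper uses ramp-plus-rare-jump $X_i$'s with $\bE=2-\varepsilon$, whereas you correctly observe that the expression depends on the $X_i$ only through $\bE$, so $X_1\equiv 1$ suffices and gives the ratio $0.4$ exactly. The smoothing step you mention is unnecessary: the proposition concerns only this explicit expression, and the paper's own instance is discrete too.
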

\begin{proof}
Fix a value $\varepsilon>0$.
For $i\in \{1,\dots,n-1\}$, set $X_i = {i}/{(n-1)}$
with probability $1$, and let $X_n$ be $1/\varepsilon$ with probability $\varepsilon$, and zero with probability $1 - \eps$.
In particular, $\bE=2-\varepsilon$. Now consider a random variable $Z$ taking the value $0$ with probability $0.5$ and $\bE/2=1-\varepsilon/2$ with probability $0.5$, i.e., $\EE[Z]=0.5-\varepsilon/4$.
Then, 
\begin{align*}
&\EE[Z]+\EE[\max(0,\min(t-Z,\bE-t+Z))]\\
&=0.5-\varepsilon/4+0.5\cdot \max(0,\min(t,\bE-t))+0.5\cdot \max(0,\min(t-\bE/2,(3/2)\bE-t))\\
&=0.5-\varepsilon/4+0.5\cdot \max(0,\min(t,2-\varepsilon-t))+0.5\cdot \max(0,\min(t-1+\varepsilon/2,3-\varepsilon/2-t)).
\end{align*}
The previous function is upper bounded by $1+O(\varepsilon)$ for $t\ge 0$, and $\EE[Z+\max_{i\in [n]}X_i]=2.5-(5/4)\varepsilon$.
Then, the ratio between the balanced prices bound and the prophet's value goes to $0.4$ as $\varepsilon \to 0$.
\end{proof}

We also show that in the adversarial common-base model, there is a strict separation in the power of using randomized single-threshold policies.
Namely, if we restrict to deterministic policies in $\STp$, the competitive ratio of this family is strictly less than 0.4, and in fact, equal to $(3-\sqrt{5})/2\approx 0.381$.
\begin{proposition}\label{prop:adv-golden}
In the adversarial common-base model, the competitive ratio of deterministic policies in $\STp$ is equal to $(3-\sqrt{5})/2\approx 0.381$.
\end{proposition}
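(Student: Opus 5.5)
The argument has two halves: a lower bound from a concrete deterministic threshold analysed with \eqref{LB:STp}, and a matching upper bound from an explicit instance. Throughout, $Z=z$ is deterministic and chosen by the adversary, and the policy picks a threshold $t$ that may depend on the distribution of the $X_i$ but not on $z$.

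\emph{Lower bound.} Set $\phi=(3-\sqrt5)/2$, which satisfies $\phi^2=3\phi-1$, equivalently $1-\phi=\phi/(1-\phi)$. Normalize by $\bE=\EE[\max_i X_i]$ and use the threshold $t=c\bE$ for some constant $c$ to be fixed. For deterministic $z$, \eqref{LB:STp} gives
\[
\hat V_Y(t)\ge z+\max\bigl(0,\min(t-z,\bE-t+z)\bigr).
\]
Split according to the value of $z$.
\begin{itemize}
\item If $z\ge t$, the reward is at least $z$, so the ratio is at least $z/(z+\bE)\ge c/(1+c)$.
\item If $z\le t$ and $t-z\le \bE-t+z$, the bound equals $t$, and the ratio is $t/(\bE+z)$. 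This is smallest at the largest admissible $z$, namely $z=\min(t,\,t-\bE/2)$ when $c>1/2$.
\item In the remaining range the bound is $\bE-t+2z$, and $(\bE-t+2z)/(\bE+z)$ is increasing in $z$. Its worst case is therefore $z=0$, giving $1-c$.
\end{itemize}
Taking $c\ge 1/2$, the binding constraints are $1-c$ (at $z=0$) and $c/(1+c)$ (at $z=t$, which for $c\ge 1/2$ also dominates the middle case, where one checks $c/(1+c)\le t/(\bE+z)$). Equating $1-c=c/(1+c)$ gives $c^2+c-1=0$, so $c=(\sqrt5-1)/2$ and the ratio is $1-c=(3-\sqrt5)/2$. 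The first thing I will verify is the middle case at its endpoint $z=t-\bE/2$: there the ratio is $c/(c+1/2)$, which exceeds $c/(1+c)$, so that case does not bind.

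\emph{Upper bound.} I will reuse the instance from Proposition~\ref{prop:tightSC}. Take $X_i=i/(n-1)$ for $i<n$ and $X_n=1/\eps$ with probability $\eps$, so $\bE\approx 2$, and let $z$ be adversarial. I plan to show that every fixed $t$ has some $z$ with ratio at most $\phi+o(1)$. The reward of $\STp$ on this instance is essentially:
\begin{itemize}
\item $z$ if $t\le z$;
\item approximately $t$ if $z<t\le z+1$, since some $z+i/(n-1)$ crosses $t$;
\item approximately $z+1$ if $t>z+1$, since the algorithm falls through to $X_n$ with value $1/\eps$ w.p. $\eps$, plus $z$.
\end{itemize}
The prophet gets $z+2$.

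The adversary then compares two choices. Choosing $z=0$ gives ratio $\min(t,1)/2$ for $t\le1$, or $1/2$ for $t>1$; this fails to bind when $t$ is large, so more is needed. Choosing $z=t$ gives ratio $t/(t+2)$. Choosing $z$ just below $t-1$ (for $t>1$) gives $z+1\approx t$ against $t+1$. Balancing these, $\min(t/2,\,t/(t+2))$ is about $0.42$ at best, which is not small enough; the instance must be modified so that the unit ``ladder'' length and the spike value are in the golden ratio. Concretely, I will take the ladder to have length $a$ and the spike to contribute $b$, and choose them so that $z=0$ yields $\min(t,a)/(a+b)$ while the other adversarial choices yield $t/(t+a+b)$ and $(z+b)/(z+a+b)$. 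I will then tune $a/b$ so that the maximum over $t$ of the minimum over $z$ equals $\phi$.

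The main obstacle is the upper-bound construction. Finding the right single instance so that the min over $z$ of the max over $t$ hits exactly $(3-\sqrt5)/2$ will require a careful case analysis over $t$ relative to the ladder and spike scales. I expect the optimizing instance to be the limiting ladder-plus-spike family with parameters chosen by the same golden-ratio equation $c^2+c-1=0$.
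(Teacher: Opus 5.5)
Your lower bound is the paper's argument: threshold $t=\frac{\sqrt5-1}{2}\bE$, analysed through \eqref{LB:STp} with deterministic $z$. The two binding cases are $z=0$ (ratio $1-c$) and $z=t$ (ratio $c/(1+c)$), and they balance at $c^2+c-1=0$.

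There is one slip. In your middle case, $t-\bE/2\le z\le t$, the ratio $t/(\bE+z)$ is smallest at the upper endpoint $z=t$, not at $z=t-\bE/2$, which is the lower endpoint. So checking $c/(c+1/2)$ verifies nothing. What closes the case is your parenthetical remark that $t/(\bE+z)\ge t/(\bE+t)=c/(1+c)$ for $z\le t$. With that correction the lower bound is complete. Your explicit handling of $z\ge t$ is, if anything, tidier than the paper's, which restricts to $z\in[0,\beta\bE]$ without comment.

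The genuine gap is the upper bound, which you never prove. The statement asserts equality, but your proposal stops at a plan (``I will then tune $a/b$''), with no concrete instance and no verified adversary response to each $t$.

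Two computations along the way are also off:
\begin{itemize}
\item For the instance of Proposition~\ref{prop:tightSC}, the adversary can only force $\min\bigl(1/2,\,t/(t+2)\bigr)$ when $t>1$. A threshold $t\ge 2$ therefore still secures $1/2$, not ``about $0.42$''.
\item In your ladder-of-length-$a$ plus spike-of-mean-$b$ family, $z=0$ with $t>a$ yields $b/(a+b)$, because the algorithm falls through to the spike. It does not yield $\min(t,a)/(a+b)=a/(a+b)$. Using your formula would make large thresholds look safe.
\end{itemize}

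Done correctly, your family works:
\begin{itemize}
\item If $t\le a$, the adversary sets $z=t$. The algorithm stops immediately with about $t$, so the ratio is $t/(t+a+b)\le a/(2a+b)$.
\item If $t>a$, the adversary sets $z=0$, and the ratio is $b/(a+b)$.
\end{itemize}
Setting $a/(2a+b)=b/(a+b)$ gives $a/b=(1+\sqrt5)/2$, and both sides then equal $(3-\sqrt5)/2$.

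The paper does exactly this with a two-point ``ladder'': $X_1=0$, $X_2=(1+\sqrt5)/2$, and $X_3=1/\eps$ with probability $\eps$. The adversary plays $z=t$ when $t\le(1+\sqrt5)/2$ and $z=0$ otherwise. A continuous ladder is unnecessary. Only the top value $a$ matters, since it fixes $\bE$ and the point beyond which the algorithm falls through to the spike. Intermediate values never bind, because for $t\le a$ the adversary uses $z=t$ anyway.
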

\begin{proof}
We start by showing the lower bound. 
For notational simplicity, let's denote $(3-\sqrt{5})/2$ by $\alpha$.
To lower-bound the competitive ratio by $\alpha$, it is sufficient to prove the existence of a constant $\beta>0$ satisfying the following inequality:
\begin{equation}
\min(\beta\bE-\alpha z,\bE-\beta\bE+(2-\alpha)z)\ge \alpha\bE \quad \text{for every }z\in [0,1].\label{LB:suff-adv}
\end{equation}
Indeed, by the lower bound in \eqref{LB:STp}, the previous inequality implies that for the single-threshold policy in $\STp$ with $t=\beta\bE$ we have $\hat V_Y(\beta\bE)\ge \alpha \bE+\alpha z=\alpha (\bE+z)$, i.e., its competitive ratio is lower-bounded by $\alpha$.
We now prove \eqref{LB:suff-adv}.
Consider the concave function $g(z)=\min(\beta \bE-\alpha z,\bE-\beta \bE+(2-\alpha)z)$ for $z\in [0,\beta \bE]$.
The minimum value is attained at the endpoints, that is, with $g(0)=\bE\min(\beta,1-\beta)$ or $g(\beta \bE)=\beta \bE\min(1-\alpha,1/\beta+3-\alpha)=\beta \bE (1-\alpha)$, and so $g(z)\ge \bE \min(\min(\beta,1-\beta),1-\alpha)$.
Then, it is sufficient to take $\beta=1-\alpha\approx 0.618$ since in that case $g(z)\ge \bE\min(\alpha,1-\alpha)=\alpha \bE$.

To prove the upper bound, given $\varepsilon>0$, consider the following instance: two deterministic random variables $X_1=0$ and $X_2=(1+\sqrt{5})/2$, and a third random variable $X_3$ equal to $1/\varepsilon$ with probability $\varepsilon$ and equal to zero with probability $1-\varepsilon$, i.e., $\EE[X_3]=1$. 
If the threshold $t$ is smaller than or equal to $(1+\sqrt{5})/2$, the adversary selects $z=t$, and therefore the decision-maker stops at the first value to get $z=t\le (1+\sqrt{5})/2$.
In this case, the competitive ratio is at most $t/(t+(1+\sqrt{5})/2+1)\le (3-\sqrt{5})/2$, where the inequality holds since $t^2\le t+1$.
If the threshold $t$ is larger than $(1+\sqrt{5})/2$, the adversary selects $z=0$, and the decision-maker stops at the third value, getting, on expectation, a value of $1$.
In this case, the competitive ratio is at most $1/((1+\sqrt{5})/2+1)= (3-\sqrt{5})/2$.
Then, the competitive ratio is no larger than $(3-\sqrt{5})/2$.
\end{proof}

\paragraph{\bf Upper bound for $\STp$ policies.} Finally, we show that for the common-base model, the competitive ratio of the policies in $\STp$ is actually less than $0.475$, which improves on the (simpler) upper bound of $0.4765$ stated at the beginning of this section. 
\begin{proposition}\label{prop:UBSTp}
In the common-base model, the competitive ratio in $\STp$ is less than $0.475$. 
\end{proposition}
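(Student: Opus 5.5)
This is an instance-based upper bound. The plan is to modify the $0.4765$ construction from the beginning of this section so that it becomes robust to the extra option of selecting the last value. Then I will bound $\hat V_Y(t)$ for every threshold $t$ and compare it with the prophet's value.

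\textbf{Instance.} I take $X_i=i/(n-1)$ deterministically for $i\in\{1,\dots,n-1\}$. I take $X_n=c/\varepsilon$ with probability $\varepsilon$ and $0$ otherwise, so $\bE\to 1+c$. I take $Z$ with a power-law distribution $\Pr[Z\le x]=x^{a}$ on $[0,1]$, possibly with an extra atom at $0$, and tune the parameters $(a,c)$ numerically.

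In the $\STp$ setting the last value $Y_n=Z+X_n$ is always taken, so the fallback gives $Z$ in addition to the jackpot. To keep this fallback from helping much, the distribution of $Z$ should place most of its mass near $0$ while $\E[Z]$ still contributes to the prophet. The prophet's value tends to $\E[Z]+1+c$ as $\varepsilon\to 0$ and $n\to\infty$.

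\textbf{Reward of a threshold $t$.} I use \Cref{obs:decomposition}, $\hat V_Y(t)=\E[Z]+\E[V_X(t-Z)]$, and compute $V_X(s)$ exactly for this instance.
\begin{itemize}
\item For $s\le 0$, the algorithm stops at the first variable and $V_X(s)\approx 0$.
\item For $s\in(0,1]$, it stops at the first deterministic $X_i\ge s$, so $V_X(s)\approx s$.
\item For $s>1$, it reaches $X_n$ and gets $c$ in expectation.
\end{itemize}
Hence
\[
\hat V_Y(t)\approx \E[Z]+\E\bigl[(t-Z)\mathbf 1_{0<t-Z\le 1}\bigr]+c\Pr[t-Z>1].
\]
This is an explicit one-variable function of $t$. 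Only $t\in[0,2]$ matters, since for $t\ge 2$ the reward is $\E[Z]+c$ and for $t\le 0$ it is $\E[Z]$.

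Splitting into $t\le 1$ and $1<t\le 2$, the reward is, for $t\le 1$,
\[
\E[\max(Z,t)]\le t+\int_t^1(1-x^{a})\,dx,
\]
and, for $1<t\le 2$,
\[
\E[Z]+\E\bigl[(t-Z)\mathbf 1_{Z\ge t-1}\bigr]+c(t-1)^{a}.
\]
Both expressions are elementary.

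\textbf{Optimization and main obstacle.} The main obstacle is choosing $(a,c)$ so that $\max_t \hat V_Y(t)/(\E[Z]+1+c)<0.475$ and then certifying the maximum rigorously. My approach is to fix concrete constants (first try $a$ around $0.3$ to $0.4$ and $c$ around $0.8$ to $0.9$, refined numerically). Then I bound each piece: show concavity or unimodality of the pieces in $t$, locate the critical point, and evaluate with a small error margin. Alternatively, a Lipschitz bound on a fine grid of $t$ gives a rigorous certificate.

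If pure power laws do not get below $0.475$, I would add an atom of $Z$ at $0$, or give the first $n-1$ variables a different profile, for example $X_i$ ranging over $[0,b]$. This adds a parameter that weakens the fallback. Finally, I take $n\to\infty$ and $\varepsilon\to 0$ to remove the discretization errors.
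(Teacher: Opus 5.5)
\textbf{There is a genuine gap.} Your instance family cannot yield any bound below $1/2$ for $\STp$. This holds whatever $(a,c)$, atom at $0$, or ramp range $[0,b]$ you choose, so the tuning and certification step can never succeed. Your own formulas already show it.
\begin{itemize}
\item The threshold $t=1$ earns $\E[\max(Z,1)]\ge 1$. For every realization $z\in[0,1]$, the deterministic ramp guarantees you stop at some $Y_i\ge 1$.
\item Any threshold $t>1+\max Z$ skips the whole ramp. Because of the fallback, it always collects $Y_n=Z+X_n$, earning exactly $\E[Z]+c$.
\end{itemize}
These two rewards sum to at least $1+\E[Z]+c$, which is at least the prophet's value $\E[Z]+(1-\varepsilon)+c$. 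Hence the better of the two thresholds is $1/2$-competitive for every distribution of $Z$. For example, with $a=0.35$ and $c=0.85$ you get at least $1.109/2.109\approx 0.526$. With a ramp on $[0,b]$, the same argument works using $t=b$.

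The $0.4765$ instance works only for plain single thresholds. There, waiting for $X_n$ forfeits $Z$ whenever the jackpot misses. The $\STp$ fallback restores exactly that loss.

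\textbf{The missing idea.} You need an $X$-instance in which part of $\E[\max_i X_i]$ cannot be captured by any threshold. A deterministic ramp plus a single jackpot cannot provide this. The paper's construction does it as follows.
\begin{itemize}
\item It replaces the upper part of the ramp by many rare Bernoulli values: $X_i=i/N$ with probability $q(i/N)/N$ for $i/N\in(1+\beta^2,2)$, where $q(t)=\beta/(t-1-\beta^2)$.
\item It shrinks the final jackpot's mean to $1-\beta$.
\item Then $\E[\max_i X_i]\to 2$. Meanwhile the limiting threshold value $h$, obtained as an Euler limit of the threshold-dynamics recursion, equals $s$ up to $1+\beta^2$ and then decreases linearly to $1-\beta$ at $s=2$.
\item It takes the two-point $Z$ equal to $1-\beta^2$ with probability $\beta$ and $0$ otherwise.
\end{itemize}
With these choices the $\STp$ reward $\beta(1-\beta^2)+\beta h(t-1+\beta^2)+(1-\beta)h(t)$ is flat at $1+\beta^2$ on $[1+\beta^2,2]$ and never exceeds it. This gives the ratio $(1+\beta^2)/(2+\beta-\beta^3)\approx 0.4744$ at $\beta\approx 0.2067$.
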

To prove the upper bound, we use a sequence of hard instances consisting of $2N$ values each, i.e., $Y_i=Z+X_i$ for each $i\in \{1,\ldots,2N\}$, and then we perform a limit analysis as $N$ grows large.
Let $N \in \mathbb{N}$ and $\beta\in (0,1)$. 
Consider $t\mapsto q(t)=\beta/(t-1-\beta^2)$ and $p_i=q(i/N)/N$ for each integer $i>(1+\beta^2)N+1$.
Define the random variables $X_1,\ldots,X_{2N}$ as follows: a) for every $1 \leq i \leq (1+\beta^2) N+1$ we have $X_i ={i}/{N}$, b) for every $(1+\beta^2)N+1 < i <2N$ we have $X_i=i/N$ with probability $p_i$ and zero otherwise, and c) $X_{2N}=(1-\beta)N$ with probability $1/N$ and zero otherwise.
In the following lemma, we verify that the expectation of the maximum value in the sequence $X_1,\ldots,X_{2N}$ is approximately $2$.

\begin{lemma}\label{lem:UBstep1}
$\EE[\max_{i\in \{1,\ldots,2N\}}X_i]=2+o(1)$.
\end{lemma}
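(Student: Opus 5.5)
We compute $\EE[\max_i X_i]$ through the tail formula $\EE[\max_i X_i]=\int_0^\infty \Pr[\max_i X_i> x]\dif x$, treating separately the deterministic block, the Bernoulli block, and the large last variable.

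\emph{Structure of the maximum.} Let $k_0=\lfloor (1+\beta^2)N+1\rfloor$. The variables $X_1,\dots,X_{k_0}$ are deterministic and increasing, so $\max_{i\le k_0}X_i=k_0/N = 1+\beta^2+O(1/N)$ with certainty. Hence the maximum is at least $1+\beta^2+O(1/N)$. It exceeds this value only if some $X_i$ with $k_0<i<2N$ is nonzero, or $X_{2N}\neq 0$.

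\emph{The last variable.} $X_{2N}$ equals $(1-\beta)N$ with probability $1/N$. When it is nonzero it exceeds every other value, since all others are at most $2$. Its contribution is therefore $\frac1N(1-\beta)N=1-\beta$. On the complementary event, of probability $1-1/N$, the maximum is $\max\{k_0/N,\max_{k_0<i<2N}X_i\}$. Hence
\[
\EE[\max_i X_i]=(1-\beta)+\prn{1-\tfrac1N}\EE\brk{M_N},
\]
where $M_N$ denotes the maximum of the first $2N-1$ variables.

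\emph{The middle block.} For $x\in(1+\beta^2,2)$,
\[
\Pr[M_N\le x]=\prod_{x N< i<2N}(1-p_i),
\]
where each $p_i=q(i/N)/N=O(1/N)$, uniformly once $i/N$ stays at distance $\Omega(1/N)$ above $1+\beta^2$. Even the first factor satisfies $q\le \beta N$, so $p_{k_0}\le\beta<1$ and the product is well defined. Taking logs, this is a Riemann sum:
\[
\Pr[M_N\le x]\to \exp\prn{-\int_x^2 q(t)\dif t}=\exp\prn{-\beta\ln\frac{1-\beta^2}{x-1-\beta^2}}=\prn{\frac{x-1-\beta^2}{1-\beta^2}}^{\beta}.
\]
Then
\[
\EE[M_N]=(1+\beta^2)+\int_{1+\beta^2}^{2}\Pr[M_N>x]\dif x+o(1),
\]
and
\[
\int_{1+\beta^2}^{2}\Pr[M_N>x]\dif x\to(1-\beta^2)\prn{1-\tfrac{1}{1+\beta}}=(1-\beta^2)\frac{\beta}{1+\beta}=\beta(1-\beta).
\]
So $\EE[M_N]\to 1+\beta^2+\beta-\beta^2=1+\beta$, and the total is $(1-\beta)+(1+\beta)+o(1)=2+o(1)$.

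\emph{Main obstacle.} The delicate part is justifying the passage from the product to the exponential of the integral. Near the left endpoint, $q$ blows up like $\beta/(t-1-\beta^2)$, so $p_i$ is not uniformly $O(1/N)$ there and $\log(1-p_i)\approx -p_i$ fails for the first few indices. To handle this, I would split the index range at $i/N=1+\beta^2+\delta$. Beyond the split, use $|\log(1-p)+p|\le p^2$, together with a Riemann-sum error controlled by the continuity of $q$ on the compact range. Below the split, bound the contribution to $\int\Pr[M_N>x]\dif x$ trivially by $\delta$. Then let $N\to\infty$ followed by $\delta\to0$. Dominated convergence, with tails bounded by $1$ on a bounded interval, finishes the argument.
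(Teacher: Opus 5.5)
Your proposal is correct and follows essentially the same route as the paper. Both use the tail formula, convert the product $\prod(1-p_i)$ by logarithms into a Riemann sum for $\int_x^2 q$ to get $\Pr[\max\le x]\to((x-1-\beta^2)/(1-\beta^2))^{\beta}$, and arrive at the same three contributions $1+\beta^2$, $\beta(1-\beta)$ and $1-\beta$.

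The differences are cosmetic. You condition on $X_{2N}$ up front, whereas the paper integrates its tail $1/N$ over $[2,(1-\beta)N]$. You also treat the singularity of $q$ at $1+\beta^2$ explicitly where the paper just writes $o(1)$. For each fixed $x>1+\beta^2$ the product only involves indices with $i/N>x$, so pointwise convergence plus dominated convergence already suffices. Finally, your intermediate claim that $p_i=O(1/N)$ once $i/N$ is $\Omega(1/N)$ above $1+\beta^2$ should read $\Omega(1)$.
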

\begin{proof}
We start by studying the distribution of $M=\max_{i\in \{1,\ldots,2N\}}X_i$. 
If $t \leq 1+\beta^2$, then $\PP[M>t] = 1$ since $M \geq \lfloor(1+\beta^2)N+1\rfloor/N> 1+\beta^2\ge t$.
If $t \geq 2$, then $\PP[M> t] = {1}/{N}$ for $N$ sufficiently large, thanks to the random variable $X_{2N}$.
If $1+\beta^2 < t \leq 2$, then
$$\PP[M\le t] = \left(1-\frac{1}{N}\right)\prod_{s=\lceil Nt \rceil+1}^{2N-1}(1- p_s),$$
i.e., $X_i=0$ for every $i\ge \lceil Nt\rceil+1$.
Then, we have
    \begin{align*}
    \ln(\PP[M\le t])& = \ln\left(1-\frac{1}{N}\right)+ \sum_{s=\lceil Nt \rceil +1}^{2N-1} \ln (1-p_s)\\
    &=\sum_{s=\lceil Nt \rceil }^{2N-1} \left(-\frac{q({s}/{N})}{N}+O({N^{-2}})\right) \\
    & = - \int_t^2 q(s) \dif s + o(1)= -\beta \ln\left(\frac{1-\beta^2}{t-1-\beta^2}\right) + o(1),
    \end{align*}
    and therefore
    \begin{align*}
    \PP[M > t] &= 1-\left(\frac{t-1-\beta^2}{1-\beta^2}\right)^\beta+o(1).
    \end{align*}
    Hence, overall, we have 
    \begin{align*}
        \EE[M] &= \int_0^{\infty} \PP[M>t]\dif t \\
        &= o(1)+1+\beta^2 + \int_{1+\beta^2}^2\left(1-\left(\frac{t-1-\beta^2}{1-\beta^2}\right)^{\beta}\right)\dif t+\frac{(1-\beta)N-2}{N}\\
        &= o(1)+1+\beta^2+\beta(1-\beta)+1-\beta= o(1)+2.\qedhere
    \end{align*}
\end{proof}
Consider the continuous piecewise function $h:\RR\to \RR$ defined as follows: 
$$
h(t)=
\begin{cases}
        t &\text{if $ t \leq 1+\beta^2$,} \\
        \displaystyle\frac{1+\beta^2-\beta t}{1-\beta}  &\text{if $1+\beta^2 \leq t\leq 2$,} \\
        1-\beta &\text{if $t \geq 2$.}
    \end{cases}$$
Consider the common-base instance $Y_i=X_i+Z$ for each $i\in \{1,\ldots,2N\}$, where $Z=1-\beta^2$ with probability $\beta\in (0,1)$, and is equal to zero with probability $1-\beta$. In the following lemma, we compute the expected reward of the $\STp$ policy with threshold $t$ as a function of $t$ and $\beta$, as $N$ grows large.
\begin{lemma}\label{lem:UBstep2}
For every $t\in (0,\infty)$, we have $\hat V_{Y}(t)\to \beta(1-\beta^2)+\EE[h(t-Z)]$ as $N\to \infty$.
\end{lemma}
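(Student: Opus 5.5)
By \Cref{obs:decomposition}, $\hat V_Y(t)=\EE[Z]+\EE[V_X(t-Z)]$. Here $\EE[Z]=\beta(1-\beta^2)$ and $Z$ takes only two values. So it suffices to show that $V_X(s)\to h(s)$ as $N\to\infty$ for each fixed $s$, applied to $s=t$ and $s=t-(1-\beta^2)$. Two remarks are needed first.

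\emph{Negative arguments.} For $s\le 0$ the single-threshold rule stops at $X_1=1/N$, so $V_X(s)\to 0$. I will either restrict to $t\ge 1-\beta^2$, which is the relevant range for the upper bound, or read $h(s)$ as $\max(s,0)$ there. This point has to be checked against how the lemma is used.

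\emph{The last value.} The value $X_{2N}$ is either $0$ or $(1-\beta)N\ge s$. So accepting the last value adds nothing beyond what the threshold rule already gets, and the decomposition applies verbatim.

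I then split into the three regimes of $h$. Write $c=1+\beta^2$.

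\emph{Case $0<s\le c$.} The variables $X_i=i/N$ for $i\le cN+1$ are deterministic and increasing. The rule therefore stops at $i=\lceil Ns\rceil$, with reward $\lceil Ns\rceil/N\to s=h(s)$.

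\emph{Case $s>2$.} Only $X_{2N}$ can exceed $s$, so $V_X(s)=(1-\beta)N\cdot\tfrac1N=1-\beta=h(s)$.

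\emph{Case $c<s\le 2$.} The rule stops at the first $i\ge Ns$ with $X_i\neq 0$, otherwise at $X_{2N}$. Hence
\[
V_X(s)=\sum_{\lceil Ns\rceil\le i<2N}\frac iN\,p_i\prod_{\lceil Ns\rceil\le j<i}(1-p_j)\;+\;(1-\beta)\prod_{\lceil Ns\rceil\le j<2N}(1-p_j).
\]
As in the proof of \Cref{lem:UBstep1}, I will use $\ln(1-p_j)=-q(j/N)/N+O(N^{-2})$. This turns each product into $\exp\bigl(-\int_s^u q\bigr)=\bigl(\tfrac{s-c}{u-c}\bigr)^\beta$, and the sum becomes a Riemann sum. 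The limit is
\[
\int_s^2 u\,\frac{\beta}{u-c}\Bigl(\frac{s-c}{u-c}\Bigr)^{\beta}\dif u+(1-\beta)\Bigl(\frac{s-c}{1-\beta^2}\Bigr)^{\beta}.
\]
I will write $u=(u-c)+c$ and integrate the powers of $(u-c)$ explicitly. The first part gives
\[
\frac{\beta}{1-\beta}\bigl[(s-c)^\beta(u-c)^{1-\beta}\bigr]_s^2+c\Bigl[1-\Bigl(\tfrac{s-c}{1-\beta^2}\Bigr)^\beta\Bigr].
\]
The terms with $(s-c)^\beta(1-\beta^2)^{-\beta}$ should then cancel: the coefficients are $\beta(1-\beta^2)/(1-\beta)-c+(1-\beta)=0$, using $1-\beta^2=(1-\beta)(1+\beta)$. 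What remains is $c-\tfrac{\beta}{1-\beta}(s-c)=\tfrac{c-\beta s}{1-\beta}=h(s)$.

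The step I expect to need the most care is justifying the Riemann-sum limit near $s\downarrow c$, where $q(u)=\beta/(u-c)$ blows up. For $s>c$ fixed, $q$ is bounded on $[s,2]$ and the $O(N^{-2})$ errors sum to $o(1)$, so convergence is uniform and there is no issue. The endpoint $s=c$ itself follows from continuity of $h$ together with monotone sandwiching of $V_X$ between the neighbouring regimes. That is, for $s$ slightly above $c$ the formula tends to $c$, matching Case $0<s\le c$. Finally, I combine the two values of $Z$ to obtain $\hat V_Y(t)\to\beta(1-\beta^2)+\EE[h(t-Z)]$.
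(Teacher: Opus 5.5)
Your proof is correct. For the regime $1+\beta^2<s<2$ it takes a different route from the paper.

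\emph{The paper's route.} It keeps the one-step recursion $V_X(s)=\frac{q(s)}{N}s+\bigl(1-\frac{q(s)}{N}\bigr)V_X(s+1/N)$. Via $W(s)=V_X(2-s)$ it reads this as an Euler scheme for a linear ODE. It checks that the affine function $\eta(s)=1-\beta+\beta s/(1-\beta)$ solves that ODE, and concludes with Picard--Lindel\"of plus convergence of Euler's method.

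\emph{Your route.} You write $V_X(s)$ directly as an explicit sum, pass to the limiting integral, and evaluate it. I checked your antiderivatives and the cancellation of the $\bigl((s-c)/(1-\beta^2)\bigr)^{\beta}$ terms, and you do land on $(c-\beta s)/(1-\beta)=h(s)$.

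\emph{Comparison.} The paper's argument fits its threshold-dynamics viewpoint and avoids any integration, since only an affine guess is verified. However, it outsources the limit to general numerical-ODE results. Yours is self-contained: it needs only Riemann-sum convergence on $[s,2]$, where $q$ is bounded and Lipschitz for fixed $s>c$ and $N$ large enough that $\lceil Ns\rceil>cN+1$. The price is an explicit calculation.

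Your two preliminary remarks are also right and sharpen the paper.

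\emph{The last value.} \Cref{obs:decomposition} is in general only an inequality, because accepting the last value yields $Z+X_n$ rather than $Z$. Your observation that $X_{2N}\in\{0,(1-\beta)N\}$ with $(1-\beta)N\ge s$ is exactly what makes it an equality for this instance.

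\emph{Negative arguments.} This issue is genuine. For $t<1-\beta^2$ and $Z=1-\beta^2$ one gets $s<0$, and then $V_X(s)=1/N\to 0\ne s=h(s)$. So the lemma as literally stated fails on $(0,1-\beta^2)$, a case the paper's proof (which starts at $s\ge 0$) skips. Reading $h(s)=\max(s,0)$ for $s\le 1+\beta^2$ repairs it. \Cref{prop:UBSTp} is unaffected, since the true limit reward there, $\beta(1-\beta^2)+(1-\beta)t$, is below $1-\beta^2$.

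\emph{The endpoint $s=c$.} Your closing ``monotone sandwiching'' argument is unnecessary, and monotonicity of $V_X$ would itself need proof. Your first case $0<s\le c$ already covers $s=c$ directly.
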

\begin{proof}
From \Cref{obs:decomposition}, for every $N$ we have $\hat V_Y(t)=\EE[Z]+\EE[V_X(t-Z)]$, and $\EE[Z]=\beta(1-\beta^2)$.
Therefore, it suffices to show that $\EE[V_X(t-Z)]\to \EE[h(t-Z)]$ as $N\to \infty$. Now we analyze the value $V_X(s)$ for each $s$.
When $s \in [0 , 1+\beta^2]$, the policy stops at $X_{\lceil Ns \rceil}$, thus $V_X(s) = {\lceil Ns \rceil}/{N} \rightarrow s = h(s)$ when $N\to \infty$, and
when $s \geq 2$ we get $(1-\beta)N$ w.p. $1/N$, so $V_X(s) = 1-\beta=h(s)$ as $N\to \infty$.

Consider now $s \in (1+\beta^2,2)$, and suppose $sN \in \mathbb{N}$; the general case follows from a standard density argument.
Recall that $X_{Ns}= s$ with probability $p_{ Ns}={q(s)}/{N}$, and then in this event $V_X(s)$ is $s$, otherwise $X_{Ns}=0$ and $V_X(s)$ yields $V_X({s+{1}/{N}})$. 
Hence $V_X$ obeys the following recursion with a boundary condition: 
\[
V_X(s) = \frac{q(s)}{N}s + \left( 1 - \frac{q(s)}{N}\right)V_X({s+{1}/{N}}), \quad V_X(2)=1-\beta.
\]
The recursion can be written equivalently in terms of the function $s\mapsto W(s)=V_X(2-s)$, to get 
$W(s)=W(s-1/N)+f(s,W(s-1/N))/N$, where $f$ is the function such that 
\[
(t,v)\mapsto f(t,v)=\frac{\beta(2-t-v)}{1-\beta^2-t}.
\]
Then, the recursion for $W$ gives an Euler approximation to the following linear differential equation with initial condition: 
\begin{equation}
u'(s)=\beta\frac{2-s-u(s)}{1-\beta^2-s},\quad u(0)=1-\beta.\label{eq:UBODE}
\end{equation}
The function $\eta(s)=1-\beta+\beta s/(1-\beta)$ satisfies \eqref{eq:UBODE}, and by the Picard-Lindel\"{o}f theorem (see, e.g., \cite{teschl2012ordinary}) it is the unique solution to the system in the interval $(0,1-\beta^2)$.
Furthermore, the Euler approximation given by $W$ converges uniformly to $\eta$ (see, e.g.,~\cite{hairer1993solving}) and thus we conclude that $V_X(s)=W(2-s)\to \eta(2-s)=h(s)$ for every $s\in (1+\beta^2,2)$ as $N\to \infty$.
\end{proof}
\begin{proof}[Proof of \Cref{prop:UBSTp}]
From \Cref{lem:UBstep2}, the limit reward obtained by the algorithm using the policy with threshold $t$, as $N\to \infty$, is equal to 
\begin{equation}
\beta(1-\beta^2)+\EE[h(t-Z)]=\beta(1-\beta^2)+\beta h(t-(1-\beta^2))+(1-\beta)h(t).\label{eq:UBlimitreward}
\end{equation}
From the definition of $h$ and by studying the breakpoints in \eqref{eq:UBlimitreward}, we have that, as $N\to \infty$, the maximum possible reward achievable by a single-threshold policy in $\STp$ is $1+\beta^2$.
Then, by \Cref{lem:UBstep1}, we conclude the competitive ratio of any policy in $\STp$ is upper bounded by 
${(1+\beta^2)}/{(2+\EE[Z])}={(1+\beta^2)}/{(2+\beta-\beta^3)},$
and this function is minimized in $(0,1)$ at $\beta^*\approx 0.2067$, with value $\approx 0.4744$.
\end{proof}

\subsection{An Improved Lower Bound using the Threshold Dynamics}\label{sec:ODEimproving}

In this section, we show how to improve the $0.4$ lower bound from \Cref{prop:ST+LB} by using the differential equation from \Cref{sec:ode-approach}.
We prove the following.
\begin{theorem}\label{prop:improvedLB}
In the common-base model, there is a policy in $\STp$ with competitive ratio at least~$0.41$.
\end{theorem}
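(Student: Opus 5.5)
We again pass to the adversarial model via the minimax argument of \Cref{subsec:lastvalue}. It then suffices to exhibit, for every distribution $F$ of $M=\max_i X_i$, a randomized threshold $T$ (allowed to depend on $F$) such that
\[
z+\EE_T[V_X(T-z)]\ge 0.41\,(\bE+z)\qquad\text{for every } z\ge 0,
\]
using $\hat V_Y(t)=z+V_X(t-z)$ from \Cref{obs:decomposition}. We may normalize the support of $M$ to $[0,1]$, so $\bE\le 1$. By \Cref{clm:reduction-is-worse}, $V_X$ may be replaced by the worst-case function
\[
V(s)=s-F(s)\Bigl(1-\int_s^1 \frac{du}{F(u)}\Bigr)
\]
of \Cref{thm:v-ode}, extended by $V(s)=0$ for $s\ge1$ and $V(s)=\bE$ for $s\le 0$.

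The balanced-prices bound $V(s)\ge \min(s,\bE-s)$ loses exactly on the tight instance of \Cref{prop:tightSC}. There $F$ puts almost all its mass near $\bE/2$ and $\bE$ comes from a tiny atom far out. My plan is a dichotomy on $F$, using the two refined facts that \Cref{thm:v-ode} supplies.

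\textbf{Fact (a).} The maximizer satisfies $V(t^*)=t^*$, and by \Cref{cor:hill-stronger-bound}, $t^*\ge 1-\sqrt{1-\bE}$. After normalizing to the support bound, this is strictly better than $\bE/2$ whenever $\bE$ is a non-negligible fraction of the support length.

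\textbf{Fact (b).} The exact expression for $V$ dominates $\min(s,\bE-s)$ by the slack
\[
\int_s^1\frac{(1-F)^2}{F}\,du+\int_0^s F\,du
\]
from the proof of \Cref{cor:sc-lower-bound}, and this slack is large unless $F$ resembles the two-point extremal shape.

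Concretely, I would take $T$ to be a mixture of three values: $\bE$ with probability $p_1$, $\bE/2$ with probability $p_2$, and $t^*$ (or an intermediate value) with probability $p_3$. The steps are:

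\begin{enumerate}[label=(\roman*)]
\item Parametrize the worst case by a one-dimensional quantity such as $\rho=\bE/t^*$, which lies in $[1,2]$ by \Cref{cor:sc-lower-bound} and fact (a) relative to the support.
\item Derive pointwise lower bounds on $V(s)$ in terms of $\rho$ and $\bE$. For $s\le t^*$, quasiconcavity and $V(0^+)=0$ give $V(s)\ge \min(s,\ldots)$, with the exact ODE comparison $V'(s)=(V(s)-s)f(s)/F(s)$ improving the bound below $t^*$. For $s\ge t^*$, use the sharpened slack.
\item Show that $z\mapsto z+\EE_T[\,\text{lower bound}(T-z)\,]-0.41(\bE+z)$ is piecewise linear or concave in $z$, so only breakpoints need checking. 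Then choose $p_1,p_2,p_3$ as functions of $\rho$ to make all breakpoint values nonnegative.
\end{enumerate}

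When $\rho$ is close to $2$, i.e.\ $t^*\approx\bE/2$, fact (b) must supply the gain. The integral condition $\int_{t^*}^1 du/F(u)=1$ then forces $F$ to be near $1$ on $[t^*,1]$, and I would quantify this through the Cauchy--Schwarz step of \Cref{cor:hill-stronger-bound}.

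The main obstacle is step (ii)/(iii): turning the closed form for $V$ into bounds that are tight enough, uniformly over all $F$, to beat $0.4$ by $0.01$. The adversary's optimal $z$ shifts the evaluation point $T-z$ into the region where balanced prices are tight. So we must show that $F$ cannot be extremal for both $T=\bE/2$ and $T=\bE$ simultaneously. I would first try to reduce this to a finite-dimensional extremal problem over two- or three-step distributions $F$, which are plausibly worst for $V$ given $\bE$ and $t^*$. I would then verify the resulting inequality numerically on a grid of $\rho$, which is why the constant is stated as $0.41$ rather than in closed form.
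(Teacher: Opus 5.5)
Your diagnosis, that $F$ cannot be extremal for the balanced-prices bound at both $\bE/2$ and $\bE$, is the right one, and your $\rho=\bE/t^*=2/(1+\alpha)$ is exactly the paper's parameter (with $t^*=(1+\alpha)\bE/2$). But there is a genuine gap: the one quantitative fact that lets a $\STp$ policy beat $0.4$ is never established. You call steps (ii)/(iii) ``the main obstacle'' and defer them to an unproved reduction to two- or three-step distributions plus a numerical grid check. The facts you do prove are restatements of \Cref{thm:v-ode} and its corollaries, i.e.\ the ingredients of the $0.4$ bound.

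The missing ingredient is a bound $V_X(\bE)\ge\gamma(\alpha)\bE$ with $\gamma(0)>0$. This is \Cref{lem:valueofBE}, with $\gamma(\alpha)=\frac{3}{16}\frac{(1-\alpha)^2}{1+\alpha}$. The paper obtains it as follows:
\begin{enumerate}[label=(\arabic*)]
\item Quasiconcavity gives $V(t^*+u)\le V(t^*)=t^*$ for $u\ge0$, which caps $\PP[M\ge t^*+u]$ just above $t^*$.
\item This forces $\int_{\bE}^1(1-F)\ge 3t_*^2/(8t^*)$, where $t_*=\bE-t^*$.
\item Finally, $V(\bE)\ge\int_{\bE}^1(1-F)$.
\end{enumerate}
The theorem is then a dichotomy on $\alpha$. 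If $\alpha\ge\delta$, the threshold $(1+\delta)\bE/2$ lies below $t^*$, where $V(s)\ge s$. If $\alpha$ is small, quasiconcavity gives $V(\bE-z)\ge\min\{V(\bE/2),V(\bE)\}\ge\gamma(\alpha)\bE$ for all $z\le\bE/2$.

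Your proposed route to this step would not deliver it. Fact (a) and the Cauchy--Schwarz step of \Cref{cor:hill-stronger-bound} live at the scale of the support. In the extremal instances of \Cref{prop:tightSC}, $\bE$ is negligible relative to the support (there is an atom at $1/\varepsilon$). There Fact (a) only gives $t^*\ge\bE/2+O(\bE^2)$, and ``$F$ is near $1$ on $[t^*,1]$'' carries no information at scale $\bE$. What matters is that the deficit $1-F$, which has mass at least $t_*$ on $[t^*,1]$, cannot sit inside the window $[t^*,\bE]$.

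Your Fact (b) ingredients give this through convexity rather than Cauchy--Schwarz. Let $a=\int_{t^*}^{\bE}(1-F)$ and $b=\int_{\bE}^1(1-F)$. Combine the identity $\int_{t^*}^1\frac{1-F}{F}=t^*$, Jensen for $x\mapsto\frac{x}{1-x}$ on the window, and $a+b\ge t_*$. This yields $\frac{a^2}{t_*-a}\le t^*-t_*=\alpha\bE$, hence $V(\bE)\ge b\ge t_*-\sqrt{\alpha\bE t_*}$. That is an alternative to \Cref{lem:valueofBE}, and it is sharper near $\alpha=0$.

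Two further corrections are needed.
\begin{itemize}
\item The extension $V(s)=\bE$ for $s\le0$ is wrong. When $T\le z$ the policy takes $Y_1=z+X_1$, and $\EE[X_1]$ may be $0$, so only $V(s)\ge0$ holds. Your extension would credit every $z\ge T$ with the full prophet value and so discard exactly the adversary choices that bind.
\item The slack in Fact (b) also carries a factor $F(s)$.
\end{itemize}

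Finally, keep the benchmark $0.41(\bE+z)$ that you wrote down. The paper's final displays bound the reward of its fixed mixture ($(1+\delta)\bE/2$ with probability $0.8$, $\bE$ with probability $0.2$) by a multiple of $\bE$ rather than of $\bE+z$. At $z=\bE/2$ that mixture guarantees only $(0.6+0.4\delta)\bE$ against $1.5\bE$, about $0.409$. This is essentially its true reward on the worst-case instance with $F(u)=u^k$ for large $k$. So your $\rho$-dependent weights with mass near $t^*$ are genuinely needed, not cosmetic.
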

We prove \Cref{prop:improvedLB} by combining the lower bound \eqref{LB:STp} with the structure of the solution to our differential equation \eqref{eq:differential-t}.
In what follows, let $V_X$ be the solution to the differential equation \eqref{eq:differential-t} associated with the instance $X_1,\ldots,X_n$, and where $F$ and $f$ are the distribution and density, respectively, of $\max_{i\in [n]}X_i$.
Recall that from \Cref{thm:v-ode}, the function $V_X$ is quasiconcave and it has a unique maximizer $t^*\in [0,1]$.
In particular, at $\bE/2$, the differential equation \eqref{eq:differential-t} gives $$V_X'(\bE/2)=(V_X(\bE/2)-\bE/2)\frac{f(\bE/2)}{F(\bE/2)},$$ and by \Cref{cor:sc-lower-bound} we have $V_X(\bE/2)\ge \bE/2$, thus, $V'_X(\bE/2)\ge 0$.
Since $V_X$ is quasiconcave, we conclude that $t^*\ge \bE/2$.
In what follows, let $\alpha\ge 0$ such that $t^*=(1+\alpha)\bE/2$.
In the following lemma, we give a lower bound on the value of $V_X(\bE)$.
\begin{lemma}\label{lem:valueofBE}
$V_X(\bE) \geq \frac{3}{16}\cdot \frac{(1-\alpha)^2}{1+\alpha}\bE$.
\end{lemma}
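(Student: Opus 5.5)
We may assume $\alpha<1$. Otherwise the right-hand side is non-positive, and the claim follows from $V_X(t)\ge t(1-F(t))\ge 0$, which was shown in the proof of \Cref{thm:v-ode}. Write $s=t^*=(1+\alpha)\bE/2$, so that $s<\bE$ and $\bE-s=(1-\alpha)\bE/2$. The target then reads
\[
V_X(\bE)\;\ge\;\frac{3}{8}\cdot\frac{(\bE-s)^2}{s},
\]
since $\frac{3}{8}\cdot\frac{(1-\alpha)^2\bE^2/4}{(1+\alpha)\bE/2}=\frac{3}{16}\cdot\frac{(1-\alpha)^2}{1+\alpha}\bE$. This quadratic-over-linear shape suggests a one-parameter optimization in the final step.

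The first step is to rewrite $V_X(\bE)$ using the optimality condition \eqref{eq:optimal-t-integral}. Since $\int_s^1 du/F(u)=1$, formula \eqref{eq:v-expression} gives, for every $t\ge s$,
\[
V_X(t)=t-F(t)\int_s^t\frac{du}{F(u)}.
\]
Because $F$ is non-decreasing, $\int_s^t du/F(u)\le (t-s)/F(s)$, and hence $V_X(t)\ge t-(t-s)F(t)/F(s)$. We also keep the trivial bound $V_X(t)\ge t(1-F(t))$. Taking the better of the two, and playing them off against each other through $F(\bE)$, handles the cases where $F(\bE)$ is small or where $F(\bE)/F(s)$ is close to $1$.

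The second step, which I expect to be the main obstacle, is to control the remaining case using information linking $F(s)$ to $\bE$ and $s$. The available constraints are $\bE=\int_0^1(1-F(u))\,du$ and $\int_s^1 du/F(u)=1$. The latter, through the Cauchy--Schwarz argument of \Cref{cor:hill-stronger-bound} applied on $[s,1]$ and on $[\bE,1]$ separately, should prevent $F$ from being much smaller on $[s,\bE]$ than at $\bE$. My plan is to pose the worst case as a variational problem: minimize $F(\bE)\int_s^{\bE}du/F(u)$ subject to these two constraints and monotonicity of $F$. I would then argue, by an exchange argument, that an extremal $F$ is piecewise constant, namely constant on $[s,\bE)$, jumping at $\bE$, and constant on $(\bE,1]$. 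This reduces the problem to optimizing over one or two scalars. For instance, with $F\equiv a$ on $[s,\bE)$, the quantity becomes $\bE-(\bE-s)F(\bE)/a$, and the $\bE$-constraint bounds how small $a$ can be relative to $F(\bE)$.

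The final step is to optimize the resulting elementary expression, which I expect to have the form $\min_x\max\{\bE(1-x),\,\bE-x(\bE-s)/a(x)\}$. The crossover point should yield exactly $\frac{3}{8}(\bE-s)^2/s$, and substituting $s=(1+\alpha)\bE/2$ recovers the statement. If the exchange argument proves delicate, I would fall back on using quasiconcavity of $V_X$ on $[s,1]$ together with the ODE \eqref{eq:differential-t} written as $V_X'(t)=-f(t)\int_s^t du/F(u)$. Integrating this from $s$ to $\bE$, starting from $V_X(s)=s$, gives a bound on the drop $s-V_X(\bE)$, and that drop would then be estimated with the same constraints.
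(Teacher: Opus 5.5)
\textbf{There is a genuine gap: the estimate that carries the lemma is never supplied.}

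Your preliminary steps are right. For $t\ge s=t^*$, the optimality condition gives $V_X(t)=t-F(t)\int_s^t du/F(u)$. The two constraints you isolate, $\bE=\int_0^1(1-F)$ and $\int_s^1 du/F=1$, are exactly what must be played against each other. But the proposal stops short of doing so:
\begin{itemize}
\item The ``exchange argument'' that makes a piecewise-constant $F$ extremal is only asserted, and the scalar optimization is not performed.
\item The claim that the crossover gives ``exactly'' $\frac38(\bE-s)^2/s$ is read off from the target. A correct optimization does not return it: with $t_*=\bE-s$, the constraints force $V_X(\bE)\ge\frac12\bigl(\bE-\sqrt{\bE^2-4t_*^2}\bigr)$. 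This is approached when $F\approx0$ on $[0,s)$, $F$ is constant on $[s,\bE)$, and $F\approx1$ on a long stretch beyond $\bE$. The $3/8$ is an artifact of the paper's own estimate.
\item As stated, your variational problem points the wrong way. Lower-bounding $V_X(\bE)=\bE-F(\bE)\int_s^{\bE}du/F$ requires \emph{maximizing} $F(\bE)\int_s^{\bE}du/F$, not minimizing it.
\item The bound $V_X(t)\ge t-(t-s)F(t)/F(s)$ is useless outside your ansatz. $F(t^*)$ can be made arbitrarily small by a short steep rise just after $t^*$, without appreciably changing $\int_s^{\bE}du/F$, $t^*$ or $\bE$.
\item The constraints do not keep $F$ on $[s,\bE]$ comparable to $F(\bE)$. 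In the near-extremal configuration above, $F\approx t_*/\bE$ on $[s,\bE)$ for small $t_*$, while $F(\bE)\approx1$.
\item Cauchy--Schwarz on $[s,1]$ only reproduces Corollary~\ref{cor:hill-stronger-bound}.
\item Your fallback is just the identity $s-V_X(\bE)=\int_s^{\bE}f(t)\int_s^t du/F\,dt$, with the same estimate still missing.
\end{itemize}
A minor point: $\alpha>1$ is not excluded by the sign of the right-hand side, which is positive there. It simply cannot occur, because $t^*=V_X(t^*)\le\bE$.

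\textbf{How to close it.} Apply Cauchy--Schwarz on $[s,\bE]$ instead. This is the rigorous form of your ``constant on $[s,\bE)$'' intuition, which is its equality case. Let $d=\bE-s$, $A=\int_s^{\bE}\frac{1-F}{F}$ and $B=\int_{\bE}^1\frac{1-F}{F}$. Then:
\begin{itemize}
\item optimality gives $A+B=s$;
\item $\int_0^s(1-F)\le s$ gives $\int_s^1(1-F)\ge d$;
\item Cauchy--Schwarz gives $\int_s^{\bE}F\ge d^2/(A+d)$;
\item and $\int_{\bE}^1(1-F)\le B$.
\end{itemize}
Hence
\[
d\le d-\frac{d^2}{A+d}+B,\qquad\text{so}\qquad d^2\le B(A+d)=B(\bE-B)\le B\,\bE .
\]
Since $V_X(\bE)=\bE(1-F(\bE))+F(\bE)B$ and $B\le s\le\bE$, you get $V_X(\bE)\ge B\ge d^2/\bE=\frac{(1-\alpha)^2}{4}\bE$. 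This dominates the claimed bound because $4(1+\alpha)\ge3$.

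\textbf{How the paper argues.} It works differently. From $V_X(t^*+t)\le t^*$, together with the balanced-prices bound applied to the part of the instance above $t^*$ (whose mean is at least $t_*$), it extracts the pointwise bound $F(t^*+t)\ge t/t^*$ for $t\in[0,t_*/2]$. Inserting this into $\bE=\int(1-F)$ forces $\int_{\bE}^1(1-F)\ge 3t_*^2/(8t^*)$. It then concludes with $V_X(\bE)\ge\int_{\bE}^1(1-F)$. Splitting $[0,t_*]$ into halves is what produces the $3/8$.
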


\begin{proof}
Let $t_* = (1-\alpha)\bE/2$, i.e., $t^*+t_*=\bE$. 
For every $t\in [0,1]$, let $V^{\star}(t)=\min(t,V(t))$; we omit the subindex $X$ throughout the proof for notational simplicity. Consider the modified instance where we have a sequence of deterministic random variables whose values grow continuously from zero to $t^*$.
Formally, we fix a value $\varepsilon>0$, and for $i\in \{1,\dots,n-1\}$, we set $X_i = {t^*i}/{(n-1)}$
with probability $1$, and let $X_n$ be $t^*/\varepsilon$ with probability $\varepsilon$, and zero with probability $1 - \eps$.
As $\varepsilon\to 0$, the reward function for this new instance is $V^*$. Let $X^*_i$ denote the random variables in the new instance we just created, and let $\bar{X}_i=X^*_i-t^*$. 
We use $\bar{M}$ to denote $\max_{i\in [n]}\bar{X}_i$, and $M^{\star}$ for $\max_{i\in [n]}X^{\star}_i$. 
Observe that, by construction, we have $\bar\bE\ge t_*$.
Furthermore, $\bE^*\ge \bE$, and $M^*$ is stochastically dominated by $\max_{i\in [n]}X_i$. Then, we have the following equality:
\begin{equation}\label{eq:startobar}
V^{\star}(t^* + t)=\bar{V}(t)+t^*\PP[\bar M\ge t]\text{ for every }t\in [0,1-t_*].
\end{equation}
From \Cref{cor:sc-lower-bound}, it holds that $\bar{V}(t)\ge \min(t,\bar \bE-t)$ for every $t\in [0,1-t^*]$.
We also recall that $1-t^*\ge t_*$.
In particular, for every $t \in [0,{t_*}/{2}]$ we have $$\bar\bE-t\ge \bar\bE-t_*/2\ge t_*-t_*/2=t_*/2>t,$$ and therefore $\bar{V}(t)\ge t$ for every $t \in [0,{t_*}/{2}]$.
Thus, from \eqref{eq:startobar}, we have
\begin{equation}\label{eq:startobar2}
V^{\star}(t^*+t) \geq t + t^* \mathbb{P}[\bar{M} \geq t] \text{ for every $t \in [0,{t_*}/{2}]$.}
\end{equation}
By the quasiconcavity of $V^{\star}$, the function $t\mapsto V^{\star}(t^* + t)$ is non-increasing in the interval $[0,1-t^*]$ and therefore $V^{\star}(t^* + t)\le V^{\star}(t^*) \leq V(t^*)=t^*$, where the last equality holds from \Cref{thm:v-ode}.
Then, this inequality together with \eqref{eq:startobar2} implies that $t^* \geq t + t^* \mathbb{P}[\bar{M} \geq t]$ for every $t \in [0,{t_*}/{2}]$, i.e., 
\begin{equation}
\mathbb{P}[\bar M \geq t] \leq \frac{t^* - t}{t^*} = 1-\frac{t}{t^*}\text{ for every $t \in [0,{t_*}/{2}]$.}\label{eq:startobar3}
\end{equation}
Now consider $t \in [{t_*}/{2},t_*]$.
Since $t\mapsto \PP[\bar M \geq t]$ is decreasing in $t$, we have
\begin{equation}\label{eq:startobar4}
\PP[\bar M \geq t] \leq \PP[\bar M \geq t_*/2]\le 1-\frac{t_*}{2t^*}.
\end{equation}
Therefore, the following holds:
\begin{align*}
\int_0^{\bE}\PP[M^*\geq t] \dif t&= \int_0^{t^*}\PP[M^* \geq t]\dif t + \int_{t^*}^{\bE}\PP[M^* \geq t]\dif t \\
&\le t^* + \int_0^{t_*}\PP[\bar M \geq t]\dif t \\
&= t^* + \int_0^{t_*/2}\PP[\bar M \geq t]\dif t + \int_{t_*/2}^{t_*}\PP[\bar M \geq t]\dif t \\
&\le t^* + \int_0^{t_*/2} \left(1-\frac{t}{t^*}\right)\dif t + \int_{t_*/2}^{t_*}\left(1-\frac{t_*}{2t^*}\right)\dif t= \bE - \frac{3t_*^2}{8t^*},
\end{align*}
where the last inequality holds from \eqref{eq:startobar3} and \eqref{eq:startobar4}.
On the other hand, we have
\begin{align*}
\bE \le \bE^* &= \int_0^{\bE} \PP[M^* \geq t]\dif t +\int_{\bE}^1 \PP[M^* \geq t]\dif t\\
&\le 
\bE - \frac{3t_*^2}{8t^*} + \int_{\bE}^1 \PP[M^* \geq t]\dif t \le 
\bE - \frac{3t_*^2}{8t^*} + \int_{\bE}^1 (1-F(t))\dif t, 
\end{align*}
where the last inequality holds since $M^*$ is stochastically dominated by $\max_{i\in [n]}X_i$. Therefore, we have the following inequality:
\begin{equation}
\int_{\bE}^\infty (1-F(t))\dif t \geq \frac{3t_*^2}{8t^*}=\frac{3}{16}\cdot \frac{(1-\alpha)^2}{1+\alpha}\bE.\label{eq:startobar5}
\end{equation}
To conclude, recall that from \Cref{thm:v-ode}, for every $s\in [0,1]$ it holds
$$V(s) = s - F(s) + \int_s^1 \frac{F(s)}{F(u)} \dif u=s(1-F(s))+F(s)\int_{s}^1\frac{1-F(u)}{F(u)}\dif u,$$
and therefore, at $s=\bE$ we have  
\begin{align}
V(\bE)&=\bE(1-F(\bE))+F(\bE)\int_{\bE}^1\frac{1-F(u)}{F(u)}\dif u\nonumber\\
&\ge \bE(1-F(\bE))+F(\bE)\int_{\bE}^1(1-F(u))\dif u\nonumber\\
&\ge \left(\int_{\bE}^1(1-F(u))\dif u\right)(1-F(\bE))+F(\bE)\int_{\bE}^1(1-F(u))\dif u=\int_{\bE}^1(1-F(u))\dif u,\label{eq:startobar6}
\end{align}
where the first inequality holds from $F(u)\le 1$ for every $u\in [\bE,1]$, and the last inequality is a consequence of $\bE=\int_{0}^1(1-F(u))\dif u\ge \int_{\bE}^1(1-F(u))\dif u$. 
The lemma follows from \eqref{eq:startobar5}-\eqref{eq:startobar6}.
\end{proof}

\begin{proof}[Proof of Theorem \ref{prop:improvedLB}]
We consider a slight modification of the policy we use in \Cref{prop:ST+LB}.
Consider a random threshold $T$ such that it is equal to $T_1=(1+\delta)\bE/2$, with probability $0.8$, and it is equal to $\bE$ with probability $0.2$; $\delta>0$ is a parameter to be fixed later.
The bound \eqref{LB:STp} in particular holds when $Z$ is deterministic and equal to $z$, i.e., in the adversarial common-base model.
Then, for fixed $Z=z$, we have
$\EE[\hat V_Y(T)]=z+0.8V_X(T_1-z)+0.2V_X(\bE-z).$
We now analyze two different regimes for $\delta$.
First, consider $\delta\le \alpha$, where $\alpha$ is defined as in \Cref{lem:valueofBE}.
Since $\delta\le \alpha$, we have $T_1=(1+\delta)\bE/2\le t^*$, and therefore, \Cref{thm:v-ode} implies that 
\begin{equation}
V_X(T_1-z)=V_X((1+\delta)\bE/2-z)\ge \max(0,(1+\delta)\bE/2-z)=\max(0,T_1-z).\label{eq:betterUB1}
\end{equation}
Then, \eqref{eq:betterUB1} together with \eqref{LB:STp}, implies that
\begin{align*}
\EE[\hat V_Y(T)]&\ge z+0.8\max(0,(1+\delta)\bE/2-z)+0.2\max(0,\min(\bE-z,z))\\
&\ge z+0.8(T_1-z)\mathbf{1}_{z\le T_1}+0.2(\bE-z)\mathbf{1}_{\bE/2\le z\le \bE}+0.2z\mathbf{1}_{0\le z\le \bE/2}\\
&= z+0.8(T_1-z)\mathbf{1}_{z\le T_1}+0.2(\bE-z)\mathbf{1}_{\bE/2\le z\le T_1}+0.2(\bE-z)\mathbf{1}_{T_1\le z\le \bE}+0.2z\mathbf{1}_{0\le z\le \bE/2}\\
        &= (0.4\cdot z + 0.8\cdot T_1)\mathbf{1}_{z\leq {\bE}/{2}}+(0.8\cdot T_1+0.2\cdot\bE)\mathbf{1}_{{\bE}/{2} < z \leq T_1} +(0.8\cdot z+0.2\cdot \bE)\mathbf{1}_{T_1\le z}\\
        &\geq 0.8\cdot T_1 = 0.4(1+\delta)\bE,
\end{align*}
where the last inequality holds since each summand is at least $0.8\cdot T_1$ in the corresponding region.

Now consider $\delta\le 0.2\cdot \gamma(\alpha)$, where $\gamma(\alpha)=(3/16)(1-\alpha)^2/(1+\alpha)$.
Observe that if $z\le \bE/2$, the quasiconcavity of $V_X$ implies that $V_X(\bE-z)\ge \min(V_X(\bE/2),V_X(\bE))\ge \gamma(\alpha)\bE$, where the last inequality holds from \Cref{lem:valueofBE}.
Then, from \eqref{LB:STp}, we have the following inequalities: 
\begin{align*}
V_X(T_1-z)&\ge \max(0,\min(T_1-z,\bE-T_1+z))\\
&=\min(T_1-z,\bE-T_1+z)\mathbf{1}_{T_1-\bE\le z\le T_1}\\
&=((1-\delta){\bE}/{2}+z)\mathbf{1}_{z\le \delta {\bE}/{2}}+((1+\delta){\bE}/{2}-z)\mathbf{1}_{\delta {\bE}/{2}\le z\le T_1},\\
V_X(\bE-z)&\ge \max(z,\gamma(\alpha)\bE)\mathbf{1}_{z\le \bE/2}+\max(0,\bE-z)\mathbf{1}_{z\ge \bE/2}.
\end{align*}
Therefore, overall, we get
\begin{align*}
\EE[\hat V_Y(T)]&\ge z+0.8((1-\delta){\bE}/{2}+z)\mathbf{1}_{z\le \delta {\bE}/{2}}+0.8((1+\delta){\bE}/{2}-z)\mathbf{1}_{\delta {\bE}/{2}\le z\le T_1}\\
&\quad\quad+0.2\max(z,\gamma(\alpha)\bE)\mathbf{1}_{z\le \bE/2}+0.2\max(0,\bE-z)\mathbf{1}_{z\ge \bE/2}\\
&\geq (1.8\cdot z+(0.4-0.4\cdot \delta  + 0.2\cdot\gamma(\alpha))\bE)\mathbf{1}_{z\leq \delta {\bE}/{2}}+(0.4\cdot z+0.4(1+\delta)\bE)\mathbf{1}_{\delta {\bE}/{2} \le z \le {\bE}/{2}}\\
&\quad\quad+ (0.6+0.4\cdot\delta)\bE\cdot \mathbf{1}_{{\bE}/{2}\le z\le T_1}+(0.8\cdot z+0.2\cdot\bE)\mathbf{1}_{T_1\le z}\\
&\geq \min(0.4+0.2\cdot \gamma(\alpha)-0.4\cdot \delta,0.4+0.6\cdot \delta,0.6+0.4\cdot \delta)\bE\\
&\geq (0.4+0.6\cdot \delta)\bE\geq 0.4(1+\delta)\bE,
\end{align*}
where the fourth inequality holds since $\delta\le 0.2\cdot \gamma(\alpha)$.
By taking $\delta=\min_{\alpha\in [0,1]} \max(\alpha,0.2\cdot \gamma(\alpha)) \approx 0.033$, one of the two regimes always holds, yielding a lower bound of $0.4\cdot 1.033\cdot \bE > 0.41\cdot \bE$.
\end{proof}

\section{The Common-Scale Model}\label{sec:upper_bounds}

In this section, we pivot from an additive form of correlation to a multiplicative one. 
The correlations of the resulting model, which we refer to as the \emph{common-scale} model, prove to be significantly harder than those in the additive case.
In the following, we show that, in the common-scale model, neither the prophet nor the secretary objectives admit a non-trivial competitive ratio.
\begin{theorem}\label{thm:prophet_no_const_factor}
For every $n \in \mathbb{N}$ and every $\eps > 0$, there exist independent random variables $X_1,\dots,X_n$ and $Z$, with finite support, such that for $Y_i = ZX_i$ and all stopping times $\tau$, we have
\[
\frac{\E[Y_{\tau}]}{\E[\max_{i\in [n]}Y_i]} \leq \frac{1}{n} + \eps.
\]
\end{theorem}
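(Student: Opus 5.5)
We follow the two-stage strategy outlined in the introduction: first remove $Z$ from the picture by a scale-invariance reduction, then build a hard instance for stopping rules that only see ratios.

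\textbf{Step 1 (reduction to ratio-based stopping times).} Fix $X_1,\dots,X_n$ with finite support in $[1,\infty)$. Take $Z$ uniform on a geometric grid $\{L^{k}: k=1,\dots,K\}$ for a base $L$ and a large $K$. Any stopping time $\tau$ for $Y_1,\dots,Y_n$ depends on the observed prefix $(Y_1,\dots,Y_i)$. We write this prefix as $(Y_1, R_2,\dots,R_i)$ with $R_j=Y_j/Y_1=X_j/X_1$; the ratios $R_j$ are independent of $Z$.

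Conditioned on the ratio vector $R$ (and on $X_1$), $\log_L Y_1$ equals $\log_L X_1$ plus a uniform shift on $\{1,\dots,K\}$. Given $R$, the rule $\tau$ therefore amounts to a family of ratio-based rules $\tau_y$ indexed by the value $y$ of $Y_1$. Moreover, $\E[Y_\tau]$ and $\E[\max_i Y_i]$ both scale by the common factor $Z$.

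We then need to show that averaging over $Z$ lets us replace $\tau$ by a single rule $\tau^\star$ depending only on $R$, losing at most $\eps$ in the ratio. The approach is to pass to the contribution of each $Z$-level, weighted proportionally to $Z$ (i.e.\ by $L^k$). For $y$ away from the boundary of the grid, $\tau_y$ is effectively a ratio rule $\sigma_{k}$, namely the rule used when $Z=L^k$. The payoff ratio is then a weighted average over $k$ of $\E[X_{\sigma_k}]/\E[\max_i X_i]$, up to boundary effects coming from $X_1$'s support spreading over a few grid levels. Choosing $K$ large relative to the support width of $X_1$ makes these boundary effects negligible. Since the ratio is a convex combination, we set $\tau^\star=\sigma_k$ for the best $k$.

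Making this rigorous is delicate, because the weights $L^k$ concentrate on the top levels. We would instead choose $Z$ with $\Pr[Z=L^k]\propto L^{-k}$, a discretized Pareto as in Ferguson's construction. Then $Z$-weighted levels are uniform, and shifts by $X_1$ act as near-translations on a long uniform window.

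\textbf{Step 2 (hard instance for ratio rules).} By Step 1, it suffices to construct $X_1,\dots,X_n$ such that every stopping rule observing only $X_i/X_1$ satisfies $\E[X_{\tau^\star}]\le(1/n+\eps)\E[\max_i X_i]$. Following the sketch in the introduction, let each $X_i$ equal $L^M$ with probability $\mu$. With probability $1-\mu$, let $X_i$ be uniform on a set $J_i$ of values $L^{m}$ with $m<M$, where the sets $J_i$ are increasingly concentrated near $L^M$ (e.g.\ $J_i$ consists of exponents in a window $[M-c_i, M-1]$ with $c_i$ shrinking in $i$, chosen so that $J_i$ is much larger in scale than $J_{i-1}$).

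Take $L,M$ large so that $\E[\max_i X_i]\approx L^M(1-(1-\mu)^n)\approx n\mu L^M$, with non-top values negligible. Any rule then earns essentially $L^M\Pr[X_{\tau^\star}=L^M]$. The key claim is that $\Pr[X_{\tau^\star}=L^M]\le\mu(1+\eps)$. This holds because at every step the observed ratio $X_i/X_1$ has nearly the same conditional distribution whether $X_i=L^M$ or $X_i\in J_i$: by the concentration of $J_i$ near $L^M$ and the uniform spread of $X_1$, the likelihood ratio is close to $1$. Stopping at any time therefore catches the top value with probability about $\mu$, giving a ratio of about $\mu/(n\mu)=1/n$.

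\textbf{Main obstacle.} We expect the hardest step to be making the indistinguishability in Step 2 quantitative while also handling the dependence created by the shared $X_1$. One way to remove this dependence is to make $X_1$ itself spread over many levels, which makes all subsequent ratios behave like fresh uniform shifts. Once that is in place, a total-variation argument gives $\Pr[X_{\tau^\star}=L^M]\le\mu+O(\eps)$, and $\mu$ is then chosen small so that $1-(1-\mu)^n\approx n\mu$.
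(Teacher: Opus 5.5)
Your overall route is the paper's. A Pareto-type $Z$ with $\Pr[Z=L^k]\propto L^{-k}$ reduces the problem to rules that see only the ratios $X_i/X_1$. The nested-window instance, with a top value $L^M$ of probability $\mu$, then defeats those rules.

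In Step 1, keeping the single best ratio rule is a valid, slightly more direct alternative to the paper's $\delta^\star$ (an average of $\delta$ over shifts, followed by a minimum over consistent shifts). However, the rules must be indexed by the observed level $y=\log_L Y_1$, not by the level of $Z$. For fixed $y$, the map $R\mapsto\tau(y,R)$ is a ratio rule, and with Pareto weights $\E[ZX_\tau]/\E[Z]=\frac1K\sum_y g(y)$. Here $g(y)=\E[X_{\tau(y,\cdot)}]$ for all but $O(w)$ boundary levels ($w$ the width of the support of $\log_L X_1$), and $g(y)\le\E[\max_i X_i]$ at those boundary levels. The rule ``used when $Z=L^k$'' is not a ratio rule, because it still sees the level of $X_1$.

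The genuine gap is in Step 2, in two places.

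\textbf{Quantifiers.} An additive bound $\Pr[X_{\tau^\star}=L^M]\le\mu+O(\eps)$, followed by choosing $\mu$ small, gives a ratio of about $(\mu+O(\eps))/(n\mu)$, which does not tend to $1/n$. Every error term must be $o(\mu)$. This forces the shrink factor $\alpha$ of the windows ($|J_i|=\alpha|J_{i-1}|$) to be $o(\mu)$. It also forces $L\gg 1/\mu$, chosen after $\mu$, so that non-top values are negligible against $n\mu L^M$. The paper takes $\alpha=\mu^2$ and $L=\mu^{-2}$.

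\textbf{Informative histories.} Your claim that the likelihood ratio is close to $1$ at every step is false on some histories, so a uniform per-step total-variation bound cannot carry the argument.
\begin{itemize}
\item A jump $D_{k+1}=\xi_{k+1}-\xi_k\ge u_{k+1}$ can only occur when $\xi_k$ lies in the bottom $\alpha$-fraction of $J_k$, so it is informative.
\item A non-positive jump $D_{k+1}\le 0$ reveals that, with high probability, $\xi_k=M$. After that, every later $\xi_i=M+\sum_{j=k+1}^{i}D_j$ is known exactly, so the rule can stop at a second top with certainty.
\end{itemize}

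The paper handles this as follows. On the good histories $1\le d_i\le u_i=M\alpha^{i-2}(1-\alpha)$, the observed differences are \emph{exactly} independent of $(\xi_k,\dots,\xi_n)$, so the posterior probability of a top is exactly $\mu$. This is where $J_1=\{0,\dots,M-1\}$ and the precise nesting enter; your ``spread $X_1$'' remedy points at this. The bad histories are charged separately:
\begin{itemize}
\item $\Pr[D_{k+1}\ge u_{k+1}]\le\alpha$;
\item $\Pr[D_{k+1}\le 0,\ \xi_k<M]\le\alpha$;
\item the post-top case is contained in the event of two tops, which has probability at most $n^2\mu^2$.
\end{itemize}
A backward induction over $k$ then gives $\Pr[\xi_{\tau^\star}=M]\le\mu+n(2\alpha+n^2\mu^2)=\mu(1+O(n^3\mu))$. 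This is exactly the multiplicative form your key claim $\Pr[X_{\tau^\star}=L^M]\le\mu(1+\eps)$ requires.
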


\begin{theorem}\label{thm:secretary_no_const_factor}
For every $n \in \N$ and every $\mu > 0$ there exist independent random variables $X_1,\dots,X_n$ and $Z$, with finite support, such that for $Y_i = ZX_i$ and all stopping times $\tau$, we have
\[
\Pr\brk{Y_{\tau} = \max_{i\in [n]} Y_i} = O\left(\frac{1}{n^{1-\mu}}\right).
\]
\end{theorem}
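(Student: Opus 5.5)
We follow the two-stage strategy from the introduction and reuse the machinery behind \Cref{thm:prophet_no_const_factor}.

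\textbf{Step 1: reduction to scale-invariant stopping times.} Fix finitely supported positive $X_1,\dots,X_n$. We choose $Z$ to be a discretized log-uniform variable on a very long geometric range, say $Z=L^{K}$ with $K$ uniform on $\{1,\dots,R\}$, where $L$ is the base of the support of the $X_i$ and $R\to\infty$. Conditioned on the ratio vector $(Y_2/Y_1,\dots,Y_n/Y_1)=(X_2/X_1,\dots,X_n/X_1)$, the scale $Y_1$ is nearly uniformly shifted in log-scale. Consequently, for most realizations, observing $Y_1$ reveals essentially nothing about $X_1$ beyond what the ratios reveal. We then show that any stopping time $\tau$ on $Y_1,\dots,Y_n$ can be replaced by a randomized stopping time $\tau^\star$ that depends only on the ratios $X_i/X_1$, and that $\Pr[Y_\tau=\max_i Y_i]\le \Pr[X_{\tau^\star}=\max_i X_i]+o_R(1)$. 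The event of stopping at the maximum is scale-invariant, so the averaging argument is actually cleaner here than for expectations: $\tau^\star$ averages $\tau$'s decision over the uniform shift of $\log Z$, and only the boundary shifts, of probability $O(1/R)$, break the invariance. We obtain this step by the same multiplicative-invariance lemma used for \Cref{thm:prophet_no_const_factor}.

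\textbf{Step 2: the construction.} Now $\tau^\star$ sees only relative information, and we must defeat it. We take i.i.d.-style ``record'' structure. Let $X_i=L^{G_i}$, where the $G_i$ are integer exponents chosen so that the process of ratios is uninformative about whether the current value is the overall maximum. A natural choice is the following. Let each $X_i$ equal the top value $L^M$ with some probability $\mu_i$, and otherwise be uniform on a set $J_i$ of exponents that get increasingly close to $M$ as $i$ grows, exactly as in the prophet construction. Then each observation is almost surely a large jump over the previous one, so a jump to $L^M$ is indistinguishable, up to a factor, from a jump into $J_i$.

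\textbf{Step 3: the probability bound.} Alternatively, and more simply for the secretary objective, choose the $G_i$ so that $X_1,\dots,X_n$ look like a random permutation whose relative ranks are all that the ratios reveal. Tune the probabilities so that, given the history, the chance that the current element is the overall maximum is at most about $n^{-(1-\mu)}$ at every step. For example, take the log-values on a scale where the ratios of successive records carry probability mass spread over $\approx n^{\mu}$-many indistinguishable exponent levels. Then, by the standard argument that $\Pr[\text{stop at max}]\le\sum_i \Pr[\tau^\star=i]\max_{h}\Pr[X_i=\max\mid h]$, the success probability is $O(n^{-(1-\mu)})$.

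\textbf{Main obstacle.} The hard part is Step 3: designing the exponent distributions so that the conditional probability of ``current equals maximum'' stays uniformly small over all histories. Unlike the classical secretary problem, a large observed ratio is itself evidence of a record. We first try the geometric nesting of the sets $J_i$ toward $M$, choosing the nesting depth as $n^{\mu}$ levels so that the likelihood ratio between ``top'' and ``in $J_i$'' is bounded. We then verify the conditional bound by direct Bayes computation.
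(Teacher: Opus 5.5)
Your Step~1 is the paper's reduction (\Cref{lem:mult-invariant}, i.e., \Cref{prop:mult_inv} with $\alpha\equiv 1$ and $\beta$ the indicator of stopping at the maximum), so that part is fine. The gap is that Steps~2--3 never produce a working instance, and both candidates you mention fail.

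The ``random permutation'' option defeats itself. The ratios determine relative ranks, so if the arrival order looked uniformly random, the classical rank-based $1/e$ rule would be a multiplicative-invariant rule that succeeds with constant probability. A hard instance therefore has to be far from exchangeable.

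The prophet-style instance with a \emph{common} top value $L^M$ also fails, because ties count as success. With the $J_i$ nested as you describe, a top at index $i$ is exposed one step later by $D_{i+1}\le 0$; without a top this event has probability only $O(\alpha)$. From then on the algorithm knows the level $M$ relative to $\xi_1$. In the no-top event, every observation is w.h.p.\ a jump above the previous one, so $X_n$ is the maximum. Hence the invariant rule ``wait until a top is exposed, then stop at the next $j$ with $\xi_j=\xi_i$; otherwise stop at $n$'' loses essentially only when exactly one top occurs. That has probability at most about $1/e$ if all top probabilities are small, and if one of them is not small, stopping at that fixed index already wins.

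The missing idea is the paper's choice of strictly \emph{decreasing} tops $\xi_i^\star=M+n-i$. The first top is then the unique maximum, so detecting it afterwards is worthless. The remaining parameters are $\kappa(n)=n^{-1+\mu}$, which makes some top occur except with probability $(1-\kappa(n))^n\le e^{-n^{\mu}}$, together with $\alpha=n^{-2}$ and $n/(\alpha^nM)\le\alpha$.

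Your analysis criterion in Step~3 is also unattainable as stated. You ask that $\Pr[X_i=\max_j X_j\mid h]\le n^{-(1-\mu)}$ for \emph{every} history $h$. But at $i=n$ the ratios decide whether $X_n$ is the maximum, so this probability is $0$ or $1$. In a nested construction of the kind you sketch, histories with atypical jumps (negative, very small, or very large) are informative too.

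The paper does not bound the conditional success uniformly; it tolerates rare informative histories. It conditions only on uninformative histories $n\le d_i\le u_i$, under which $\calE(d_2,\dots,d_k)$ is independent of $\xi_k,\dots,\xi_n$ (\Cref{lem:secretary_independence}). By backward induction on $k$ it shows that $\Pr[\xi_{\tau^{\star}\lor k}=\max_i\xi_i\wedge W\mid\calE(d_2,\dots,d_k)]\le\kappa(n)+4(n-k)\alpha$. The three ingredients are:
\begin{itemize}
\item stopping at $k$ wins with conditional probability at most $\Pr[\xi_k=\xi_k^\star]=\kappa(n)$;
\item a negative jump right after a top is harmless, since no later value can then be maximal;
\item every other exit from the uninformative regime has probability at most $4\alpha$ (\Cref{lem:informative_events_unlikely}).
\end{itemize}
This yields $\Pr[\text{success}]\le\kappa(n)+4/n+(1-\kappa(n))^n=O(n^{-1+\mu})$. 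You need both this decreasing-top construction and this exception-tolerant induction in place of the uniform conditional bound.
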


We remark that the upper bounds in Theorems~\ref{thm:prophet_no_const_factor} and \ref{thm:secretary_no_const_factor} are essentially tight. For the prophet objective, the constant stopping rule that stops at time $i$, for which $\E[ZX_i]$ is maximal, achieves a competitive ratio of $1/n$. Similarly, by stopping at the time $i$ maximizing $\Pr\brk{X_i = \max_{j\in [n]} X_j}$, we get a competitive ratio of $1/n$ for the secretary objective.

We briefly describe our approach to proving Theorems~\ref{thm:prophet_no_const_factor} and~\ref{thm:secretary_no_const_factor}. Note that a simple subclass of stopping times $\tau$ consists of those depending only on the quotients $X_i/X_1$, since $Y_i/Y_1 = X_i/X_1$. We call such stopping times \emph{multiplicative invariant}.  In \Cref{sec:mult-invariance}, we prove that, given fixed random variables $X_1,\dots,X_n$, we can find a $Z$ such that multiplicative invariant stopping times perform almost as well as general stopping times. In the second step, we construct random variables $X_1,\dots,X_n$ for which multiplicative invariant stopping times perform badly. Combining these results directly leads to the proofs of Theorems~\ref{thm:prophet_no_const_factor} and~\ref{thm:secretary_no_const_factor}, in Sections~\ref{sec:hard_prophet} and~\ref{sec:hard_secretary}, respectively.
For the rest of this section, we assume that the supports of $Z$ and $X_1,\ldots,X_n$ are finite and that there exists an $L > 1$ such that all supports are contained in $L^{\mathbb Z} \coloneqq \set{L^z \midd z \in \ZZ}.$

\subsection{Multiplicative Invariant Stopping Times}\label{sec:mult-invariance}

Given random variables $Z$ and $X_1,\dots,X_n$, let $Y_i = ZX_i$ for every $i\in [n]$ and $\calT$ denote the set of all stopping times $\tau$ on the $Y_i$'s. Given $\tau \in \calT$, for every $k \in [n]$ and $y_1,\dots,y_k \in \mathbb Z$, whenever $(L^{y_1},\dots,L^{y_k})$ is in the support of $(Y_1,\dots,Y_k)$, we define
\begin{equation}\label{eq:delta-1}
\delta(y_1,\dots,y_k) = \Pr\brk{\tau = k \midd Y_1=L^{y_1},\dots,Y_k=L^{y_k}}.
\end{equation}
If $(L^{y_1},\dots,L^{y_k})$ is not in the support of $(Y_1,\dots,Y_k)$, we let $\delta(y_1,\dots,y_k)$ have an arbitrary value in $[0,1]$, subject to
\begin{equation}\label{eq:delta-2}
\textstyle\sum_{k=1}^n \delta(y_1,\dots,y_k) \leq 1, \quad \text{ for all } y_1,\dots,y_n \in \mathbb Z.
\end{equation}
Observe that, in the latter case, we can always choose $\delta(y_1,\dots,y_k) = 0$ to meet condition \eqref{eq:delta-2}. If $\tau$ and $\delta $ satisfy \eqref{eq:delta-1} and \eqref{eq:delta-2}, we say that $\tau$ induces $\delta$. Conversely, given an arbitrary $\delta$ assigning non-negative values to all tuples $(y_1,\dots,y_k)$ and satisfying \eqref{eq:delta-1} and \eqref{eq:delta-2}, there exists a stopping time $\tau$ inducing $\delta$. To see this, suppose we define $\tau$ in the following way: upon observing $Y_1=L^{y_1},\dots,Y_k=L^{y_k}$, we stop with probability
\[
\frac{\delta(y_1,\dots,y_k)}{1-\sum_{j=1}^{k-1}\delta(y_1,\dots,y_j)},
\]
and observe that \eqref{eq:delta-1} and \eqref{eq:delta-2} are satisfied by construction.
We are now ready to formally define multiplicative invariant stopping times.
\begin{definition}\label{def:mult_inv}
A stopping time $\tau^{\star} \in \calT$ is \emph{multiplicative invariant}, if it induces a $\delta^{\star}$, such that $\delta^{\star}(y_1,\dots,y_k) = \delta^{\star}(y_1+1,\dots,y_k+1)$ for every $k\in [n]$ and every $y_1,\dots,y_k \in \mathbb Z$. We denote the set of all multiplicative invariant stopping times by $\calT^{\star}$.
\end{definition}
Notice that \Cref{def:mult_inv} is equivalent to requiring that $\delta^{\star}(y_1+z,\dots,y_k+z) = \delta^{\star}(y_1,\dots,y_k)$ for all $z \in \ZZ$. In other words, multiplicative invariant stopping times depend only on the quotients $L^{y_i}/L^{y_1}$, or, equivalently, on the differences $y_i-y_1$; if $y_i-y_1 = y_i'-y_1'$ for all $i\in [k]$, then $y_i = y_i'-(y_1'-y_1)$, and for $\tau^{\star}$ multiplicative invariant,
\[
\delta^{\star}(y_1,\dots,y_k) = \delta^{\star}( y'_1-(y'_1-y_1),\dots,y_k'-(y_1'-y_1)) = \delta^{\star}(y_1',\dots,y_k').
\]
Conversely, if $\delta^{\star}(y_1,\dots,y_k) = \delta^{\star}(y_1',\dots,y_k')$ whenever $y_i-y_1 = y_i'-y_1'$, we can choose $y_i' = y_i+z$ to get
$\delta^{\star}(y_1+z,\dots,y_k+z) = \delta^{\star}(y_1',\dots,y_k') = \delta^{\star}(y_1,\dots,y_k).$
Importantly, since $\tau^{\star}$ is determined by the quotients $Y_i/Y_1 = X_i/X_1$, and the $X_i$ are independent of $Z$, $\tau^{\star}$ is also independent of $Z$, and so is $X_{\tau^{\star}}$.

Our first goal is to show that, for both the prophet and the secretary objectives, one can restrict their attention to multiplicative invariant stopping times.
\begin{lemma}\label{lem:mult-invariant}
Let $X_1,\ldots,X_n$ be random variables with bounded supports in $L^{\ZZ}$.
Then, for every $\eps>0$, there exists a corresponding random variable $Z$ with bounded support in $L^{\ZZ}$, such that for $Y_i = ZX_i$,
\begin{equation}\label{eq:prophet_mult_invariant}
\sup_{\tau \in \mathcal T} \frac{\E[Y_{\tau}]}{\E\brk{\max_{i\in [n]} Y_i}} \leq \sup_{\tau^{\star} \in \mathcal T^{\star}}\frac{\E[Y_{\tau^{\star}}]}{\E\brk{\max_{i\in [n]} Y_i}} + \eps,
\end{equation}
and
\begin{equation}\label{eq:secretary_mult_invariant}
\sup_{\tau \in \mathcal T} \Pr\brk{Y_{\tau} = \textstyle \max_{i\in [n]} Y_i} \leq \sup_{\tau^{\star} \in \mathcal T^{\star}} \Pr\brk{Y_{\tau^{\star}} = \textstyle \max_{i\in [n]} Y_i} + \eps.
\end{equation}
\end{lemma}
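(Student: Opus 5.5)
The plan is to take $Z = L^{U}$, with $U$ uniform on a long window $\{0,1,\dots,K-1\}$, and to produce a multiplicative invariant stopping time by averaging an arbitrary $\tau$ over all shifts. Write $x_i=\log_L X_i$, $y_i=\log_L Y_i = U + x_i$, and let $D$ bound all $|x_i|$. An observed log-sequence $(y_1,\dots,y_k)$ determines the differences $y_i-y_1=x_i-x_1$. The only other information it carries about $Z$ is through the absolute level $y_1 = U + x_1$. If $U$ is uniform on a window much longer than $D$, then conditionally on the differences and on $y_1$ lying in the ``bulk'' $[D, K-1-D]$, the posterior of $U$ is just a shift of the posterior of $x_1$. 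So $y_1$ is essentially uninformative apart from boundary effects of relative size $O(D/K)$.

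Step 1 (symmetrization). Given $\tau$ with induced $\delta$, define
\[
\delta^{\star}(y_1,\dots,y_k)=\frac1K\sum_{z=0}^{K-1}\delta(y_1-y_1+z,\,y_2-y_1+z,\dots,y_k-y_1+z).
\]
This averages $\delta$ over all placements of the observed difference pattern at levels $z$ in the window. It is shift-invariant by construction. It also satisfies \eqref{eq:delta-2}, since \eqref{eq:delta-2} holds for each fixed $z$ along the full sequence and averaging preserves it. Hence, by the converse discussed before Definition~\ref{def:mult_inv}, it is induced by some $\tau^{\star}\in\calT^{\star}$.

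Step 2 (comparison). Write $\E[Y_\tau]=\sum_{k}\sum_{\mathbf y}\Pr[\mathbf Y_{1:k}=\mathbf y]\,\delta(\mathbf y)\,\E[Y_k\mid \mathbf Y_{1:k}=\mathbf y]$. Then $\Pr[\mathbf Y_{1:k}=L^{\mathbf y}]=\frac1K\Pr[\mathbf X_{1:k}=L^{\mathbf y-u}]$ summed over valid $u$, and $Y_k=L^{y_k}$ is deterministic given the observations. Regrouping by difference pattern $\mathbf d$ and level $y_1$: in the bulk the weight of the pair $(\mathbf d,y_1)$ is $\frac1K\Pr[\mathbf X_{1:k}-x_1=\mathbf d]$, independent of $y_1$, while the payoff $L^{y_1+d_k}$ scales by $L^{y_1}$. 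The payoff depends on $y_1$, so a plain average of $\delta$ does not compare directly. I would instead normalize both sides by $\E[\max_i Y_i]=\E[Z]\,\E[\max_i X_i]$ (by independence), which also scales like $L^{U}$. The cleaner choice is therefore to take $Z$ with $\Pr[U=u]\propto L^{-u}$ on the window. Then weight times payoff becomes level-independent in the bulk, and $\E[Y_\tau]$ becomes $\sum_{\mathbf d}\sum_{y_1}c_{\mathbf d}\,\delta(\mathbf d,y_1)$ plus boundary terms. With the correspondingly weighted average in Step 1 (here uniform over $y_1$), we get $\E[Y_{\tau^{\star}}]=\E[Y_\tau]$ up to boundary error. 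The secretary objective is handled the same way using uniform $U$, since the event $\{Y_\tau=\max Y_i\}$ depends only on differences and the payoff is then level-free.

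Main obstacle. The hardest part is controlling the boundary terms: levels $y_1$ within $D$ of the window edges, where the posterior of $U$ is truncated and the adversarial $\tau$ could exploit knowledge of $Z$. These contribute relative mass $O(D/K)$ of the normalizer. Showing this requires that no single level carries disproportionate weight, which the flat (weighted) design ensures, together with $Y_k\le \max_i Y_i$ to bound the lost contribution by the normalized mass. Choosing $K\ge CD/\eps$ then gives both \eqref{eq:prophet_mult_invariant} and \eqref{eq:secretary_mult_invariant}.
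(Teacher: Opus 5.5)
Your proposal is correct and follows essentially the paper's route. Your choice $\Pr[U=u]\propto L^{-u}$ for the prophet objective (uniform $U$ for the secretary objective) is exactly the paper's $\calD_m$ with $\alpha$ the identity (resp.\ constant); averaging $\delta$ over shifts and absorbing an $O(nD/K)$ boundary error is the content of Proposition~\ref{prop:mult_inv}, and the only difference is cosmetic: you anchor the average at $y_1$, which makes $\delta^{\star}$ shift-invariant by construction, whereas the paper averages over absolute shifts and then takes a minimum over support-consistent shifts to enforce invariance.
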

Before we prove \Cref{lem:mult-invariant}, we provide some intuition as to why it is true. Consider the prophet setting, where we seek to maximize $\E[ZX_{\tau}]$, and suppose that we can choose the distribution of $Z$ such that, in order for $\tau$ to perform well, $\tau$ must perform well for every realization of $Z$. If that is the case, then given $\tau$, we can construct a new, multiplicative invariant stopping time $\tau^{\star}$ that also performs well, in the following way: in the beginning of the game, draw a $Z' \in L^{\ZZ}$ uniformly at random from a bounded subset of $L^{\ZZ}$. Then, at each step, upon observing $Y_i$, present $Z' Y_i = (Z' Z) X_i$ to $\tau$, and stop whenever $\tau$ stops. Now by our assumption, $\tau$ would also perform well when the realization of $Z$ was the same as that of $Z'$. Hence, $\tau^{\star}$ should perform as well as $\tau$. Notice that, in some sense, we have decreased the dependence of $\tau^{\star}$ on $Z$. It turns out that, when choosing the support of $Z$ very large compared to the support of the $X_i$, this dependence vanishes in the limit and $\tau^{\star}$ becomes multiplicative invariant.

\Cref{lem:mult-invariant} follows from the more general \Cref{prop:mult_inv} that we present next. Let $\alpha: L^{\mathbb{Z}} \rightarrow \mathbb R_{> 0}$ be an arbitrary function, and $\beta: L^{\mathbb{Z}} \times \prn{L^{\mathbb{Z}}}^n \rightarrow \mathbb R_{> 0}$ be a bounded function. For $m \in \mathbb N$, let $\calD_m$ be the probability distribution of $Z$ where
\[
\Pr\brk{Z = L^z} = \frac{1}{\sum_{w = -m}^m {\alpha(L^z)}/{\alpha(L^w)}}, \quad \text{ for all } z \in \set{-m, \dots, m}.
\]
Observe that $\calD_m$ is a probability distribution for any $\alpha$ and $m$ since, for $\lambda_m = \sum_{w = -m}^m {1}/{\alpha(L^w)}$, we have
\[
\sum_{z = -m}^m \Pr\brk{Z = L^z} = \sum_{z = -m}^m \left(\sum_{w = -m}^m \frac{\alpha(L^z)}{\alpha(L^w)}\right)^{-1} = \sum_{z = -m}^m \frac{1}{\lambda_m \alpha(L^z)} = \frac{\lambda_m}{\lambda_m} = 1.
\]
\begin{proposition}\label{prop:mult_inv}
For all $\alpha: L^{\mathbb{Z}} \rightarrow \mathbb R_{> 0}, \beta: L^{\mathbb{Z}} \times \prn{L^{\mathbb{Z}}}^n \rightarrow \mathbb R_{> 0}$ and $m \in \NN$, let $Z$ be a random variable distributed according to $\calD_m$.
Then, for every $n\in \NN$ and independent random variables $X_1,\ldots,X_n$ with bounded support,\footnote{While we index $X_i$ with $\tau$ here instead of $Y_i=ZX_i$, $\{\tau=k\}$ is still determined by the variables $Y_1,\dots,Y_k$.} we have
\[
\sup_{\tau \in \mathcal T} \frac{\E[\alpha(Z)\beta(X_{\tau};X_1,\dots,X_n)]}{\E[\alpha(Z)]} \leq \sup_{\tau^{\star} \in \mathcal T^{\star}} \E[\beta(X_{\tau^{\star}};X_1,\dots,X_n)] + \frac{C(X,n,\beta)}{m}.
\]
$C(X,n,\beta)$ depends on the random variables $X_1,\dots,X_n$, $n$, and the function $\beta$, but not on $\alpha$ or $m$.
\end{proposition}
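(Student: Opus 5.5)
Fix $\tau\in\calT$ with induced $\delta$, and write $z,x_i\in\ZZ$ for the exponents, so $Z=L^z$, $X_i=L^{x_i}$ and $Y_i=L^{z+x_i}$. The plan is to write the left-hand ratio as an average over $z$ of the performance of $\tau$ on the translated inputs, and to show that this average is the value of a single multiplicative invariant stopping time, up to boundary terms of order $1/m$.

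\textbf{Step 1 (reweighting).} By the definition of $\calD_m$, $\Pr[Z=L^z]=1/(\lambda_m\alpha(L^z))$. Hence $\alpha(Z)\Pr[Z=L^z]=1/\lambda_m$ is constant in $z$, and $\E[\alpha(Z)]=(2m+1)/\lambda_m$. Conditioning on $Z$ and using independence of $Z$ and $X$, we get
\[
\frac{\E[\alpha(Z)\beta(X_\tau;X)]}{\E[\alpha(Z)]}=\frac{1}{2m+1}\sum_{z=-m}^m \E_X\Big[\sum_{k=1}^n \delta(z+x_1,\dots,z+x_k)\,\beta(L^{x_k};X)\Big].
\]
This is exactly the uniform average over $z\in\{-m,\dots,m\}$ of the performance of the rule that stops according to $\delta$ on the shifted sequence.

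\textbf{Step 2 (averaged rule).} Let $a\in\ZZ$ be the minimum exponent in the supports of the $X_i$, and let $K$ be the width of their union. Define
\[
\delta^\star(y_1,\dots,y_k)=\frac{1}{2m+1}\sum_{z=-m}^{m}\delta\bigl(z+a+(y_1-y_1),\,z+a+(y_2-y_1),\dots\bigr),
\]
that is, the average of $\delta$ over shifts $z$, applied to the normalized sequence whose first exponent is replaced by $a$. Then $\delta^\star$ depends only on the differences $y_i-y_1$, so it is shift invariant. Since averaging preserves $\delta\ge0$ and $\sum_k\delta\le1$, $\delta^\star$ is a valid rule, and by the converse construction stated before \Cref{def:mult_inv} it is induced by some $\tau^\star\in\calT^\star$. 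A detail to handle: $\delta$ is only pinned down on the support, and off the support it is an arbitrary admissible value. This is harmless because condition \eqref{eq:delta-2} holds everywhere.

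\textbf{Step 3 (compare).} On a realization $x$, the true sequence in Step 1 uses the shift $z+x_1$ applied to the normalized differences, while $\tau^\star$ uses $z+a$. Both sums range over a window of $2m+1$ consecutive integers, and the two windows differ by the offset $x_1-a\in[0,K]$. So the two averages share all but at most $2K$ of their terms. Each non-shared term contributes at most $\|\beta\|_\infty$ in absolute value, since $\sum_k\delta\le1$ and $\beta$ is bounded. The difference is therefore at most $2K\|\beta\|_\infty/(2m+1)\le C/m$ with $C=K\|\beta\|_\infty$. Taking expectation over $X$ and the supremum over $\tau$ gives the claim.

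\textbf{Main obstacle.} The delicate point is the bookkeeping in Step 3. The window shift must be uniform in the realization, and this is what bounded support of $X$ guarantees. It must also be checked that $\delta$ evaluated at the shifted tuples lies in the correct domain; tuples outside the support are covered by the admissible-extension remark. The bound $C(X,n,\beta)$ depends only on the support width and on $\sup\beta$, not on $\alpha$ or $m$, as required.
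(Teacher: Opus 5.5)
Your proof is correct and follows the paper's strategy. You reweight so that $\alpha(L^z)\Pr[Z=L^z]$ is constant, average $\delta$ over a window of $2m+1$ shifts, make the averaged rule exactly shift-invariant, and pay only for the boundary shifts. Bounded support makes that boundary cost $O(1/m)$. The one difference is how you obtain exact invariance. You evaluate $\bar\delta$ at the canonical representative whose first exponent is $a$, whereas the paper takes $\min_{z\in S(x_1,\dots,x_k)}\bar\delta(z+x_1,\dots,z+x_k)$ over support-consistent shifts. Your choice makes \eqref{eq:delta-2} immediate, since the shift $z+a-y_1$ does not depend on $k$. Grouping the boundary error by shift rather than by $k$ also gives you the constant $K\|\beta\|_\infty$ instead of the paper's $cn\|\beta\|_\infty$.
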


Intuitively, we will use $\alpha$ to relate $Z$ and $X_1, \dots, X_n$ and $\beta$ to denote the benchmark of a stopping time, i.e., the stopping time's competitive ratio or the probability that the stopping time equals the maximum realization.
We present the proof of the proposition, and then show how to prove \Cref{lem:mult-invariant}.
\begin{proof}[Proof of \Cref{prop:mult_inv}]
Let $\tau$ and $m$ be arbitrary. 
We will construct a multiplicative invariant stopping time $\tau^{\star}$ performing almost as well as $\tau$ in maximizing $U(\tau,m)$, where
$$U(\tau,m)=\frac{\E[\alpha(Z)\beta(X_{\tau};X_1,\dots,X_n)]}{\E[\alpha(Z)]}.$$
Let $\delta$ be induced by $\tau$ as described in the beginning of \Cref{sec:mult-invariance}.
For arbitrary $x_1,\dots,x_k \in \mathbb Z$, let
\begin{align*}
\bar \delta(x_1,\dots,x_k) &= \frac{1}{2m+1}\sum_{z=-m}^m \delta(z+x_1,\dots,z+x_k),\\
S(x_1,\dots,x_k) &= \Big\{z\in\mathbb{Z}\,:\, L^{z}L^{x_i} \in \supp(X_i) \text{ for every } i\in [k]\Big\}.
\end{align*}
We define a multiplicative invariant stopping time by setting
\[
\delta^{\star}(x_1,\dots,x_k) = 
\begin{cases}
    \min_{z \in S(x_1,\dots,x_k)} \bar \delta(z+x_1,\dots,z+x_k) &\text{ if } S(x_1,\dots,x_k) \neq \emptyset,\\
    0 &\text{ otherwise.}
\end{cases}
\]
Let us check that $\delta^{\star}$ is indeed induced by a multiplicative invariant stopping time. Observe that $z \in S(x_1,\dots,x_n) \Leftrightarrow z-1 \in S(x_1+1,\dots,x_n+1)$. Thus, when $S(x_1,\dots,x_k) \neq \emptyset$,
\begin{align*}
    \delta^{\star}(x_1,\dots,x_k) &= \min_{z \in S(x_1,\dots,x_k)} \bar \delta(z+x_1,\dots,z+x_k)\\&= \min_{z-1 \in S(x_1+1,\dots,x_k+1)} \bar \delta(z+x_1,\dots,z+x_k) \\&= \min_{z \in S(x_1+1,\dots,x_k+1)} \bar \delta(z+1+x_1,\dots,z+1+x_k) = \delta^{\star}(x_1+1,\dots,x_k+1).
\end{align*}
It is also easy to check that $\sum_{k=1}^n \delta^{\star}(x_1,\dots,x_k) \leq 1$ for all $x_1,\dots,x_n \in \mathbb Z$.
Thus, $\delta^{\star}$ is induced by a multiplicative invariant stopping time $\tau^{\star}$.

For any $L^{x_1},\dots,L^{x_k}$ in the support of $(X_1,\dots,X_k)$, there is $z \in S(x_1,\dots,x_k)$ such that
$\delta^{\star}(x_1,\dots,x_k) = \bar \delta(x_1+z,\dots,x_k+z),$
so we have 
\begin{align*}
    \delta^{\star}(x_1,\dots,x_k) &= \frac{1}{2m+1}\sum_{z'=-m}^m \delta(x_1+z+z',\dots,x_k+z+z')\\
    &=\frac{1}{2m+1}\sum_{z'=z-m}^{z+m} \delta(x_1+z',\dots,x_k+z').
\end{align*}
Consequently, we can write $\delta^{\star}(x_1,\dots,x_k)-\bar \delta(x_1,\dots,x_k)$ as
\begin{align*}
\frac{1}{2m+1}\left(\sum_{z'=z-m}^{z+m} \delta(x_1+z',\dots,x_k+z')-\sum_{z'=-m}^m\delta(x_1+z',\dots,x_k+z')\right).
\end{align*}
There are exactly $2|z|$ indices $z'$ appearing in one sum but not in the other one, so we can bound
\[
|\delta^{\star}(x_1,\dots,x_k)-\bar \delta(x_1,\dots,x_k)| \leq \frac{2|z|}{2m+1} \leq \frac{c}{m}, \;\text{ for every } (L^{x_1},\dots,L^{x_k}) \in \supp (X_1,\dots,X_k),
\]
where $c = \max_{x_1 \in \supp X_1} \max_{z \in S(x_1)} |z|.$
Note that $c < \infty$ because $\supp X_1$ is finite.

For $\mathbf x \in \mathbb Z^n$, write $f(\mathbf x) = \prod_{i=1}^n \Pr\brk{X_i = L^{x_i}}$ and $L^{\mathbf x} = (L^{x_1},\dots,L^{x_n})$. Then we can write
\begin{align*}
\Pr\brk{X_i = L^{x_i} \text{ for all } i \in [n] \wedge \tau^{\star}} = k) &= \Pr\brk{\tau^{\star} = k \mid X_i = L^{x_i}, \, \forall i \in [n]} \cdot \Pr\brk{X_i = L^{x_i}, \, \forall i \in [n]} \\
&= \Pr\brk{\tau^{\star} = k \mid X_i=L^{x_i}, \, \forall i \in [k]} \cdot f(\mathbf x) \\
&= \delta^{\star}(x_1,\dots,x_k)f(\mathbf x),
\end{align*}
where we used that the event $\{\tau = k\}$ only depends on the random variables $X_1,\dots,X_k$ in the second-to-last equation. So we can write
\begin{align*}
\E[\beta(X_{\tau^{\star}};X_1,\dots,X_n)] &= \sum_{\mathbf x \in \mathbb Z^n}\sum_{k=1}^n \beta(L^{x_k}; L^{\mathbf x}) \Pr\brk{X_i = L^{x_i} \, \forall i \in [n] \wedge \tau = k} \\
&= \sum_{\mathbf x \in \mathbb Z^n}\sum_{k=1}^n \beta(L^{x_k};L^{\mathbf x}) \delta^{\star}(x_1,\dots,x_k)f(\mathbf x) \\ 
&\geq \sum_{\mathbf x \in \mathbb Z^n}\sum_{k=1}^n \beta(L^{x_k};L^{\mathbf x})\bar \delta(x_1,\dots,x_k)f(\mathbf x) - \frac{c}{m}\sum_{\mathbf x \in \mathbb Z^n}\sum_{k=1}^n \beta(L^{x_k};L^{\mathbf x})f(\mathbf x) \\ 
&\geq \frac{1}{2m+1}\sum_{z=-m}^m \sum_{\mathbf x \in \mathbb Z^n}\sum_{k=1}^n \beta(L^{x_k};L^{\mathbf x}) \delta(x_1+z,\dots,x_k+z)f(\mathbf x) - \frac{cn\lVert \beta \rVert_{\infty}}{m}.
\end{align*}
Let $\lambda_m = \sum_{z=-m}^m \frac{1}{\alpha(L^z)}$. Since $\Pr\brk{Z=L^z} \alpha(L^z) = \frac{1}{\lambda_m}$ for all $z \in \{-m,\dots,m\}$, we have
\begin{align*}
&\frac{1}{2m+1}\sum_{z=-m}^m \sum_{\mathbf x \in \mathbb Z^n}\sum_{k=1}^n \beta(x_k;\mathbf x) \delta(x_1+z,\dots,x_k+z)f(\mathbf x) \\ 
&= \frac{\lambda_m}{2m+1}\sum_{z=-m}^m \sum_{\mathbf x \in \mathbb Z^n}\sum_{k=1}^n \alpha(L^z)\beta(L^{x_k};L^{\mathbf x}) \delta(x_1+z,\dots,x_k+z)f(\mathbf x)\Pr\brk{Z = L^z} \\
&= \frac{\lambda_m \E[\alpha(Z)\beta(X_{\tau};X_1,\dots,X_n)]}{2m+1}.
\end{align*}
Finally, we note that $\E[\alpha(Z)] = \frac{2m+1}{\lambda_m}$ to arrive at
\[
\frac{\E[\alpha(Z)\beta(X_{\tau};X_1,\dots,X_n)]}{\E[\alpha(Z)]} \leq \E[\beta(X_{\tau^{\star}};X_1,\dots,X_n)] + \frac{cn\lVert \beta \rVert_{\infty}}{m}
\]
and the claim follows by choosing
$C(X,n,\beta) := cn\lVert \beta \rVert_{\infty}.$
\end{proof}
\begin{proof}[Proof of \Cref{lem:mult-invariant}]
For \eqref{eq:prophet_mult_invariant}, we let $\alpha(\cdot)$ be the identity function, i.e. $\alpha(L^z) = L^z$ and consider
$\beta(L^{\xi};L^{x_1},\dots,L^{x_n}) = {L^{\xi}}/{\E[\max_{i\in [n]} X_i]},$
for $(L^{x_1},\dots,L^{x_n})$ in the support of $(X_1,\dots,X_n)$, and zero otherwise. Since $X_1,\ldots,X_n$ have bounded support and $\beta$ is bounded, we can apply \Cref{prop:mult_inv} to deduce
\begin{align*}
\sup_{\tau \in \mathcal T} \frac{Y_{\tau}}{\E[\max_{i \in [n]} Y_i]} &= \sup_{\tau \in \mathcal T} \frac{\E[Z X_{\tau}]}{\E[Z]\cdot \E[\max_{i \in [n]} X_i]} \\
&= \sup_{\tau \in \mathcal T} \frac{\E[\alpha(Z)\beta(X_{\tau};X_1,\dots,X_n)]}{\E[\alpha(Z)]} \\
&\leq \sup_{\tau^{\star} \in \mathcal T^{\star}} \E[\beta(X_{\tau^{\star}};X_1,\dots,X_n)] + \frac{C(X,n,\beta)}{m} \\
&= \sup_{\tau^{\star} \in \mathcal T^{\star}} \E[\beta(Y_{\tau^{\star}};Y_1,\dots,Y_n)] + \frac{C(X,n,\beta)}{m}\\
&= \sup_{\tau^{\star} \in \mathcal T^{\star}} \frac{\E[Y_{\tau^{\star}}]}{\E[\max_{i \in [n]} Y_i]} + \frac{C(X,n,\beta)}{m},
\end{align*}
where, for the second-to-last equality, we used the fact that
\[
\beta(X_{\tau^{\star}};X_1,\dots,X_n) = \frac{X_\tau^{\star}}{\E[\max_{i\in [n]} X_i]} = \frac{Z X_\tau^{\star}}{Z \E[\max_{i\in [n]} X_i]} \overset{(*)}{=} \frac{Y_\tau^{\star}}{\E[\max_{i\in [n]} Y_i]} = \beta(Y_{\tau^{\star}};Y_1,\dots,Y_n),
\]
where $(*)$ follows from the fact that, if $\tau^{\star}$ is multiplicative invariant, $X_{\tau^{\star}}$ is independent of the value of $Z$, and from the fact that $\argmax_{i \in [n]} Y_i$ is independent of the value of $Z$.
For \eqref{eq:secretary_mult_invariant}, we let $\alpha$ be the constant function equal to $1$, i.e., $\alpha(L^z) = 1$, and $\beta(L^{\xi};L^{x_1},\dots,L^{x_n}) = \mathbf{1}\brk{\xi = \max_{i\in [n]} x_i}$, where $\mathbf{1}\brk{\calE}$ denotes the indicator function of event $\calE$. Then, for $X_\tau = L^\xi$ and $X_{\tau^{\star}} = L^{\psi'}$ for some $\xi$ and $\psi$, for $(L^{x_1}, \dots, L^{x_n})$ in the support of $(X_1, \dots, X_n)$, we have
\begin{align*}
\sup_{\tau \in \mathcal T} \Pr\brk{Y_{\tau} = \textstyle \max_{i\in [n]} Y_i} &= \sup_{\tau \in \mathcal T} \Pr\brk{L^z L^\xi = \textstyle \max_{i\in [n]} L^z L^{x_i}} \\
&= \sup_{\tau \in \mathcal T} \Pr\brk{L^\xi = \textstyle \max_{i\in [n]} L^{x_i}} \\
&= \sup_{\tau \in \mathcal T} \frac{\E\brk{\alpha(Z) \beta(X_\tau;X_1,\dots,X_n)}}{\E[\alpha(Z)]} \\
&\leq \sup_{\tau^{\star} \in \calT^{\star}} \E\brk{\beta(X_{\tau^{\star}};X_1,\dots,X_n)} + \frac{C(X,n,\beta)}{m} \\
&= \sup_{\tau^{\star} \in \calT^{\star}} \Pr\brk{L^{\psi} = \textstyle \max_{i\in [n]} L^{x_i}} + \frac{C(X,n,\beta)}{m} \\
&= \sup_{\tau^{\star} \in \calT^{\star}} \Pr\brk{L^z L^{\psi} = \textstyle \max_{i\in [n]} L^z L^{x_i}} + \frac{C(X,n,\beta)}{m} \\
&= \Pr\brk{Y_{\tau^{\star}} = \textstyle \max_{i\in [n]} Y_i} + \frac{C(X,n,\beta)}{m}.
\end{align*}
In both cases, for every $\eps > 0$, there exists an $m$ large enough to satisfy \Cref{prop:mult_inv}.
\end{proof}

\subsection{Hard Instance for the Prophet Setting}\label{sec:hard_prophet}

In this section, we prove \Cref{thm:prophet_no_const_factor}. Given \Cref{lem:mult-invariant}, it remains to construct instances $X_1,\dots,X_n$, such that no multiplicative invariant stopping time achieves a competitive ratio significantly better than $1/n$. To give some intuition for the construction, assume that $X_i = L^{\xi_i}$ where, for some $\mu > 0$, with probability $\mu$, $\xi_i$ is $1+1/M$ for some large value $M$ and with probability $1-\mu$, $\xi_i$ is uniformly chosen from $I_i \subseteq [0,1]$. Now, no matter how large $M$ is and how small $\mu$ is, we can always make $L$ so large that the only significant contribution to $\E[X_{\tau}]$ comes from realizations of the random variables where one of the $\xi_i$'s is equal to $1+\eps$, and $\tau$ stops at this index $i$. We would like to make it difficult for an algorithm, which only sees the quotients $X_i/X_1$, to detect whether the current $X_i$ has $\xi_i$ equal to $1+1/M$, or not. To this end, set $I_1 = [0,1]$. Now, when observing $X_2/X_1$, we would guess that $\xi_2 = 1+1/M$, when $X_2/X_1$ is very close to $1$. However, when we set $I_2 = [1-\alpha,1]$ for a small $\alpha$, the ratio $X_2/X_1$ is always close to $1$. We continue like this by setting $I_3 = [1-\alpha^2,1]$, and in general, $I_i = [1-\alpha^{i-1},1]$. Then it remains hard to infer any non-trivial information from the ratios $X_i/X_{i-1}$, as they are always very close to $1$. In fact, the only informative case occurs when $X_i/X_{i-1} < 1$, telling us that with very high probability, $\xi_{i-1}$ was equal to $1+1/M$. Unfortunately, since the probability of seeing more than one variable attaining this value is of the order of $\mu^2$, the contributions of these cases are negligible.

In the following, we provide a formal construction, which is a discretized version of the instance described above. To that goal, let $n \in \mathbb{N}$ be arbitrary, $\mu \in (0,1) \cap \mathbb Q$ and $\alpha = \mu^2$. Let $M > 0$ be arbitrary such that $\alpha^iM \in \mathbb{N}$ for all $i\in \{1,\dots,n\}$. For $i\in \{1,\dots,n\}$, consider the random variables given by
\[
\xi_i = \begin{cases}j &\text{ w.p. } p_i=\frac{1-\mu}{M(1-\alpha^{i-1})} \text{ for } j \in J_i=\{(1-\alpha^{i-1})M,\dots,M-1\},\\M &\text{ w.p. } \mu,
\end{cases}
\]
and set $X_i = L^{\xi_i}$, where as before $L > 1$ is a constant. We are going to prove that for fixed $n$ and any $\eps > 0$ we can choose $L$ and $\mu$ such that
\[
\frac{\E[X_{\tau^{\star}}]}{\E[\max_{i\in [n]} X_i]} \leq \frac{1}{n} +\eps
\]
for all multiplicative invariant stopping times $\tau^{\star} \in \mathcal T^{\star}$.

As mentioned, $\tau^{\star}$ is multiplicative invariant if and only if it depends on the differences $\xi_i-\xi_1$, which in turn is equivalent to depending on the differences $D_i = \xi_i-\xi_{i-1}$ for $i \geq 2$. This viewpoint is a bit more practical, so in what follows, we will think of a multiplicative invariant stopping time as depending only on the observations $D_2,\dots,D_i$.
Define $u_i = M\alpha^{i-2}(1-\alpha)$ for $i\in \{2,\dots,n\}$, and denote by $\calE(d_2,\dots,d_k)$ the event in which $D_i = d_i$ for every $i\in \{2,\dots,k\}$. 
The core of \Cref{thm:prophet_no_const_factor} is in the following lemma. 
\begin{lemma}\label{lem:prophet_value_low}
For any stopping time $\tau^{\star} \in \mathcal T^{\star}$, we have
$\E[X_{\tau^\star}] \leq L^M(\mu + n(2\alpha + n^2\mu^2)) + L^{M-1}$.
\end{lemma}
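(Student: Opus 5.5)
We split $\E[X_{\tau^\star}]$ according to the value of $\xi_{\tau^\star}$. If $\xi_{\tau^\star}\le M-1$, the contribution is at most $L^{M-1}$, which gives the last term of the bound. It therefore remains to show that
\[
\Pr[\xi_{\tau^\star}=M]\le \mu+n(2\alpha+n^2\mu^2),
\]
which we then multiply by $L^M$.

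Write $\{\xi_{\tau^\star}=M\}\subseteq \bigcup_k\{\tau^\star=k,\ \xi_k=M\}$. First we discard the realizations in which at least two indices satisfy $\xi_i=M$. A union bound over pairs gives probability at most $\binom{n}{2}\mu^2\le n\cdot n^2\mu^2$.

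On the remaining event exactly one index $k$ has $\xi_k=M$, and all other $\xi_i\in J_i$. We compare the event $G_k=\{\xi_k=M,\ \xi_i\in J_i\ \forall i\ne k\}$ with the event $B_k=\{\xi_i\in J_i\ \forall i\le k\}$, through the observed differences $D_2,\dots,D_k$. Since $\tau^\star$ depends only on $(D_2,\dots,D_k)$, set $\sigma_k(d)=\Pr[\tau^\star=k\mid \calE(d)]$. This is a deterministic function of $d$, possibly with internal randomization, but independent of the underlying $\xi$'s given the $D$'s.

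\textbf{Key coupling step.} Under $G_k$, we have $D_k=M-\xi_{k-1}\in[1,\alpha^{k-2}M]$ (for $k\ge2$). Under $B_k$, $D_k=\xi_k-\xi_{k-1}$ is spread over an interval of length about $\alpha^{k-1}M$, shifted by $-\xi_{k-1}$. For fixed $\xi_{k-1}$, the values $M-\xi_{k-1}$ and $\xi_k-\xi_{k-1}$ with $\xi_k$ uniform on $J_k$ are nearly indistinguishable in distribution: apart from an $\alpha$-fraction (relative to the range $u_k$) they cover the same values. Concretely, I would show that for every $d_2,\dots,d_{k-1}$, the likelihood of $D_k=d$ under $G_k$ (weighted by $\mu$) is at most $\mu/(1-\mu)$ times $M(1-\alpha^{k-1})$ times the point probability under $B_k$, except on a set of probability $O(\alpha)$. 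Since $\xi_{k-1}$ ranges over $J_{k-1}$, which has length $\alpha^{k-2}M$, averaging over $\xi_{k-1}$ gives
\[
\Pr[\tau^\star=k,G_k]\le \tfrac{\mu}{1-\mu}\,\Pr[\tau^\star=k, B_k]\cdot(\text{const}) + 2\alpha .
\]
The interval $J_k$ is the top $\alpha^{k-1}$ part, and $D_k$ under $G_k$ ranges over $[1,u_k+\alpha^{k-1}M]$, so the shift mismatch costs a fraction $\alpha$. The earlier differences $D_2,\dots,D_{k-1}$ have the same law under both events, because they do not involve $\xi_k$.

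Summing over $k$ and using $\sum_k\Pr[\tau^\star=k,B_k]\le 1$, the main terms total roughly $\mu$ (after absorbing the $1/(1-\mu)$ factor into the error terms). The mismatch terms total $n\cdot 2\alpha$, and the multiple-maxima terms total $n^3\mu^2$. This yields the claimed bound.

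The main obstacle is making the coupling in the key step precise. It requires an exact comparison, conditional on $D_2,\dots,D_{k-1}$, between the posterior weight of "$\xi_k=M$" and that of "$\xi_k\in J_k$" for each value $d_k$. The delicate part is handling the boundary values of $d_k$ where the two supports do not overlap, and showing that these boundaries carry total mass at most $2\alpha$.
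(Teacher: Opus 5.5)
\textbf{There is a genuine gap.} Your outer reduction is sound: split off $\{\xi_{\tau^\star}\le M-1\}$, discard two maxima, and sum over $k$ using disjointness of $\{\tau^\star=k\}$. This is a legitimate non-inductive alternative to the paper's backward induction on $\Pr[\xi_{\tau^\star\lor k}=M\mid\calE(d_2,\dots,d_k)]$. The gap is in the step everything rests on, the one you flag as the main obstacle: it is not proved, and the heuristic you give for it is wrong.

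\textbf{Why the heuristic fails.}
\begin{itemize}
\item For fixed $\xi_{k-1}$, $M-\xi_{k-1}$ is a point mass, while $\xi_k-\xi_{k-1}$ is uniform on $\alpha^{k-1}M$ points. These are far from indistinguishable.
\item The two laws of $D_k$ agree (exactly, for $1\le d\le u_k$) only after averaging over the \emph{conditional} law of $\xi_{k-1}$ given $D_2,\dots,D_{k-1}$. Moreover, they agree only if that conditional law is uniform on $J_{k-1}$.
\item That $D_2,\dots,D_{k-1}$ have the same law under $G_k$ and $B_k$ does not give this uniformity. Uniformity is the very statement you need, one step earlier.
\item Uniformity is false for atypical earlier histories. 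For example, $d_{k-1}\le 0$ or $d_{k-1}>u_{k-1}$ confines $\xi_{k-1}$ to a proper sub-range of $J_{k-1}$. So ``for every $d_2,\dots,d_{k-1}$'' cannot hold.
\item As a result, the ``(const)'' in your displayed inequality is never pinned down. It must be $1+O(\mu)$ for the sum over $k$ to give $\mu$. The factor $M(1-\alpha^{k-1})$ you quote appears to come from the misprinted $p_i$; since $|J_k|=\alpha^{k-1}M$, the correct value is $p_k=(1-\mu)/(\alpha^{k-1}M)$.
\end{itemize}

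\textbf{The missing idea} is the exact independence statement, the paper's Lemma~\ref{lem:prophet_independence}: if $1\le d_i\le u_i$ for all $2\le i\le k$, then $\calE(d_2,\dots,d_k)$ is independent of $\xi_k$. It follows by back-substitution. On $\calE(d)\cap\{\xi_k=y\}$ one has $\xi_i=y-\sum_{j=i+1}^k d_j$. Since $d_j\ge 1$ and $\sum_{j=i+1}^k u_j=(\alpha^{i-1}-\alpha^{k-1})M$, every $y\in J_k\cup\{M\}$ forces $\xi_i\in J_i$ for all $i<k$. Hence $\Pr[\xi_k=y,\calE(d)]=\Pr[\xi_k=y]\prod_{i<k}p_i$.

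\textbf{How your route then closes.} For typical $d$ you get exactly $\Pr[\tau^\star=k,\xi_k=M,\calE(d)]=\mu\Pr[\tau^\star=k,\calE(d)]$. Summing over $k$ and $d$ gives at most $\mu$, with no $1/(1-\mu)$ loss.

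\textbf{Caution on the error accounting.} If, for each $k$, you bound the probability that \emph{some} earlier difference is atypical, each term is $O(k\alpha)$ and the sum is $O(n^2\alpha)$, which exceeds the budget. Instead, observe the following on $\{\tau^\star=k,\ \xi_k=M,\ \xi_i<M\ \forall i<k\}$:
\begin{itemize}
\item An atypical history means some $i\le k$ with $\xi_{i-1}<M$ and $D_i\notin[1,u_i]$.
\item These events are disjoint in $k$.
\item They all lie in $\bigcup_{i=2}^n\big(\{D_i>u_i\}\cup\{D_i\le 0,\ \xi_{i-1}<M\}\big)$, which has probability at most $2(n-1)\alpha$ by Lemma~\ref{lem:decreasing_unlikely}.
\end{itemize}
Together with the double-maximum term, this yields $\Pr[\xi_{\tau^\star}=M]\le\mu+2(n-1)\alpha+\binom n2\mu^2$, which is within the claimed bound.
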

To show the lemma, we first argue that, under the event that up to some time $i$ all observed differences $D_i$ lie between $1$ and $u_i$, these observed differences in fact provide \emph{no} information about $\xi_i$ and any possible subsequent observations.
Next, we prove that observations that would provide sufficient information occur very rarely.
Finally, we show that for any multiplicative invariant stopping time, the probability of stopping at the right point in time is roughly $\mu$, which is essentially the same as if we always stopped at, say, $X_1$. 
The full proof can be found in \Cref{app:lem:prophet_value_low}.
\begin{proof}[Proof of \Cref{thm:prophet_no_const_factor}]
Let $\mu \in (0,1)$ be an arbitrary rational number with $\mu < n^{-4}$. Choose $\alpha = \mu^2$, $L=\mu^{-2}$, and choose $M$ such that $M\alpha^i$ is a natural number for all $i\in \{1,\dots,n\}$. 
We have
\begin{align*}
\E[\textstyle\max_{i\in [n]} X_i] \geq L^M\Pr\brk{\exists i: X_i = L^M} = L^M(1-(1-\mu)^n).
\end{align*}
On the other hand,
\[
(1-\mu)^n = 1-n\mu + \frac{n(n-1)}{2}(1-\xi)^{n-1}\xi^2
\]
for some $\xi \in [0,\mu]$, so using that $\mu < n^{-4}$, we have 
$(1-\mu)^n \leq 1-n\mu + \mu^{3/2}.$
Therefore, we get
$\E[\textstyle\max_{i\in [n]} X_i] \geq L^M(n\mu - \mu^{3/2}).$
Combining this with \Cref{lem:prophet_value_low}, for every $\tau^{\star}\in \calT^{\star}$ we get
\begin{align*}
    \frac{\E[X_{\tau^{\star}}]}{\E[\max_{i\in [n]} X_i]} \leq \frac{1+ 2n\alpha/\mu + n^3\mu}{n - \mu^{1/2}} + \frac{1}{L(n\mu-\mu^{3/2})} = \frac{1+ 2n\mu + n^3\mu}{n - \mu^{1/2}} + \frac{\mu}{n-\mu^{1/2}}.
\end{align*}
Now, letting $\mu\to 0$ proves the result.
\end{proof}

\subsection{Hard Instance for the Secretary Setting}\label{sec:hard_secretary}

Next, we construct instances that are hard for the secretary objective, i.e., $\Pr\brk{X_{\tau^{\star}} = \max_{i \in [n]} X_i}$ approaches zero for every multiplicative invariant stopping time $\tau^{\star}$ as $n \rightarrow \infty$. Given \Cref{sec:hard_prophet}, we already know how to construct instances for which it is hard to stop at the random variable attaining the maximum value of its support. In the prophet setting, the probability that $X_i$ attains its maximum value was a very small number $\mu$. In the secretary setting, however, the $X_i$ must be assigned their maximum value with a probability $\kappa(n)$, such that very likely at least one of them actually attains it -- otherwise, it would be almost optimal to always accept $X_n$. We also let the maximal value of the $X_i$ be slightly decreasing as $i$ increases. By doing this, we force the algorithm to stop at the \emph{first} index $i$, such that $X_i$ attains its maximal value. The formal analysis is quite similar to that of \Cref{sec:hard_prophet}, however, it is slightly more involved.

Let $n$ be arbitrary and set $\alpha = n^{-2}$. Let $M$ be such that $\alpha^iM$ is an integer for all $i\in \{1,\dots,k\}$, and $n/(\alpha^nM) \leq \alpha$. Set $\kappa(n) = n^{-1+\mu}$ for some $\mu \in (0,1)$. Define
\[
\xi_i = \begin{cases}
\xi_i^{\star}=M + n  - i  &\text{ w.p. } \kappa(n), \\
j &\text{ w.p. } p_i=\frac{1-\kappa(n)}{\alpha^{i-1}M}\text{ for } j \in J_i=\{(1-\alpha^{i-1})M,\dots,M-1\},
\end{cases}
\]
and set $X_i = 2^{\xi_i}$. Note that by taking logarithms,
$\textstyle\Pr[{X_{\tau^{\star}} = \max_{i\in [n]} X_i}] = \Pr[{\xi_{\tau^{\star}} = \max_{i\in [n]} \xi_i}].$
Also recall that $\tau^{\star}$ being multiplicative invariant with respect to the $X_i$ is equivalent to $\tau^{\star}$ only depending on the differences $\xi_i-\xi_{i-1}$. Similar to the prophet setting, the idea of our construction is that with high probability, it is hard to infer from the size of the jump from $\xi_i$ to $\xi_{i+1}$ whether we arrived at the maximum value $\xi_{i+1}$.
As before, set $D_i = \xi_i - \xi_{i-1}$ for $i\in \{2,\dots,n\}$, $u_i = M\alpha^{i-2}(1-\alpha)$ and denote by $\calE(d_2,\dots,d_k)$ the event in which $D_i = d_i$ for every $i\in \{2,\dots,k\}$. 
Similar to the prophet setting, the following lemma allows us to show \Cref{thm:secretary_no_const_factor}, and its proof is in \Cref{app:lem:max_prob_bound}.
\begin{lemma}\label{lem:max_prob_bound}
    Let $\tau^{\star} \in \mathcal T^{\star}$ be any multiplicative invariant stopping time. Let $W$ be the event in which there exists $i\in [n]$ with $\xi_i = \xi_i^{\star}$. Then 
    $\textstyle\Pr\brk{\xi_{\tau^{\star}} = \max_{i\in [n]} \xi_i \wedge W\midd \calE(d_2,\dots,d_k)} \leq \kappa(n) + 4n\alpha.$
\end{lemma}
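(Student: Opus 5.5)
Throughout, fix $k$ and a prefix $(d_2,\dots,d_k)$ with $\Pr[\calE(d_2,\dots,d_k)]>0$. Because $\tau^{\star}$ depends only on the differences, the event $\{\tau^{\star}=k\}$ is (up to the internal randomisation of $\tau^{\star}$, which is independent of the $\xi$'s) measurable with respect to $\calE(d_2,\dots,d_k)$. So I will read the lemma as bounding $\Pr[\xi_k=\max_{i\in [n]}\xi_i\wedge W\mid \calE(d_2,\dots,d_k)]$, the conditional success probability of stopping at $k$. The bound must hold for every such prefix, with no averaging over prefixes.

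\emph{Step 1: identify the success event.} The starred values satisfy $\xi_1^{\star}>\xi_2^{\star}>\dots>\xi_n^{\star}\geq M$, and every non-starred value lies in $J_i\subseteq\{\dots,M-1\}$. Hence on $W$ the maximum is attained exactly at the first index whose value is starred. The success event at $k$ is therefore
\[
S_k=\{\xi_k=\xi_k^{\star}\}\cap\{\xi_j\in J_j \text{ for all } j<k\}.
\]

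\emph{Step 2: reduce the posterior to shifts.} Conditioned on $\calE(d_2,\dots,d_k)$, the vector $(\xi_1,\dots,\xi_k)$ is determined by a single integer shift, namely the value $s$ of $\xi_k$, since $\xi_j=s-\sum_{i=j+1}^k d_i$. By independence, the posterior of $s$ is proportional to
\[
w(s)=\prod_{j\le k}\Pr[\xi_j=s-\textstyle\sum_{i>j}d_i].
\]
$S_k$ corresponds to at most one shift, $s_0=M+n-k$, and
\[
w(s_0)=\kappa(n)\prod_{j<k}p_j
\]
if all the $\xi_j(s_0)\in J_j$; otherwise the probability is $0$ and we are done. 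Every other shift $s\in J_k$ for which all $\xi_j(s)\in J_j$ has weight $p_k\prod_{j<k}p_j$. If $A$ denotes the set of such admissible shifts, then
\[
\Pr[S_k\mid\calE]\le\frac{\kappa(n)}{\kappa(n)+|A|\,p_k}.
\]
Since $p_k=(1-\kappa(n))/(\alpha^{k-1}M)$, it suffices to show $|A|\ge\alpha^{k-1}M(1-4n\alpha)$. Then
\[
\Pr[S_k\mid\calE]\le\frac{\kappa}{\kappa+(1-\kappa)(1-4n\alpha)}\le\kappa+4n\alpha
\]
by elementary algebra.

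\emph{Step 3: count admissible shifts; this is the main obstacle.} Moving $s$ down from $s_0$ by $t$ shifts every $\xi_j$, $j<k$, down by $t$.
\begin{itemize}
\item $\xi_k$ stays in $J_k$ for $t\in[n-k+1,\,\alpha^{k-1}M+n-k]$, which gives about $\alpha^{k-1}M$ values.
\item The upper ends are harmless.
\item The only loss comes from some $\xi_j(s_0)$, $j<k$, lying within $\alpha^{k-1}M+n$ of the lower end $(1-\alpha^{j-1})M$ of $J_j$.
\end{itemize}
For a good prefix no such $j$ exists, and $|A|\ge\alpha^{k-1}M-n\ge\alpha^{k-1}M(1-\alpha)$ by the choice $n/(\alpha^nM)\le\alpha$.

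The difficulty is that the prefix is arbitrary, so a near-boundary $\xi_j(s_0)$ cannot be discarded by a probability argument. In that case I would try to use the freedom in both directions: shifts upward from $s_0$ in which $\xi_k$ becomes non-starred are impossible, but shifts in which some earlier $\xi_{j'}$ becomes starred are possible. These failure shifts carry weight $\kappa\prod p$ with a smaller $p$-factor missing, and they enter the denominator, which can only lower the ratio. Alternatively, I would enlarge the comparison set to shifts in which the near-boundary index $j$ moves upward within $J_j$, whose width $\alpha^{j-1}M$ is at least $\alpha^{k-1}M/\alpha$. The width ratio $\alpha^{k-1}/\alpha^{j-1}\le\alpha$ per earlier index, summed over at most $n$ indices (with a factor covering both interval ends), is what should produce the $4n\alpha$ term.

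I would first verify the computation for $k=2$, where only one earlier index is involved. Then I would induct on the number of near-boundary indices, checking that each one costs at most a factor $(1-4\alpha)$ in $|A|p_k$.
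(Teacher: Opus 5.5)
There is a genuine gap. The per-prefix bound you set out to prove (``the bound must hold for every such prefix, with no averaging over prefixes'') is false, so no counting of shifts in Step~3 can complete it.

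Take $k=2$ and $d_2=M$. This prefix has probability $\kappa(n)p_1>0$, realized by $\xi_1=n-2$ and $\xi_2=\xi_2^{\star}$. A non-starred $\xi_2$ is at most $M-1$ and would force $\xi_1<0$. A starred $\xi_1$ would force $\xi_2=2M+n-1$. Hence, conditioned on $\calE(M)$, we have $\xi_2=\xi_2^{\star}$ and $\xi_1\in J_1$ almost surely. In your notation, $s_0$ is admissible but $A=\emptyset$, so the posterior of $S_2$ is $1$. The multiplicative invariant rule ``stop at $2$ iff $D_2=M$'' therefore succeeds with conditional probability $1$ on this prefix.

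Your proposed rescues do not exist here. Since $s_0$ is the top of the support of $\xi_2$, there are no upward shifts, and no earlier index can be starred. The same example shows that the displayed statement cannot hold for arbitrary $d_2,\dots,d_k$. The paper only ever conditions on prefixes with $n\le d_i\le u_i$, and what \Cref{thm:secretary_no_const_factor} uses is the case $k=1$, i.e., the unconditional probability.

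The missing idea is that bad prefixes must be paid for by their probability mass, not bounded through their posterior. On a good prefix ($n\le d_i\le u_i$), \Cref{lem:secretary_independence} makes $\xi_k,\dots,\xi_n$ independent of $\calE(d_2,\dots,d_k)$. Your shift computation with $A=J_k$ recovers exactly this, with posterior $\kappa(n)$.

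The paper then shows by backward induction on $k$ that $\Pr[\xi_{\tau^{\star}\vee k}=\max_i\xi_i\wedge W\mid\calE(d_2,\dots,d_k)]\le\kappa(n)+4(n-k)\alpha$ for good prefixes. Stopping at $k$ contributes at most $\kappa(n)$. When continuing, $D_{k+1}$ either stays in $[n,u_{k+1}]$, where the hypothesis applies, or is an informative jump compatible with later success. These jumps are: $D_{k+1}<0$ with $\xi_k$ non-starred (if $\xi_k$ is starred, stopping after $k$ cannot succeed), $0\le D_{k+1}\le n$, or $D_{k+1}\ge u_{k+1}$. By \Cref{lem:informative_events_unlikely} and independence, they cost $\alpha$, $2\alpha$ and $\alpha$ respectively.

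Equivalently, without induction: let $\sigma$ be the first index with $D_\sigma\notin[n,u_\sigma]$. By independence on good prefixes, success with $\tau^{\star}<\sigma$ has probability at most $\kappa(n)\sum_k\Pr[\tau^{\star}=k<\sigma]\le\kappa(n)$. Success with $\tau^{\star}\ge\sigma$ requires one of the rare jumps, whose total probability is at most $4n\alpha$ by a union bound. So the $4n\alpha$ term is the accumulated probability of ever seeing an informative jump, not a perturbation of the posterior on each prefix.
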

\begin{proof}[Proof of \Cref{thm:secretary_no_const_factor}]
By \Cref{lem:max_prob_bound} we have
\begin{align*}
\textstyle\Pr\brk{\xi_{\tau^{\star}} = \max_{i\in [n]} \xi_i} &\leq \textstyle\Pr\brk{\xi_{\tau^{\star}} = \max_{i\in [n]} \xi_i \wedge W} + \Pr[W^c] \\ 
&\leq \kappa(n) + 4(n-1)\alpha + (1-\kappa(n))^n\leq \kappa(n)+\frac{4}{n} + (1-\kappa(n))^n.
\end{align*}
Both $4/n$ and $(1-\kappa(n))^n$ are in $o({n^{-1+\mu}})$, which proves the theorem.
\end{proof}

\subsection{Proof of \Cref{lem:prophet_value_low}}\label{app:lem:prophet_value_low}
The following lemma states that, under the event that up to some time $i$ all observed differences $D_i$ lie between $1$ and $u_i$, these observed differences in fact provide \emph{no} information about $\xi_i$ and any possible subsequent observations.
\begin{lemma}\label{lem:prophet_independence}
    Let $1 \leq d_i \leq u_i$ for $i\in \{2,\dots,k\}$. 
    Then $\xi_k,\dots,\xi_n$ and $\calE(d_2,\dots,d_k)$ are independent.
\end{lemma}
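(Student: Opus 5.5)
Since $\xi_1,\dots,\xi_n$ are independent, the variables $\xi_{k+1},\dots,\xi_n$ are independent of the pair $\bigl(\calE(d_2,\dots,d_k),\xi_k\bigr)$, which is a function of $\xi_1,\dots,\xi_k$. So the whole statement reduces to one claim: $\calE(d_2,\dots,d_k)$ is independent of $\xi_k$. Equivalently, I will show that $\Pr\brk{\calE(d_2,\dots,d_k) \midd \xi_k = x}$ takes the same value for every $x \in \supp(\xi_k) = J_k \cup \{M\}$. Independence of $\calE$ from the full vector $(\xi_k,\dots,\xi_n)$ then follows by factorizing the joint probability.

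Fix $x \in \supp(\xi_k)$ and put $s_i = \sum_{j=i+1}^{k} d_j$ for $i<k$. On $\{\xi_k = x\}$, the event $\calE(d_2,\dots,d_k)$ holds exactly when $\xi_i = x - s_i$ for every $i \in \{1,\dots,k-1\}$. By independence,
\[
\Pr\brk{\calE(d_2,\dots,d_k) \midd \xi_k = x} = \prod_{i=1}^{k-1} \Pr\brk{\xi_i = x - s_i}.
\]
The key step is to check that each target value $x - s_i$ lies in $J_i$. Once that is done, each factor equals the common point mass $p_i$ that $\xi_i$ places on every element of $J_i$, and the product $\prod_{i<k} p_i$ does not depend on $x$. (For $i=1$, $J_1=\{0,\dots,M-1\}$ and the point mass is $(1-\mu)/M$; I treat this separately to avoid the degenerate formula for $p_1$.)

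To verify $x - s_i \in J_i$, I check the two ends of the interval.
\begin{itemize}[itemsep=0pt]
\item \emph{Upper end.} Since every $d_j \ge 1$, we have $s_i \ge 1$. Hence $x - s_i \le M-1$, so the value is strictly below $M$; in particular $\xi_i$ cannot be at its top atom $M$.
\item \emph{Lower end.} Using $d_j \le u_j = M\alpha^{j-2}(1-\alpha)$, the sum telescopes:
\[
s_i \le M(1-\alpha)\sum_{j=i+1}^{k}\alpha^{j-2} = M\bigl(\alpha^{i-1}-\alpha^{k-1}\bigr).
\]
Every $x$ in the support of $\xi_k$ satisfies $x \ge (1-\alpha^{k-1})M$. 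Therefore
\[
x - s_i \ge (1-\alpha^{k-1})M - M\alpha^{i-1} + M\alpha^{k-1} = (1-\alpha^{i-1})M,
\]
which is exactly the lower end of $J_i$.
\end{itemize}
Integrality of all these quantities is guaranteed by the choice of $M$ with $\alpha^i M \in \NN$.

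The only delicate point is this telescoping computation, that the bounds $u_j$ sum exactly to the gap between the left endpoints of $J_i$ and $J_k$. Its role is to cover both cases for $\xi_k$ at once: the atom $x = M$ and every value of $J_k$ yield the same conditional probability of the observed differences. Everything else is bookkeeping with independence.
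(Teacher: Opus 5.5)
Your proof is correct and follows essentially the same route as the paper. You recover $\xi_i = \xi_k - \sum_{j=i+1}^k d_j$, then use $d_j \ge 1$ for the upper end and the telescoping identity $\sum_{j=i+1}^k u_j = M(\alpha^{i-1}-\alpha^{k-1})$ for the lower end to place each value in $J_i$, so the observed differences have probability $\prod_{i<k} p_i$ whatever $\xi_k$ is. The paper phrases this as factorizing the joint probability of $\xi_k,\dots,\xi_n$ and $\calE(d_2,\dots,d_k)$, while you phrase it as a constant conditional probability; these amount to the same thing.

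Relying only on $\xi_i$ being uniform on $J_i$, rather than on an explicit value of $p_i$, is a good choice. The paper's displayed formula for $p_i$ appears to be a typo: the uniform mass on $J_i$ is $(1-\mu)/(\alpha^{i-1}M)$, which is consistent with your $p_1=(1-\mu)/M$.
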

\begin{proof}
    The events $\xi_k = y_k$ and $\calE(d_2,\dots,d_k)$ imply that $ \xi_i = y_i = y_k - \sum_{j=i+1}^k d_j$ for all $i\in \{1,\dots,k-1\}$. 
    Now since $d_i \geq 1$, clearly $\xi_i < M$ for $i < k$. Moreover,
    \begin{align*}
        y_i &\geq (1-\alpha^{k-1})M - M(1-\alpha)\sum_{j=i+1}^k\alpha^{j-2}\\
        &= (1-\alpha^{k-1})M - M(1-\alpha)\alpha^{i-1}\sum_{j=0}^{k-i-1}\alpha^{j}\\ 
        &= (1-\alpha^{k-1})M - M\alpha^{i-1}(1-\alpha^{k-i})= M(1-\alpha^{i-1}),
    \end{align*}
    so $y_i \in J_i$ for all $i\in \{1,\dots,k-1\}$. Consequently,
    \begin{align*}
    \Pr\brk{\xi_k=y_k \wedge \dots \wedge \xi_n = y_n \wedge \calE(d_2,\dots,d_k)} = \prod_{i=k}^n \Pr\brk{\xi_i = y_i} \prod_{i=1}^{k-1} p_i.
    \end{align*}
    This implies independence since
    $\Pr\brk{\calE(d_2,\dots,d_k)} = \prod_{i=1}^{k-1} p_i.$
\end{proof}
Next we prove that observations that would provide sufficient information, in particular those that are not covered by \Cref{lem:prophet_independence}, occur very rarely.
\begin{lemma}\label{lem:decreasing_unlikely}
$\Pr\brk{D_{k+1} \geq u_{k+1}} \leq \alpha,$
and 
$\Pr\brk{D_{k+1} \leq 0 \wedge \xi_k < M} \leq \alpha.$
\end{lemma}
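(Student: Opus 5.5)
Both bounds follow by conditioning on $\xi_k$ and using independence of $\xi_k$ and $\xi_{k+1}$, so $D_{k+1}=\xi_{k+1}-\xi_k$ is a difference of independent variables with explicit laws. Throughout I use that $\alpha=\mu^2$ is small, the block sizes $|J_i|=\alpha^{i-1}M$, and $p_i=(1-\mu)/(M\alpha^{i-1})$ (the intended normalization, so that $p_i|J_i|=1-\mu$).

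\emph{First inequality.} Since $\xi_{k+1}\le M$, the event $D_{k+1}\ge u_{k+1}=M\alpha^{k-1}(1-\alpha)$ forces $\xi_k\le M-M\alpha^{k-1}(1-\alpha)=M(1-\alpha^{k-1})+M\alpha^k$. As $\xi_k\ge M(1-\alpha^{k-1})$ always, $\xi_k$ must lie in the first $M\alpha^k+1$ points of $J_k$. This has probability at most
\[
(M\alpha^k+1)\,p_k\le \frac{M\alpha^k+1}{M\alpha^{k-1}}\le \alpha+\frac{1}{M\alpha^{k-1}}.
\]
For $k=1$, where $J_1$ is degenerate, the event cannot occur because $u_2=M(1-\alpha)$ and $\xi_1\ge 0$ give only boundary cases. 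To obtain exactly $\alpha$ rather than $\alpha+1/(M\alpha^{k-1})$, I would either count the points more carefully or take $M$ large, since $M$ is a free parameter subject only to divisibility. I expect this off-by-one bookkeeping to be the only delicate point. The cleanest route is to note that the event also needs $\xi_{k+1}$ near the top, which supplies an extra factor at most $1$, and then to absorb the $+1$ by enlarging $M$, if needed, within the freedom allowed by the construction.

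\emph{Second inequality.} On $\{\xi_k<M\}$ we have $\xi_k\in J_k$, and $D_{k+1}\le 0$ means $\xi_{k+1}\le \xi_k\le M-1$, so $\xi_{k+1}\in J_{k+1}$ (the value $M$ is excluded). Conditioning on $\xi_k=y$ with $y<M$, the probability that $\xi_{k+1}\le y$ is at most
\[
|J_{k+1}|\,p_{k+1}\le 1-\mu\le 1,
\]
which is too weak. The right bound comes from $\xi_k<M$ having probability at most $1-\mu$ combined with $\xi_{k+1}$ landing at or below $\xi_k$. Since $J_{k+1}\subset J_k$ occupies only the top $\alpha$-fraction of $J_k$, the event $\xi_{k+1}\le \xi_k$ with $\xi_{k+1}\in J_{k+1}$ requires $\xi_k\ge \min J_{k+1}$, i.e.\ $\xi_k$ in the top $\alpha^kM$ points of $J_k$. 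That has probability at most $\alpha^kM\cdot p_k\le\alpha$. Multiplying by $\Pr[\xi_{k+1}\le \xi_k\mid\xi_k]\le 1$ gives the bound $\alpha$.
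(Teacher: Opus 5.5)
Your proposal is correct and follows the paper's proof. In both parts you confine $\xi_k$ to a block of roughly $M\alpha^k$ consecutive points of $J_k$: the bottom block for $D_{k+1}\ge u_{k+1}$, and the top block $\{(1-\alpha^k)M,\dots,M-1\}$ for $D_{k+1}\le 0,\ \xi_k<M$. You then multiply by $p_k=(1-\mu)/(M\alpha^{k-1})$, which is indeed the intended normalization.

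\emph{The off-by-one in the first part.} Your count of $M\alpha^k+1$ points is more accurate than the paper's, which writes $M\alpha^k p_k$. Enlarging $M$ is a legitimate fix, since $M$ is free up to divisibility and the bound in Theorem~\ref{thm:prophet_no_const_factor} does not depend on it. You can also avoid any condition on $M$ by computing the $\xi_{k+1}$ factor exactly rather than bounding it by $1$. If $\xi_k=(1-\alpha^{k-1})M+j$ with $0\le j\le M\alpha^k$, then
\[
\Pr\brk{\xi_{k+1}\ge \xi_k+u_{k+1}}=\mu+(1-\mu)\frac{M\alpha^k-j}{M\alpha^k}.
\]
Summing over $j$ gives
\[
(1-\mu^2)\,\alpha\,\frac{1+1/(M\alpha^k)}{2}\le(1-\mu^2)\alpha,
\]
since $M\alpha^k\ge 1$.

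\emph{The $k=1$ aside.} This remark is wrong. $J_1=\{0,\dots,M-1\}$ is not degenerate, and the event $D_2\ge u_2$ does occur, with probability about $\alpha$. It does no harm, though, because your general count already covers $k=1$.
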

\begin{proof}
$D_{k+1} \geq u_{k+1}$ implies
$\xi_k \leq M - u_{k+1} = M - M\alpha^{k-1}(1-\alpha) = M(1-\alpha^{k-1}+\alpha^k),$
but
\[
\Pr(\xi_k \leq M(1-\alpha^{k-1}+\alpha^k)) = M\alpha^kp_k = (1-\mu)\alpha,
\]
which proves the first inequality.
On the other hand, $D_{k+1} \leq 0$ and $\xi_k < M$ imply $M-1 \geq \xi_k \geq M(1-\alpha^k)$, but
$\Pr\brk{M-1 \geq \xi_k \geq M(1-\alpha^k)} = M\alpha^kp_k = \alpha$.
\end{proof}
The following lemma says that for any multiplicative invariant stopping time, the probability of stopping at the right point in time is roughly $\mu$, which is essentially the same as if we always stopped at, say, $X_1$. The proof works by inductively bounding the optimal online algorithm.
\begin{lemma}\label{lem:prophet_find_max_is_hard}
For any stopping time $\tau^{\star} \in \mathcal T^{\star}$, $\Pr\brk{\xi_{\tau^{\star}} = M} \leq \mu + n(2\alpha+n^2\mu^2)$.
\end{lemma}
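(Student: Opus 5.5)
The plan is to decompose the event $\{\xi_{\tau^\star}=M\}$ according to the stopping index $k$, and to separate a ``good'' event, on which the observed differences carry no information, from a collection of rare ``bad'' events.

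For $k\in[n]$, let $G_k$ be the event that $\xi_i<M$ for all $i<k$ and $1\le D_i\le u_i$ for all $i\in\{2,\dots,k\}$. On $G_k$, the differences $D_2,\dots,D_k$ have an admissible pattern in the sense of \Cref{lem:prophet_independence}. The event $\{\xi_{\tau^\star}=M\}$ is then covered by three pieces:
\begin{enumerate}[label=(\roman*)]
\item $\{\tau^\star=k,\ \xi_k=M\}\cap G_{k-1}\cap\{\text{no earlier }\xi_i=M\}$;
\item some difference $D_i$ (with $i\le k$) lies outside $[1,u_i]$ while $\xi_{i-1}<M$;
\item at least two indices attain the value $M$.
\end{enumerate}

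The bad pieces are handled as follows. By \Cref{lem:decreasing_unlikely}, each $i$ contributes at most $2\alpha$ to (ii). One must still check the case where the jump into $\xi_i=M$ itself exceeds $u_i$, i.e.\ $\xi_{i-1}\le M-u_i$; the first bound of \Cref{lem:decreasing_unlikely} covers it. Summing over $i$ gives at most $2n\alpha$ for (ii). Piece (iii) has probability at most $\binom n2\mu^2\le n^2\mu^2$, which is comfortably within the allowance $n\cdot n^2\mu^2$.

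The main term is (i). On an admissible pattern $\calE(d_2,\dots,d_{k-1})$ (so on $G_{k-1}$), we have $\xi_1,\dots,\xi_{k-1}<M$. By \Cref{lem:prophet_independence}, $\xi_{k-1},\dots,\xi_n$ are independent of this pattern. The decision $\tau^\star=k$ depends only on $d_2,\dots,d_{k-1}$ and on the new difference $D_k=\xi_k-\xi_{k-1}$.

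The key observation is that, conditionally on the pattern, the law of $D_k$ is nearly the same whether or not $\xi_k=M$. Since $\xi_{k-1}$ is uniform on $J_{k-1}$, the difference $M-\xi_{k-1}$ is uniform on $\{1,\dots,\alpha^{k-2}M\}$. Meanwhile, when $\xi_k\in J_k$, the difference $D_k$ is spread over a range that largely overlaps this one.

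I therefore expect to bound things by summing over $k$ and writing
\[
\Pr[\tau^\star=k,\ \xi_k=M,\ G_{k-1}] = \mu\,\Pr[\tau^\star=k' \text{ hypothetically}]\ldots
\]
More cleanly, I would prove by backward induction on $k$ a bound on the optimal probability of stopping at $M$ from time $k$ onward, given an admissible history. The claim to prove is that this value is at most
\[
\mu+(n-k+1)\bigl(2\alpha+n^2\mu^2\bigr).
\]
The inductive step compares two options. Stopping now succeeds with probability about $\mu$: the chance that $\xi_k=M$ given that $D_k$ looks typical. Continuing succeeds with at most the inductive value. Since the observed difference is almost uninformative, the combined value is at most $\max(\mu,\text{continue})$ plus the error $2\alpha+n^2\mu^2$.

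The main obstacle is this posterior computation. I must show that for every admissible value $d$, $\Pr[\xi_k=M\mid D_k=d,\text{history}]\le\mu+O(\alpha)$, or handle the uninformative values through the $\alpha$ error. The approach is a direct likelihood-ratio computation. Given $\xi_{k-1}=y$ uniform on $J_{k-1}$, the event $\xi_k=M$ has weight $\mu\,p_{k-1}$ at $d=M-y$. The event $\xi_k\in J_k$ has weight $(1-\mu)p_k$ summed over the appropriate range of $y$. The number of $y$ consistent with $d$ is about $\alpha^{k-2}M$ minus boundary effects, and $p_k\approx(1-\mu)/(\alpha^{k-1}M)$. A careful count should give a posterior of at most $\mu$ up to $O(\alpha)$ slack, and this is the step where I expect the numerics to be delicate.
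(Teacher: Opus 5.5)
Your architecture matches the paper's proof. It is a backward induction over $k$ with per-step slack $2\alpha+n^2\mu^2$. Differences outside $[1,u_k]$ (with $\xi_{k-1}<M$) are charged $2\alpha$ via \Cref{lem:decreasing_unlikely}, and an earlier index already at $M$ is charged to the event that two indices attain $M$.

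The gap is the step you yourself flag as the main obstacle. You condition only on $d_2,\dots,d_{k-1}$ and treat the current difference $D_k$ as fresh evidence. You then leave unproven, as a ``delicate'' count, the claim $\Pr[\xi_k=M\mid D_k=d,\text{history}]\le\mu+O(\alpha)$. Yet your inductive step uses $\max(\mu,\text{continue})$, which silently assumes the posterior is at most $\mu$ exactly.

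The missing observation is that \Cref{lem:prophet_independence} already applies to the pattern \emph{including} the current difference. For $1\le d_k\le u_k$, $\xi_k$ is independent of $\calE(d_2,\dots,d_k)$, so $\Pr[\xi_k=M\mid\calE(d_2,\dots,d_k)]=\mu$ exactly, with no boundary effects. This is precisely what the paper uses for the ``stop now'' term.

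Your likelihood count confirms this once it is carried out. Since $u_k=\alpha^{k-2}M-\alpha^{k-1}M$, every $d\le u_k$ satisfies $(1-\alpha^{k-1})M-d\ge(1-\alpha^{k-2})M$. Hence exactly $|J_k|=\alpha^{k-1}M$ values $y\in J_{k-1}$ have $y+d\in J_k$. This gives weight $(1-\mu)p_{k-1}$, against $\mu p_{k-1}$ for $\xi_k=M$; the case $\xi_{k-1}=M$ is excluded by $d\ge 1$.

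With this in hand, your initial forward decomposition closes directly, without induction, if piece (i) uses $G_k$ rather than $G_{k-1}$. Note that $G_k=\{1\le D_i\le u_i,\ 2\le i\le k\}$, the disjoint union of the admissible $\calE(d_2,\dots,d_k)$.

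Because $\{\tau^\star=k\}$ is determined by $D_2,\dots,D_k$ and independent randomization, the exact independence above gives $\Pr[\tau^\star=k,\ \xi_k=M,\ G_k]=\mu\Pr[\tau^\star=k,\ G_k]$. Summing over $k$ gives at most $\mu$.

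If $\xi_{\tau^\star}=M$ but $G_{\tau^\star}$ fails, let $i\le\tau^\star$ be the first index with $D_i\notin[1,u_i]$. Either $\xi_{i-1}<M$, an event of probability at most $2\alpha$ by \Cref{lem:decreasing_unlikely}, or $\xi_{i-1}=M=\xi_{\tau^\star}$ with $i-1<\tau^\star$, so two indices attain $M$. This yields $\mu+2n\alpha+n^2\mu^2$, slightly stronger than the lemma as stated.
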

\begin{proof}
We prove by induction that for all $k\in \{1,\dots,n\}$ and all $1 \leq d_i \leq u_i$, there holds
\[
\Pr\brk{\xi_{\tau^{\star} \lor k} = M \midd \calE(d_2,\dots,d_k)} \leq \mu + (n-k)(2\alpha+n^2\mu^2),
\]
which implies the result. Here, $\tau^{\star} \lor k = \max\{\tau^{\star},k\}$.
For $k=n$, we have that
\begin{align}
    \Pr\brk{\xi_{\tau^{\star} \lor n} = M \midd \calE(d_1,\dots,d_n)} \leq \Pr\brk{\xi_n = M \midd \calE(d_2,\dots,d_n)} = \mu,
\end{align}
by independence (\Cref{lem:prophet_independence}).
Let $k < n$ and suppose the claim is true for $k+1,\dots,n$. Let
$\nu = \Pr(\tau^{\star} = k\mid E(d_2,\dots,d_k)).$
Then
\begin{align*}
&\Pr\brk{\xi_{\tau^{\star} \lor k} =M \midd \calE(d_2,\dots,d_k)}\\
&= \nu \Pr\brk{\xi_k = M \midd \calE(d_2,\dots,d_k)} + (1-\nu) \Pr(\brk{\xi_{\tau^{\star} \lor k+1} = M \midd \calE(d_2,\dots,d_k)}.
\end{align*}
Again, by independence, the first term is bounded by $\mu^2$. Regarding the second term, write
\begin{align*}
    &\Pr\brk{\xi_{\tau^{\star} \lor k+1} = M \midd \calE(d_2,\dots,d_k)} \\ 
    &= \sum_{1\leq d_{k+1}\leq u_{k+1}} \Pr\brk{\xi_{\tau^{\star} \lor k+1} = M \midd \calE(d_2,\dots,d_k)} \Pr\brk{D_{k+1} = d_{k+1} \midd \calE(d_2,\dots,d_{k+1})} \\
    &\quad+ \Pr\brk{\xi_{\tau^{\star} \lor k+1} = M \wedge D_{k+1} > u_{k+1} \midd \calE(d_2,\dots,d_k)} \\
    &\quad+ \Pr\brk{\xi_{\tau^{\star} \lor k+1} = M \wedge D_{k+1} \leq 0 \wedge \xi_k < M \midd \calE(d_2,\dots,d_k)} \\
    &\quad+
    \Pr\brk{\xi_{\tau^{\star} \lor k+1} = M \wedge D_{k+1} \leq 0 \wedge \xi_k = M \midd \calE(d_2,\dots,d_k)}.
\end{align*}
By the induction hypothesis, we can bound the first term by $\mu + (2\alpha+\mu^2)(n-k-1)$. By \Cref{lem:decreasing_unlikely} and independence, we can bound the second and the third terms respectively by $\alpha$. Regarding the last term, note that
$\xi_{\tau^{\star} \lor k+1} = M \wedge \xi_k = M$
implies the event $F = \{\exists i > j \geq k:\, \xi_i=\xi_j=M\}$. By independence,
$\Pr\brk{F \midd \calE(d_2,\dots,d_k)} = \Pr[F] \leq n^2\mu^2.$
Using these upper bounds, we get
\begin{align*}
\Pr\brk{\xi_{\tau^{\star} \lor k+1} = M \midd \calE(d_2,\dots,d_k)}&\leq \mu + (2\alpha + n^2\mu^2)(n-k-1) + 2\alpha + n^2\mu^2  \\
&= \mu + (2\alpha + n^2\mu^2)(n-k),
\end{align*}
which concludes the proof of the lemma.
\end{proof}
\begin{proof}[Proof of \Cref{lem:prophet_value_low}]
We have
$\E[X_{\tau^\star}] \leq L^M \Pr\brk{\xi_{\tau^\star} = M} + L^{M-1} \Pr\brk{\xi_{\tau^\star}\leq M-1}$, and by \Cref{lem:prophet_find_max_is_hard}, it further holds $\Pr\brk{\xi_{\tau^\star} = M} \leq \mu + (2\alpha +n^2\mu^2)n$.
\end{proof}

\subsection{Proof of \Cref{lem:max_prob_bound}}\label{app:lem:max_prob_bound}
First, we state an independence result similar to \Cref{lem:prophet_independence}.
\begin{lemma}\label{lem:secretary_independence}
Let $n \leq d_i \leq u_i$ for $i\in \{2,\dots,k\}$. Then the random variables $\xi_k,\dots,\xi_n$ and the event $\calE(d_2,\dots,d_k)$ are independent.
\end{lemma}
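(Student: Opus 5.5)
I would mirror the proof of \Cref{lem:prophet_independence}: compute the joint probability of $\{\xi_k=y_k,\dots,\xi_n=y_n\}\cap\calE(d_2,\dots,d_k)$ directly and show it factorizes.

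The first step is to fix a value $y_k$ in the support of $\xi_k$. On $\calE(d_2,\dots,d_k)$ the earlier coordinates are then determined:
\[
\xi_i=y_i:=y_k-\sum_{j=i+1}^k d_j \qquad\text{for } i<k.
\]

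The second step is to show that each $y_i$, $i<k$, lies in the regular part $J_i$ and not at the special value $\xi_i^\star$. Here $y_i\le y_k-(k-i)n\le y_k-n$, because every $d_j\ge n$. Since every support value of $\xi_k$ is at most $\max(M-1,\xi_k^\star)=M+n-k$, we get
\[
y_i\le M-k<M \quad (k\ge 1).
\]
In fact $y_i\le M-1$, and $y_i<\xi_i^\star=M+n-i$, so $y_i$ is not the special value. This is why the hypothesis $d_i\ge n$ is imposed: it rules out the jump $\xi_{i-1}=\xi^\star_{i-1}$ to $\xi_i=\xi_i^\star$, whose difference is $-1$, and more generally any configuration where an earlier coordinate sits at its top value.

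For the lower bound I would use $d_j\le u_j=M\alpha^{j-2}(1-\alpha)$ and $y_k\ge (1-\alpha^{k-1})M$, the minimum of the support of $\xi_k$. The same geometric-sum computation as in \Cref{lem:prophet_independence} gives
\[
y_i\ge M(1-\alpha^{i-1}),
\]
so $y_i\in J_i$. The integrality assumptions on $\alpha^iM$ keep all endpoints integral.

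The third step is to conclude. By independence of the $\xi_i$,
\[
\Pr\brk{\xi_k=y_k,\dots,\xi_n=y_n,\calE(d_2,\dots,d_k)}=\prod_{i=k}^n\Pr[\xi_i=y_i]\prod_{i=1}^{k-1}p_i .
\]
Summing over $y_k,\dots,y_n$ gives $\Pr[\calE(d_2,\dots,d_k)]=\prod_{i<k}p_i$, and independence follows.

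The only delicate point is the upper bound in the second step: one must check that $\xi_k=\xi_k^\star$ does not push some $y_i$ above $M-1$. This holds because $\xi_k^\star-(k-i)n\le M+n-k-n<M$. I expect no other obstacle.
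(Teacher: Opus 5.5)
Your proposal is correct and follows the paper's proof step for step. You back out $y_i=y_k-\sum_{j=i+1}^k d_j$, use $d_j\ge n$ to get $y_i\le M-1$, and use $d_j\le u_j$ with the geometric sum to get $y_i\ge(1-\alpha^{i-1})M$; you then factor the joint probability and sum to get $\Pr[\calE(d_2,\dots,d_k)]=\prod_{i<k}p_i$. You also check explicitly that the case $y_k=\xi_k^{\star}=M+n-k$ still gives $y_i\le M-k<M$, and that $y_i\neq\xi_i^{\star}$ so that $\Pr[\xi_i=y_i]$ is exactly $p_i$; this spells out what the paper compresses into ``since $d_i\ge n$''.
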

\begin{proof}
    The events $\calE(d_2,\dots,d_k)$ and $\xi_k = y_k$ imply that $\xi_i = y_i := y_k - \sum_{j=i+1}^k d_j$ for $i=1,\dots,k-1$. Now since $d_i \geq n$, this implies that $y_i \leq M-1$ for $i < k$. We also have that
    \begin{align*}
    y_i &\geq (1-\alpha^{k-1})M - \sum_{j=i+1}^k u_j \\ &= (1-\alpha^{k-1})M - M(1-\alpha)\sum_{j=i+1}^k \alpha^{j-2} \\&= (1-\alpha^{k-1})M - M(1-\alpha)\alpha^{i-1}\sum_{j=0}^{k-i-1} \alpha^{j} \\ &=
    (1-\alpha^{k-1})M - M\alpha^{i-1}(1-\alpha^{k-i})= (1-\alpha^{i-1})M,
    \end{align*}
    so $y_i \in J_i$ and $\Pr\brk{\xi_i = y_i} = p_i$ for all $i=1,\dots,k-1$. Thus,
    \begin{align*}
    \Pr\brk{\xi_k=y_k \wedge \dots, \xi_n = y_n \wedge \calE(d_2,\dots,d_k)} = \Pr[\xi_k = y_k] \cdots \Pr[\xi_n=y_n] \prod_{i=1}^{k-1}p_i.
    \end{align*}
    Summing over all values $y_k,\dots,y_n$, we see that $\Pr[E] = \prod_{i=1}^{k-1} p_i$ which proves the result.
\end{proof}
Now we show that certain events, which would help recognize the maximal value $\xi_i^{\star}$, are unlikely to happen.
\begin{lemma}\label{lem:informative_events_unlikely}
The following holds:
\begin{enumerate}[itemsep=0pt,label=\normalfont(\roman*)]
    \item $\Pr\brk{\xi_k < \xi_k^{\star} \wedge D_{k+1} < 0} \leq \alpha$.\label{informative1}
    
    \item $\Pr[0 \leq D_{k+1} \leq n] \leq 2\alpha$.\label{informative2}
    
    \item $\Pr[D_{k+1} \geq u_{k+1}] \leq \alpha.$\label{informative3}
\end{enumerate}
\end{lemma}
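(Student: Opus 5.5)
Each of the three bounds reduces to the marginal law of $\xi_k$: every event in question forces $\xi_k$ into a small window near the top of $J_k$. Since $\xi_k$ is uniform on $J_k$ with per-point mass $p_k = (1-\kappa(n))/(\alpha^{k-1}M)$, the mass of any such window is at most (number of points)$\cdot p_k$. The point-mass at $\xi_k^{\star}$ or $\xi_{k+1}^{\star}$ is excluded or handled by the side conditions. Throughout, $|J_k| = \alpha^{k-1}M$, so a window of width $\alpha^k M$ carries mass about $\alpha(1-\kappa(n))\le\alpha$. We use $\xi_{k+1}\ge (1-\alpha^k)M$ always, and $\xi_{k+1}\le M-1$ unless $\xi_{k+1}=\xi_{k+1}^\star = M+n-k-1$.

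\textbf{Part (i).} On $\{\xi_k<\xi_k^{\star}\}$ we have $\xi_k\in J_k$, so $\xi_k\le M-1$. If $D_{k+1}<0$ then $\xi_{k+1}<\xi_k\le M-1$, so $\xi_{k+1}\in J_{k+1}$ and hence $\xi_{k+1}\ge (1-\alpha^k)M$. Therefore $\xi_k\in[(1-\alpha^k)M, M-1]$, a set of $\alpha^k M$ integers. Its probability is at most $\alpha^kM\,p_k=\alpha(1-\kappa(n))\le\alpha$.

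\textbf{Part (ii).} We split on whether $\xi_{k+1}=\xi_{k+1}^\star$.
\begin{itemize}
\item If $\xi_{k+1}\in J_{k+1}$, then $0\le D_{k+1}$ gives $\xi_k\le\xi_{k+1}\le M-1$, and also $\xi_k\ge\xi_{k+1}-n\ge(1-\alpha^k)M-n$. This is a window of at most $\alpha^kM+n$ integers in $J_k$, so its mass is at most $(\alpha^kM+n)p_k\le \alpha+n/(\alpha^{k-1}M)$. The choice $n/(\alpha^nM)\le\alpha$ bounds the second term by $\alpha$.
\item If $\xi_{k+1}=\xi_{k+1}^\star=M+n-k-1$, then $\xi_k=\xi_k^\star=M+n-k$ would give $D_{k+1}=-1<0$, which is excluded. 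So $\xi_k\in J_k$, whence $\xi_k\le M-1$ and $D_{k+1}\ge n-k$. Requiring $D_{k+1}\le n$ forces $\xi_k\ge M-k-1$, a window of $k+1\le n$ points. Its mass is at most $n p_k\le\alpha$ by the same choice of $M$.
\end{itemize}
This case needs a little care, and it is the main (mild) obstacle: I would first try to absorb it into the same $n/(\alpha^{k-1}M)$ term. If the constants overflow, I would intersect with the second case more cleverly, noting that the two events are disjoint in $\xi_{k+1}$ and multiplying by $\kappa(n)\le1$. Summing the pieces gives at most $2\alpha$, after possibly tightening to $\alpha+\alpha$ by combining the $n$-width overlaps.

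\textbf{Part (iii).} $\xi_{k+1}\le M+n-k-1$ always holds. Then $D_{k+1}\ge u_{k+1}=M\alpha^{k-1}(1-\alpha)$ forces $\xi_k\le M+n-u_{k+1}$, which lies in $J_k$ and is far below $\xi_k^\star$. Since $\xi_k\ge(1-\alpha^{k-1})M$, the window has at most $\alpha^kM+n$ points, giving mass at most $\alpha(1-\kappa)+np_k$. I would mirror the prophet-case Lemma~\ref{lem:decreasing_unlikely} here. The $n$ slack is the only new feature, and if the bound must be exactly $\alpha$ I would use $(1-\kappa(n))\alpha + n/(\alpha^{k-1}M)\le\alpha$ for $n$ large (as $\kappa(n)\alpha\ge n^{-1+\mu}\alpha$ dominates $\alpha^{n-k+1}\alpha$ when $M$ is chosen large enough); otherwise accept the constant 2, which only changes $4n\alpha$ by a constant in Lemma~\ref{lem:max_prob_bound}.
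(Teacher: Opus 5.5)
Your strategy is the paper's: each event confines $\xi_k$ to a short window of $J_k$, and you multiply the number of points by $p_k$. Part (i) coincides with the paper's proof.

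The step that does not close as written is (ii). Your per-case bounds are $(\alpha^kM+n)p_k\le\alpha+n/(\alpha^{k-1}M)\le 2\alpha$ when $\xi_{k+1}\in J_{k+1}$, and $np_k\le\alpha$ (or $\kappa(n)\alpha$ after weighting) when $\xi_{k+1}=\xi_{k+1}^{\star}$. These add up to $3\alpha$, or $(2+\kappa(n))\alpha$, and the repairs you list are only sketched. The split is unnecessary. Your own case analysis shows that in \emph{both} cases $\xi_k\ne\xi_k^{\star}$ and $\xi_k\in W:=\{(1-\alpha^k)M-n,\dots,M-1\}$. The second-case window $\{M-k-1,\dots,M-1\}$ lies inside $W$ because $k+1\le n$. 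Hence $\{0\le D_{k+1}\le n\}\subseteq\{\xi_k\in W\}$, and $\Pr[0\le D_{k+1}\le n]\le(\alpha^kM+n)p_k\le\alpha+n/(\alpha^{k-1}M)\le 2\alpha$. This is exactly the paper's one-line argument; your case analysis supplies the justification of its unstated claim $\xi_k\le M-1$. If you prefer to keep the split, use the independence of $\xi_k$ and $\xi_{k+1}$ to weight the two cases by $1-\kappa(n)$ and $\kappa(n)$. The sum is then at most $(1-\kappa(n))\Pr[\xi_k\in W]+\kappa(n)\Pr[\xi_k\in W]=\Pr[\xi_k\in W]$.

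In (iii) you are more careful than the paper. Its proof asserts $\xi_k\le M-1-u_{k+1}$, which tacitly assumes $\xi_{k+1}\le M-1$. That misses $\xi_{k+1}=\xi_{k+1}^{\star}=M+n-k-1$, which is exactly the case your extra $n$ points cover. Your repair holds for every $n$, so drop the hedges (``$n$ large'', ``$M$ large enough''). Drop especially the fallback of accepting a constant $2$, which would prove a weaker statement than the lemma. With the exact endpoint $\xi_k\le M+n-k-1-u_{k+1}$, the window has $\alpha^kM+n-k$ points, so the probability is at most $(1-\kappa(n))\alpha+np_k$. The constraint $n/(\alpha^nM)\le\alpha$ gives $np_k\le(1-\kappa(n))\alpha^{n-k+2}\le\kappa(n)\alpha$. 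The last step holds because $k\le n-1$ implies $\alpha^{n-k+1}\le\alpha^2=n^{-4}\le n^{-1+\mu}=\kappa(n)$. Thus $\Pr[D_{k+1}\ge u_{k+1}]\le(1-\kappa(n))\alpha+\kappa(n)\alpha=\alpha$ exactly. The same estimate shows that the $n$-term in (ii) is in fact negligible.
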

\begin{proof}
    Observe that $D_{k+1} < 0$ implies that $\xi_k \geq \xi_{k+1} \geq (1-\alpha^k)M$. Now since also $\xi_k < \xi_k^{\star}$, we have that $\xi_k \in \{(1-\alpha^k)M,\dots,M-1\}$, so
    \[
    \Pr\brk{\xi_k < \xi_k^{\star} \wedge D_{k+1} < 0} \leq \alpha^kM p_i = \alpha(1-\kappa(n)) \leq \alpha.
    \]
    The event $0 \leq D_{k+1} \leq n$ implies that $(1-\alpha^k)M - n \leq \xi_k \leq M-1$, so
    \[
    \Pr\brk{0 \leq D_{k+1} \leq n} \leq ((1-\alpha^k)M + n)p_k \leq \alpha + \frac{n}{M\alpha^{k-1}} \leq 2\alpha.
    \]
    Finally, the event $D_{k+1} \geq u_{k+1}$ implies that
    \[
    \xi_{k} \leq M-1 - M\alpha^{k-1}(1-\alpha) \leq M(1-\alpha^{k-1} + \alpha^k).
    \]
    Thus,
    $\Pr[D_{k+1}\geq u_{k+1}] \leq \Pr\brk{\xi_k \leq M(1-\alpha^{k-1} + \alpha^k)} \leq \alpha^kM p_k \leq \alpha$.
\end{proof}

\begin{proof}[Proof of \Cref{lem:max_prob_bound}]
    We prove by induction that for every $k\in \{1,\dots,n\}$ and every $d_2,\dots,d_k$ with $n \leq d_i \leq u_i$, it holds
    \[
    \textstyle\Pr\brk{\xi_{\tau^{\star} \lor k} = \max_{i\in [n]} \xi_i \wedge W \midd \calE(d_2,\dots,d_k)} \leq \kappa(n) + 4(n-k)\alpha.
    \]
    First, note that for any $k$ we have
    $\{\xi_k = \max_i \xi_i\} \cap W \subseteq \{\xi_k = \xi_k^{\star}\}.$
    Now suppose $k=n$. Then
    \begin{align*}
    \textstyle\Pr\brk{\xi_{\tau^{\star} \lor n} = \max_{i\in [n]} \xi_i \wedge W \midd \calE(d_2,\dots,d_k)} &\leq \Pr\brk{\xi_{n} = \xi_n^{\star} \midd \calE(d_2,\dots,d_k)} \\ 
    &= \Pr[\xi_n = \xi_n^{\star}] = \kappa(n),
    \end{align*}
    where we used independence (\Cref{lem:secretary_independence}).
    Now let $k < n$ and suppose the statement holds true for $k+1,\dots,n$. Let $\nu =\Pr(\tau^{\star} = k\mid E(d_2,\dots,d_k))$. 
    We have
    \begin{align*}
       &\textstyle\Pr\brk{\xi_{\tau^{\star} \lor k} = \max_{i\in [n]} \xi_i \wedge W \midd \calE(d_2,\dots,d_k)} \\
       &= \textstyle\nu\Pr\brk{\xi_k = \max_{i\in [n]} \xi \wedge W \midd \calE(d_2,\dots,d_k)} + (1-\nu) \Pr\brk{\xi_{\tau^{\star} \lor k+1} = \max_{i\in [n]} \xi_i \wedge W \midd \calE(d_2,\dots,d_k)}.
    \end{align*}
    Now again, $\{\xi_k = \max_{i \in[n]} \xi_i\} \cap W \subseteq \{\xi_k = \xi^{\star}_k\}$, and by also using independence, we get
    \[
   \textstyle\Pr\brk{\xi_k = \max_{i\in [n]} \xi_i \wedge W \midd \calE(d_2,\dots,d_k)} \leq \Pr\brk{\xi_k = \xi_k^{\star} \midd \calE(d_2,\dots,d_k)} =  \kappa(n).
    \]
On the other hand,
\begin{align*}
&\textstyle\Pr\brk{\xi_{\tau^{\star} \lor (k+1)} =\max_{i\in [n]} \xi_i \wedge \midd D_i = d_i \text{ for all } i\in \{2,\dots,k\}} \\
    &= \sum_{n \leq d_{k+1} \leq u_i} \textstyle \Pr\brk{\xi_{\tau^{\star} \lor (k+1)} = \max_{i\in [n]} \xi_i \wedge W \midd \calE(d_2,\dots,d_{k+1})} \cdot \Pr\brk{D_{k+1}= d_{k+1} \midd \calE(d_2,\dots,d_k)} \\ 
    &\quad+ \textstyle \Pr\brk{\xi_{\tau^{\star} \lor (k+1)} = \max_{i\in [n]} \xi_i \wedge W \wedge D_{k+1} < 0 \midd \calE(d_2,\dots,d_k)} \\ 
    &\quad+ \textstyle \Pr\brk{\xi_{\tau^{\star} \lor (k+1)} = \max_{i\in [n]} \xi_i \wedge W \wedge 0 \leq D_{k+1} \leq n \midd \calE(d_2,\dots,d_k)} \\ 
    &\quad+ \textstyle \Pr\brk{\xi_{\tau^{\star} \lor (k+1)} = \max_{i\in [n]} \xi_i \wedge W \wedge u_{k+1} \leq D_{k+1} \midd \calE(d_2,\dots,d_k)}.
\end{align*}
By the induction hypothesis, the first term is at most $\kappa(n) + 4(n-k-1)\alpha$. Regarding the second term, note that $\xi_{\tau \lor (k+1)} =\max_{i\in [n]} \xi_i$ implies that $\xi_k < \xi_k^{\star}$. 
Hence, by independence and \Cref{lem:informative_events_unlikely}, it is bounded above by $\alpha$. 
By the same lemma, we can bound the third and fourth terms by $2\alpha$, and $\alpha$, respectively, so we get
\begin{align*}
&\textstyle \Pr\brk{\xi_{\tau^{\star} \lor (k+1)} = \max_{i\in [n]} \xi_i \wedge W \midd \calE(d_2,\dots,d_k)} \\
&\leq \kappa(n) + 4(n-k-1)\alpha + 4\alpha = \kappa(n) + 4(n-k)\alpha,
\end{align*}
which concludes the proof of the lemma.
\end{proof}

\section{Beyond Structured Correlations}\label{sec:arbitrary_correlations}

In this section, we turn our attention to settings that allow for arbitrary correlations between the random variables, but we parameterize the {\it degree} of dependency. 
We present tight asymptotic guarantees on the competitive ratios of the prophet and the secretary problems, depending on the chromatic number $\chi$ of the independence graph. By Brooks' Theorem \citep{brooks-theorem}, this further allows us to estimate the competitive ratio in terms of $\Delta$, the maximum degree of the independence graph. Specifically, we present an algorithmic scheme that uses an algorithm $\calB$ for pairwise-independent random variables as a subroutine. Using the $({1}/{3})$-competitive algorithm of~\citet[Theorem 2]{caragiannis-pairwise} for pairwise-independent random variables, we obtain an $O({{1}/{\chi}})$-competitive algorithm for arbitrary correlated instances. For the secretary objective, we present an algorithm for pairwise-independent random variables (\Cref{prp:secretary-pairwise-indep}) which selects the maximum realization with probability ${1}/{4}$. We then use this algorithm as a subroutine in our algorithmic scheme to obtain an algorithm for arbitrarily correlated instances that selects the maximum realization with probability $O({{1}/{\chi}})$.

Before we present the algorithm, we define the notion of independence graph that we use.
For an instance $I = \set{Y_1, \dots, Y_n}$ of the prophet inequality, the {\it independence graph} of $I$, denoted by $G_I = (V, E)$, is a graph where each vertex $v_i \in V$ corresponds to a random variable $Y_i$ and there exists an edge between two vertices $v_i, v_j$ if and only if $Y_i, Y_j$ are not pairwise-independent.

\begin{algorithm}[t]
\caption{Graph-based scheme for correlated instances}
\label{alg:graph-scheme}
Let $G_I=(V,E)$ be the independence graph of $I$, and
$\bm{S}$ a partition of $V$ into independent sets\;\\
Let $S^\star$ be an independent set drawn uniformly at random from $\bm{S}$\;\\
Instantiate subroutine $\calB$ for $|S^\star|$ elements\;

\For{$i\gets 1$ \KwTo $n$}{
  \If{$Y_i\in S^\star$}{
    Feed $Y_i$ to $\calB$\;\\
    \If{$\calB$ accepts $Y_i$}{
      Accept $Y_i$\;
    }
  }
}
\end{algorithm}

\begin{lemma}\label{lem:graph-cr-pbm}
Let $\calB_1$ be an $\alpha$-competitive algorithm in the prophet objective for pairwise-independent random variables and $\calB_2$ be a $\beta$-competitive algorithm in the secretary objective for pairwise-independent random variables. For every instance $I$ of the prophet inequality problem with correlated random variables, let $\chi(G_I)$ denote the chromatic number of $G_I$. Then, \Cref{alg:graph-scheme} with subroutine $\calB_1$ selects a value $Y_{\tau, 1}$ such that
${\E[{Y_{\tau, 1}}]}/{\E[{\max_{i \in [n]} Y_i}]} \geq {\alpha}/{\chi(G_I)},$
and with subroutine $\calB_2$ selects a value $Y_{\tau, 2}$ such that
$\Pr\brk{Y_{\tau, 2} = \max_{i \in [n]} Y_i} \geq {\beta}/{\chi(G_I)}.$
\end{lemma}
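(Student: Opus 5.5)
The plan is to take $\bm{S}$ to be an optimal proper coloring of $G_I$, i.e.\ a partition of $V$ into exactly $\chi(G_I)$ independent sets $S_1,\dots,S_{\chi}$. Each color class is an independent set in the independence graph, so by the definition of $G_I$ the variables $\{Y_i : v_i\in S_j\}$ are pairwise independent. I will then condition on the random choice $S^\star=S_j$. Since $S^\star$ is drawn independently of the $Y_i$'s, the conditional instance seen by $\calB$ is exactly the pairwise-independent sub-instance $I_j=\{Y_i: v_i\in S_j\}$, presented in its original order. The algorithm accepts exactly what $\calB$ accepts on $I_j$, so the guarantee of $\calB_1$ (resp.\ $\calB_2$) applies verbatim to $I_j$. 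One point needs care: $\calB$ may stop using only the values fed to it, and ignoring the other variables is harmless.

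For the prophet objective, I will get $\E[Y_{\tau,1}\mid S^\star=S_j]\ge \alpha\,\E[\max_{i\in S_j}Y_i]$. Averaging over the uniform choice of $j$ gives
\[
\E[Y_{\tau,1}]\ge \frac{\alpha}{\chi}\sum_{j=1}^{\chi}\E\Big[\max_{i\in S_j}Y_i\Big]\ge \frac{\alpha}{\chi}\,\E\Big[\max_{i\in[n]}Y_i\Big],
\]
where the last step uses that the $S_j$ cover $V$ and the $Y_i$ are nonnegative. Indeed, pointwise $\max_{i\in[n]}Y_i=\max_j\max_{i\in S_j}Y_i\le\sum_j\max_{i\in S_j}Y_i$.

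For the secretary objective, I let $i^*$ be the (a.s.\ unique, after tie-breaking) index of the global maximum and $j^*$ the class containing it. The events $\{\max_{i\in[n]}Y_i=\max_{i\in S_j}Y_i\}$ over $j$ cover the whole probability space. Conditioned on $S^\star=S_j$, the event that $\calB_2$ picks the maximum of $I_j$ intersected with the event that $I_j$ contains the global maximum yields the global maximum. Here lies the main obstacle: I need $\Pr[\calB_2 \text{ picks } \max I_j \wedge j=j^*]$, but a $\beta$-guarantee only controls $\Pr[\calB_2\text{ picks }\max I_j]$ unconditionally. I will handle it by summing over $j$ and rewriting as
\[
\Pr[Y_{\tau,2}=\max_i Y_i]=\frac1\chi\sum_j \Pr[\calB_2 \text{ selects } \max I_j \text{ and } \max I_j=\max_i Y_i].
\]
If this cannot be lower-bounded by $\beta/\chi$ directly, I would invoke the specific structure of the subroutine of \Cref{prp:secretary-pairwise-indep}, or interpret the $\beta$-competitiveness relative to the global maximum of the fed instance. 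I expect this coupling between "being best in the class" and "being globally best" to be the only delicate step. The fallback is to note that, by symmetry of the uniform choice, the quantity equals $\frac1\chi\E[\Pr_{\calB_2}(\text{select max of } I_{j^*}\mid \cdot)]$, and to argue that the guarantee of $\calB_2$ holds conditionally on the values outside $S_{j^*}$. That conditional claim is the point I would need to verify.
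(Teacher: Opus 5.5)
Your prophet half is correct and matches the paper's argument. Because $S^\star$ is drawn independently of the values, conditioning on $S^\star=S_j$ hands $\calB_1$ the fixed pairwise-independent instance $I_j$. Averaging $\alpha\,\E[\max_{i\in S_j}Y_i]$ over $j$ and using $\max_{i\in[n]}Y_i\le\sum_j\max_{i\in S_j}Y_i$ then gives $\alpha/\chi(G_I)$. The paper does the same computation by conditioning on the event $S^\star=S_{\textsc{max}}$, which is independent of the values.

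The secretary half has a genuine gap, exactly where you suspect it, and none of your fallbacks closes it. You need $\sum_j\Pr[\calB_2\text{ selects }\max I_j\text{ and }\max I_j=\max_{i\in[n]}Y_i]\ge\beta$. However, $\beta$-competitiveness only gives $\Pr[\calB_2\text{ selects }\max I_j]\ge\beta$ for each fixed $j$. The conditional claim in your last fallback is false. The event $\{j=j^*\}$ depends on the values inside $S_j$, and the values outside $S_j$ may be arbitrarily correlated with those inside. So $I_{j^*}$ is not a fixed pairwise-independent instance, and no guarantee transfers to it.

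In fact the secretary half of the statement fails for a general $\calB_2$. Let $Y_1=U$ and $Y_2=1-U$ with $U$ uniform on $[0,1]$, so $\chi(G_I)=2$ and the only admissible partition is $\{\{v_1\},\{v_2\}\}$. Let $\calB_2$ accept a lone variable iff its value is at most its median, and run the algorithm of Proposition~\ref{prp:secretary-pairwise-indep} on instances with two or more variables. This $\calB_2$ is $1/4$-competitive on pairwise-independent instances. Yet whichever singleton it is fed, it accepts only when that value is at most $1/2$, i.e., (almost surely) when it is not the maximum. The scheme therefore succeeds with probability $0<1/8=\beta/\chi(G_I)$.

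The paper's own proof has the same hole: its last step uses $\Pr[Y_{\tau,2}=\max_{i\in S^\star}Y_i\mid S^\star=S_{\textsc{max}}]\ge\beta$, which is precisely this unjustified conditional guarantee. The bound can therefore only be obtained in one of two ways. One is a stronger hypothesis on $\calB_2$, such as success probability at least $\beta$ conditional on its class containing the global maximum; this is what your reinterpretation option amounts to. The other is a direct analysis of the specific threshold rule of Proposition~\ref{prp:secretary-pairwise-indep}. Neither your proposal nor the paper supplies either.
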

\begin{proof}
Since the chromatic number of $G_I$ is $\chi(G_I)$, we can partition $G_I$ into $\chi(G_I)$ independent sets. Notice that the random variables corresponding to vertices in the same independence set are pairwise independent since no two such vertices share an edge.

Let $S_{\textsc{max}}$ denote the (random) set from $\bm{S}$ that contains the maximum realization. Since $S^*$ is chosen uniformly at random, we have that
\begin{equation}\label{eq:pairwise-1}
\Pr[S^* = S_{\textsc{max}}] \geq \frac{1}{\chi(G_I)}.
\end{equation}
By assumption, our subroutine $\calB_1$ selects a value $Y_{\tau, 1}$ such that
\begin{equation}\label{eq:pairwise-2}
\E\brk{Y_{\tau, 1}} \geq \alpha\cdot \E\brk{\max_{i \in S^*} Y_i},
\end{equation}
and our subroutine $\calB_2$ returns a value $Y_{\tau, 2}$ such that
\begin{equation}\label{eq:pairwise-3}
\Pr\brk{Y_{\tau, 2} = \max_{i \in S^*} Y_i} \geq \beta.
\end{equation}
Combining \eqref{eq:pairwise-1} with \eqref{eq:pairwise-2} and \eqref{eq:pairwise-3}, we get
\[
\E\brk{Y_{\tau, 1}} \geq \alpha \cdot \E\brk{\max_{i \in S^*} Y_i} \geq \alpha \cdot \Pr[S^* = S_{\textsc{max}}] \cdot \E\brk{\max_{i \in S^*} Y_i \midd S^* = S_{\textsc{max}}} \geq \frac{\alpha}{\chi(G_I)}\cdot \E\brk{\max_{i \in [n]} Y_i},
\]
and
\begin{align*}
\Pr\brk{Y_{\tau, 2} = \max_{i\in [n]} Y_i} &\geq \Pr[S^* = S_{\textsc{max}}] \cdot \Pr\brk{Y_{\tau, 2} = \max_{i \in [n]} Y_i \midd S^* = S_{\textsc{max}}} \\
&\geq \Pr[S^* = S_{\textsc{max}}] \cdot \Pr\brk{Y_{\tau, 2} = \max_{i \in S^*} Y_i \midd S^* = S_{\textsc{max}}}\geq \frac{\beta}{\chi(G_I)}.\qedhere
\end{align*}
\end{proof}

For the prophet objective, we can use the $\f{1}{3}$-competitive algorithm by~\citet{caragiannis-pairwise} for pairwise-independent random variables. However, we must design an algorithm for pairwise-independent instances from scratch for the secretary setting.
\begin{proposition}\label{prp:secretary-pairwise-indep}
There exists a $0.25$-competitive algorithm for pairwise-independent random variables in the secretary objective.
\end{proposition}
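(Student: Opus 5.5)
We would use a single-threshold rule whose threshold is chosen so that the expected number of values clearing it equals $1/2$. The guarantee then comes from a second-order Bonferroni bound: pairwise independence is exactly enough to evaluate the pairwise intersection probabilities in that bound.

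\emph{The algorithm.} Pick a threshold $t$ and a tie-breaking probability $q\in[0,1]$. Let $A_i$ be the event that either $Y_i>t$, or $Y_i=t$ and an independent coin $C_i$ with success probability $q$ comes up heads. The algorithm accepts the first $Y_i$ for which $A_i$ occurs. Write $p_i=\Pr[A_i]$ and $s=\sum_i p_i$. As $t$ decreases, $s$ is monotone and right-continuous in $t$, and the coin interpolates across any atom. Hence we can choose $(t,q)$ with $s=1/2$. (If even $t=0$, $q=1$ gives $s<1/2$, we instead accept $Y_i=0$ with probability adjusted so that $s=1/2$; the argument below is unchanged.)

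\emph{The key observation.} Suppose exactly one event $A_i$ occurs. Then the algorithm accepts $Y_i\ge t$, and every other $Y_j$ satisfies $Y_j\le t$. Therefore $Y_i=\max_j Y_j$, with ties counted as success. So
\[
\Pr\Big[Y_\tau=\max_{j\in[n]}Y_j\Big]\ \ge\ \Pr[\text{exactly one } A_i \text{ occurs}].
\]
The events $A_i$ are pairwise independent, because they are functions of the pairwise independent $Y_i$ together with mutually independent coins. Hence
\[
\Pr[\text{exactly one}] \ \ge\ \sum_i p_i-\sum_i\sum_{j\ne i}\Pr[A_i\cap A_j] \ =\ s-\sum_{i\ne j}p_ip_j \ \ge\ s-s^2 .
\]
The first inequality holds because $\Pr[A_i\setminus\bigcup_{j\ne i}A_j]\ge p_i-\sum_{j\ne i}\Pr[A_i\cap A_j]$ for each $i$. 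The last step uses $\sum_{i\ne j}p_ip_j\le(\sum_i p_i)^2$. With $s=1/2$ this gives $1/4$, as claimed.

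\emph{Main obstacle.} The only delicate point is the tie-breaking argument. We must check that adding the independent coins preserves the pairwise independence of the $A_i$. This holds because, for $i\neq j$, the pair $(Y_i,C_i)$ is independent of $(Y_j,C_j)$. We must also check that ties at value $t$ do not break the identification of the accepted value with the maximum. This is fine, since success is defined as $Y_\tau=\max_j Y_j$, and an unmarked $Y_j$ equal to $t$ can only tie with, never exceed, the accepted value.
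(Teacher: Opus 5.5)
Your proposal is correct and follows the paper's own proof. Both set a threshold with $\sum_i p_i = 1/2$, lower-bound the success probability by the event that exactly one value clears it, and apply the union bound with pairwise independence to get $\sum_i p_i\bigl(1-\sum_{j\neq i}p_j\bigr)\ge 1/4$. Your randomized tie-breaking is a small but genuine refinement: the paper silently assumes a threshold with $\sum_i p_i = 1/2$ exists, which can fail when the distributions have atoms. (Your parenthetical case about $t=0$ is vacuous for nonnegative variables, since there $s=n\ge 1$.)
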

\begin{proof}
For a fixed threshold $T$, let $p_i = \Pr[Y_i \geq T]$,  for every $i\in [n]$. 
Consider the algorithm $\calA$ that sets a threshold $T$ such that
$\sum_{i = 1}^n {p_i} = {1}/{2}.$
We prove that $\Pr[\calA \text{ selects } \max_{i\in [n]} Y_i] \geq \f{1}{4}.$
Notice that $\calA$ always selects the maximum realization if there is only one realization above $T$. Therefore
\begin{align*}
\Pr\brk{\calA \text{ selects } \max_{i\in [n]} Y_i} &\geq \sum_{i = 1}^n {\Pr[Y_i \geq T \wedge \text{for all } j \neq i, \: Y_j < T]} \\
&= \sum_{i = 1}^n {\Pr[Y_i \geq T] \cdot \Pr\brk{\text{for all } j \neq i, \: Y_j < T \midd Y_i \geq T}} \\
&\geq \sum_{i = 1}^n {p_i \cdot \left(1 - \sum_{j \neq i} \Pr\brk{Y_j \geq T \midd Y_i \geq T}\right)} \\
&= \sum_{i = 1}^n {p_i \cdot \left(1 - \sum_{j \neq i} p_j\right)}\geq \frac{1}{2} \sum_{i = 1}^n {p_i}= \frac{1}{4},
\end{align*}
where the second inequality follows from a simple union bound and the second equality from the fact that $Y_i, Y_j$ are pairwise independent.
\end{proof}
Since the chromatic number is at most $\Delta + 1$ for all graphs, where $\Delta$ is the maximum degree of a vertex in $G_I$, \Cref{lem:graph-cr-pbm} together with \Cref{prp:secretary-pairwise-indep} and the algorithm by~\citet{caragiannis-pairwise} directly imply the following result.
\begin{theorem}\label{thm:graph-degree-bound}
There exist algorithms $\calA_1$ for the prophet objective and $\calA_2$ for the secretary objective, such that for every instance $I$ with correlated random variables where $\max_{v \in V(G_I)} {\deg(v)} \leq \Delta$, they return values $Y_{\tau, 1}$ and $Y_{\tau, 2}$, respectively, such that
\[
\frac{\E\brk{Y_{\tau, 1}}}{\E\brk{\max_{i \in [n]} Y_i}} \geq \frac{1}{3(\Delta+1)},\text{ and }\Pr\brk{Y_{\tau, 2} = \max_{i\in [n]} Y_i} \geq \frac{1}{4 (\Delta+1)}.
\]
\end{theorem}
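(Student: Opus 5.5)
The plan is to derive \Cref{thm:graph-degree-bound} directly from \Cref{lem:graph-cr-pbm}, taking the two subroutines as the $1/3$-competitive algorithm of \citet[Theorem 2]{caragiannis-pairwise} (prophet objective) and the $1/4$ algorithm of \Cref{prp:secretary-pairwise-indep} (secretary objective). Define $\calA_1$ as \Cref{alg:graph-scheme} with subroutine $\calB_1$ equal to the Caragiannis et al.\ algorithm, and $\calA_2$ as the same scheme with subroutine $\calB_2$ equal to the threshold algorithm of \Cref{prp:secretary-pairwise-indep}. The partition $\bm{S}$ is a proper coloring of $G_I$. The scheme uses $G_I$, which is determined by the known joint distribution, so both algorithms are well defined online procedures.

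The first step is to bound the number of classes in $\bm S$. The greedy coloring argument shows that any graph with maximum degree at most $\Delta$ has a proper coloring with at most $\Delta+1$ colors: process the vertices in any order and give each one the smallest color not used by its at most $\Delta$ neighbors. We only need the number of classes to be at most $\Delta+1$, so Brooks' theorem is not required. In particular, $\chi(G_I)\le \Delta+1$, and the partition $\bm S$ can be taken to have $|\bm S|\le \Delta+1$ classes.

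The second step is to apply \Cref{lem:graph-cr-pbm}. It gives $\E[Y_{\tau,1}]\ge \frac{1}{3\chi(G_I)}\E[\max_i Y_i]\ge \frac{1}{3(\Delta+1)}\E[\max_i Y_i]$, and similarly $\Pr[Y_{\tau,2}=\max_i Y_i]\ge \frac{1}{4(\Delta+1)}$.

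The main obstacle is that the proof of \Cref{lem:graph-cr-pbm} really uses $1/|\bm S|$ for the probability that the uniformly chosen class contains the maximizer, not $1/\chi$. I would make this explicit by noting that the bound $\Pr[S^*=S_{\textsc{max}}]\ge 1/|\bm S|$ holds for any partition into independent sets. This requires $S^*$ to be drawn independently of the realizations; then conditioning on which class contains the maximizer gives exactly $1/|\bm S|$.

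A second point needs a little care. The lemma's inequality \eqref{eq:pairwise-2} should be read in the form $\E[Y_{\tau,1}\mid S^*]\ge \alpha\,\E[\max_{i\in S^*}Y_i\mid S^*]$, so that averaging over $S^*$ gives $\E[Y_{\tau,1}]\ge \frac{\alpha}{|\bm S|}\sum_{S\in\bm S}\E[\max_{i\in S}Y_i]\ge \frac{\alpha}{|\bm S|}\E[\max_i Y_i]$. The last inequality holds because $\max_i Y_i\le\sum_{S}\max_{i\in S}Y_i$ for nonnegative $Y_i$. The analogous averaging works for the secretary objective. Each class is pairwise independent, so the subroutine guarantees apply, and the theorem follows.
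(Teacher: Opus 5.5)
Your route is the paper's. You bound the number of colour classes by $\Delta+1$ (greedy colouring suffices; the paper likewise only uses $\chi\le\Delta+1$) and then invoke \Cref{lem:graph-cr-pbm} with the Caragiannis et al.\ subroutine and the threshold rule of \Cref{prp:secretary-pairwise-indep}. The prophet half is fine. Your rendering of it is correct and cleaner than the paper's: condition on $S^*$, which is independent of the realizations, to get $\E[Y_{\tau,1}\mid S^*=S]\ge\alpha\,\E[\max_{i\in S}Y_i]$, then use $\max_iY_i\le\sum_{S\in\bm S}\max_{i\in S}Y_i$.

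\textbf{The gap} is the sentence ``the analogous averaging works for the secretary objective.'' Let $A_S$ be the event that $\calB_2$, run on class $S$, selects $\max_{i\in S}Y_i$, and let $S_{\textsc{max}}$ be the class containing the global maximum. Averaging over $S^*$ gives
\[
\Pr\brk{Y_{\tau,2}=\max_{i\in[n]}Y_i}=\frac{1}{|\bm S|}\sum_{S\in\bm S}\Pr\brk{A_S\cap\{S_{\textsc{max}}=S\}},
\]
but the subroutine only guarantees $\Pr[A_S]\ge\beta$. There is no secretary analogue of $\max_iY_i\le\sum_S\max_{i\in S}Y_i$. The event $A_S$ is determined by class $S$ alone, whereas $\{S_{\textsc{max}}=S\}$ depends on the other classes, which may be arbitrarily correlated with $S$. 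So the two events can be negatively correlated.

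The paper's proof of \Cref{lem:graph-cr-pbm} has the same hole. It uses $\Pr[Y_{\tau,2}=\max_{i\in S^*}Y_i\mid S^*=S_{\textsc{max}}]\ge\beta$, but that conditions on the realizations, which the guarantee does not cover.

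The step genuinely fails. Take two classes of $m$ variables each. In each class the set of above-threshold variables is a random set of size at most two, with marginals $1/(2m)$ and pairwise-product joint probabilities; it has one element w.p.\ about $1/4$ and two w.p.\ about $1/8$. Within each class this is pairwise independent. Give values increasing in arrival order, so the threshold rule of \Cref{prp:secretary-pairwise-indep} picks the wrong element whenever two are above threshold. Add occasional below-threshold ``mid'' values to the second class. Then couple the classes so that, as far as the marginals allow, whenever one class has a lone above-threshold value, the other class holds a larger value at a place where its own rule fails: behind an earlier above-threshold element, or below its threshold. This yields $\chi(G_I)=2$ and a scheme success probability of roughly $0.087<\beta/2=1/8$.

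Such instances have large $\Delta$, so the theorem itself is not contradicted. But the reduction ``success $\ge\beta/|\bm S|$'' is false in general. The secretary bound needs an argument that actually exploits the degree bound, or more structure of the threshold rule, not just $|\bm S|\le\Delta+1$.
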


\section{Discussion}\label{sec:beyond-linear-discussion}

We take a novel approach to the classic $1/2$-competitive ratio guarantee for independent random variables, via a differential equation, that unifies many previously known proofs. We believe this new differential equation approach could be useful in several prophet inequality applications. To illustrate this, we apply this new understanding to investigate the properties of prophet inequality instances with correlations; specifically, instances where the observed random variables are a function of some independent random variables $X_1, \dots, X_m$. On one hand, if $Y_i = X_i + Z$ for independent $X_i$'s and $Z$, we make progress in understanding the exact constant factor achievable. On the other hand, we provide a strong negative result showing that, if the function of the $Y_i$'s is allowed to contain products of two random variables, no non-trivial guarantees are possible.

Our paper also paves the way for further study of correlated instances under assumptions about the degree of correlation, and we believe that a better understanding of how this degree affects the guarantees is an interesting direction for future work. Since our results preclude the $Y_i$'s from involving products of random variables, it is an interesting question to better understand linear correlations and the conditions under which the $1/2$ factor continues to hold.

\paragraph{Acknowledgments.} J. Correa was partially funded by the Center for Mathematical Modeling (ANID grant FB210005). R. Jlibene and R. Laraki were supported by the IRP Fair Games of the CNRS. V. Livanos was partially funded by the Air Force Office of Scientific Research under award number FA9550-24-1-0261. V. Verdugo was partially funded by ANID FONDECYT (grant 1241846).

\bibliographystyle{abbrvnat}
\bibliography{references}

\end{document}